\documentclass[12pt]{article}
\usepackage[utf8]{inputenc}
\usepackage[english]{babel}
\usepackage{a4wide,cite,hyperref,authblk}
\usepackage{graphicx}
\usepackage{color}
\usepackage{amsmath}
\usepackage[version=3]{mhchem}
\usepackage{amssymb}
\usepackage{amsthm}
\usepackage{enumerate}
\usepackage{txfonts}
\usepackage[fleqn,tbtags]{mathtools}
\usepackage{array}
\usepackage{framed}
\usepackage{fancybox}
\usepackage{comment}
\usepackage{ifthen}
\usepackage{fancyhdr}
\usepackage{scalerel,stackengine}
\usepackage{todonotes}
\stackMath
\newcommand\reallywidehat[1]{%
\savestack{\tmpbox}{\stretchto{%
  \scaleto{%
    \scalerel*[\widthof{\ensuremath{#1}}]{\kern-.6pt\bigwedge\kern-.6pt}%
    {\rule[-\textheight/2]{1ex}{\textheight}}
  }{\textheight}%
}{0.5ex}}%
\stackon[1pt]{#1}{\tmpbox}%
}

 {\endMakeFramed}

\newcounter{thm}
\numberwithin{equation}{section}
\numberwithin{thm}{section}

\newtheorem{lemma}{Lemma}[section]
\newtheorem{prop}{Proposition}[section]
\newtheorem{corollary}{Corollary}[section]

\newtheorem{theorem}[thm]{Theorem}

\theoremstyle{definition}

\newtheorem*{remark*}{Remark}
\newtheorem*{example*}{Example}
\newtheorem{proofpart}{Part}[thm]

\newcommand{\linePage} {\noindent\makebox[\linewidth]{\rule{\textwidth}{1pt}} \\}
\specialcomment{simple} {$ $ \\ \begingroup   \scriptsize Simple Calculation \vspace{-4mm}   \\ \linePage   \endgroup \begingroup   }{$ $ \vspace{-3mm} \\ \linePage  \endgroup}

\excludecomment{simple}

\begin{document}

\setlength{\headheight}{15pt}
\pagestyle{fancy}
\fancyhf{}
\rhead{Page \thepage}
\lhead{Thomas Moser, Robert Seiringer}
\newcounter{ct}

\newcommand{\marginnote}[1]{\marginpar{
\raggedright\footnotesize
\itshape#1\par}}

\newcommand{\lecture}[1] {\marginnote{\framebox{$\begin{matrix} \text{Mitschrift} \\ \text{#1} \end{matrix}$}}}

\newcommand{\bemerkung}[1] {\marginnote{#1}}

\newcommand{\ppd}[2] {\frac{\pd^{#2}}{\pd {#1}^{#2}}}
\newcommand{\ppdd}[1] {\frac{\pd}{\pd #1}}
\newcommand{\ddb} [2] {\frac{\mathrm{d} #1}{\mathrm{d} #2}}
\newcommand{\ddbb} [1] {\ddb{} {#1}}
\newcommand{\dd} {\mathrm{d}}
\newcommand{\dI} { \,\mathrm{d}}

\newcommand{\Angstrom}{\overset{\circ}{\text{A}}}
\newcommand{\Celsius}{^\circ \text{C}}
\newcommand{\Hil} {\mathcal H}
\newcommand{\Ident} {\mathds 1}
\newcommand{\tr} {\operatorname{tr}}
\newcommand{\bra}[1] { \langle #1 | }
\newcommand{\ket}[1] { | #1 \rangle }
\newcommand{\braket}[2]{\langle #1 | #2 \rangle}
\newcommand{\vect}[1]{\ensuremath{\boldsymbol{#1}}}
\newcommand{\ev}[1]{ \lk \langle #1 \rk \rangle}
\newcommand{\Var}[1]{ \operatorname{Var}\lk( #1 \rk)}
\newcommand{\norm}[1]{\left\| #1 \right\|}

\newcommand{\esssup}{\operatorname{ess}\operatorname{sup}}
\newcommand{\diver}{\operatorname{div}}
\newcommand{\e} {\operatorname{e}}
\newcommand{\supp} {\operatorname{supp}}
\newcommand{\STAR} {\operatorname{STAR}}

\newcommand{\R}{\mathbb{R}}
\newcommand{\C}{\mathbb{C}}
\newcommand{\LR}{\mathcal{L}}
\newcommand{\N} {\mathbb N}
\newcommand{\pd} {\partial}
\renewcommand{\Re}{\operatorname{Re}}
\renewcommand{\Im}{\operatorname{Im}}

\newcommand{\CC} {\mathcal C}
\newcommand{\K} {\mathcal K}
\newcommand{\E} {\mathcal E}
\newcommand{\wt}[1]{\widetilde{#1}}
\newcommand{\ol}[1]{\overline{#1}}
\newcommand{\ob}[2]{\overbrace{#1}^{#2}}
\newcommand{\ub}[2]{\underbrace{#1}_{#2}}
\newcommand{\ve}{\varepsilon}
\newcommand{\vp}{\varphi}
\newcommand{\Lap}{\Delta}
\newcommand{\rk} {\right}
\newcommand{\lk} {\left}
\newcommand{\intm}{\fint} 
\newcommand{\hateq}{ {\hat =} }

\newcommand{\tlim}{\lim\limits}
\newcommand{\transp}[1]{#1^{\mathrm t}}
\newcommand{\sprungk}[1]{\bigl[[#1]\bigr]}

\newcommand{\Oo}{{\mathcal O}}
\newcommand{\const}{{\rm const.}}

\newcommand{\usetc}[1] {\underset{\mathclap{\substack{#1}}}}
\newcommand{\osetc}[1] {\overset{\mathclap{\substack{#1}}}}

\newcommand{\usetceq}[1] {\underset{\mathclap{\substack{\uparrow \\ \eqref{#1}}}}}
\newcommand{\usetceqb}[1] {\underset{\mathclap{\substack{\big \uparrow \\ \eqref{#1}}}}}
\newcommand{\usetceqB}[1] {\underset{\mathclap{\substack{\Big \uparrow \\ \eqref{#1}}}}}

\newcommand{\setbackEnum} {  $ $ \vspace*{-4mm}}
\newcommand{\inwork}{$ $ \\ \begin{center}\Ovalbox{\large\bf In Work} \end{center} $ $ \\}

\newcommand{\folgt} {\ \Rightarrow \ }
\newcommand {\numberthis} {\addtocounter{equation}{1}\tag{\theequation}}

\newcommand{\noeqspace}{\vspace{-10mm}}
\setcounter{tocdepth}{3}
\setcounter{secnumdepth}{3}

\newcommand{\calF}{\mathcal F}
\newcommand{\calT}{\mathcal T}
\newcommand{\calG}{\mathcal G}
\newcommand{\calL}{\mathcal L}

\newcommand{\rel}{{\rm rel}}
\newcommand{\per}{{\rm per}}

\newcommand{\Tdiag}{T_{\rm{diag}}}
\newcommand{\Toff}{T_{\rm{off}}}

\newcommand{\QF}{F_{\alpha,N}}
\newcommand{\QFp}[2]{F_{#1,#2}}
\newcommand{\QFE}{\tilde F_{\alpha,N}}
\newcommand{\QFEp}[2]{\tilde F_{#1,#2}}
\newcommand{\QFP}{F^{\per}_{\alpha,N}}
\newcommand{\QFPE}{\tilde F^{\per}_{\alpha,N}}

\newcommand{\G}{G_{\mu}}
\newcommand{\GP}{G_{\mu}^{\per}}
\newcommand{\GPp}[1]{G_{#1}^{\per}}

\newcommand{\TO}{T_{\rm{off}}^{\mu,N}}
\newcommand{\TD}{T_{\rm{dia}}^{\mu,N}}
\newcommand{\TOE}{\tilde T_{\rm{off}}^{\mu,N}}
\newcommand{\TDE}{\tilde T_{\rm{dia}}^{\mu,N}}
\newcommand{\TOP}{T_{\rm off}^{{\rm per},\mu,N}}
\newcommand{\TDP}{T_{\rm dia}^{{\rm per},\mu,N}}
\newcommand{\TOPE}{\tilde T_{\rm off}^{{\rm per},\mu,N}}
\newcommand{\TDPE}{\tilde T_{\rm dia}^{{\rm per},\mu,N}}
\newcommand{\TS}{T_{\alpha,\mu,N}}
\newcommand{\TSE}{\tilde T_{\alpha,\mu,N}}
\newcommand{\TSPE}{\tilde T_{\alpha,\mu,N}^{\rm per}}
\newcommand{\TSPEp}[2]{\tilde T_{\alpha,#1,#2}^{\rm per}}
\newcommand{\TSP}{T_{\alpha,\mu,N}^{\rm per}}

\renewcommand{\L}{L_{\mu,N}}
\newcommand{\Lp}[2]{L_{#1,#2}}
\newcommand{\LP}{L^{\rm per}_{\mu,N}}

\newcommand{\HG}{H_0^{N}}
\newcommand{\HGp}[1]{H_0^{#1}}
\newcommand{\HGF}{\hat H_0^{N}}

\newcommand{\EG}{E_{N-1}^\per}
\newcommand{\EGp}[1]{E_{#1}^\per}
\newcommand{\EGD}{E_{N}^D}
\newcommand{\EGDp}[1]{E_{#1}^D}
\newcommand{\EGDV}{E_{N}^{V,D}}

\newcommand{\lO}{L}
\newcommand{\lI}{\ell}
\newcommand{\lE}{\ell}
\newcommand{\lLT}{L}

\newcommand{\B}{B}
\newcommand{\BE}{B}
\newcommand{\BLT}{C_\lLT}

\newcommand{\dens}{\bar \rho}

\newcommand{\CB}{C_\lE}

\newcommand{\Lat}{\mathbb L}
\newcommand{\LatB}{\mathbb L^{q_1}}
\newcommand{\LatBp}[1]{\mathbb L^{#1}}
\newcommand{\LatFact}{ \lk ( \frac {2\pi}{\lE} \rk)}

\newcommand{\mcrit}{{m^\ast}}
\newcommand{\mcritMS}{{m^{\ast\ast}}}

\newcommand{\lowerBoundT}{0.58}
\newcommand{\upperBoundT}{1.73}

\newcommand{\cT}{c_T}
\newcommand{\unit}{1\!\!1}

\newcommand{\vpOut}{\vp^{p_0,\vec p_{A}}_{i,A}}
\newcommand{\hvpOut}{\hat\vp^{p_0,\vec p_{A}}_{i,A}}
\newcommand{\vpIn}{\vp^{\vec p_{A^c}}_{i,A}}
\newcommand{\hvpIn}{\hat\vp^{\vec p_{A^c}}_{i,A}}

\newcommand{\tnabla}{\tilde\nabla}

\title{Energy contribution of a point interacting impurity\\ in a Fermi gas }
\author{Thomas Moser, Robert Seiringer}
\affil{IST Austria, Am Campus 1, 3400 Klosterneuburg, Austria}
\date{July 2, 2018}
\maketitle

\begin{abstract}
We give a bound on the ground state energy of a system of $N$ non-interacting fermions in a three dimensional cubic box interacting with an impurity particle via point interactions. We show that the change in energy compared to the system in the absence of the impurity is bounded in terms of the gas density and the scattering length of the interaction, independently of $N$. Our bound holds as long as the ratio of the mass of the impurity to the one of the gas particles is larger than a critical value $\mcritMS\approx 0.36$, which is the same regime for which we recently showed stability of the system. 
\end{abstract}

\section{Introduction}

Quantum systems of particles interacting with forces of very short range allow for an idealized description in terms of point interactions. The latter are characterized by a single number, the scattering length. 
Originally point interactions were  introduced in the 1930s to model nuclear interactions \cite{Bethe1935, Bethe1935B, Fermi1936, Thomas1935, wigner1933streuung},  but  later they were also  successfully applied to many other areas of physics, like polarons (see \cite{massignan2014polarons} and references there) or cold atomic gases \cite{ZwergerBCS}. 

It was already known to Thomas \cite{Thomas1935} that the spectrum of a bosonic many-particle system depends strongly on the range of the interactions,  and that an idealized point-interacting system with more than two particles is inherently unstable, i.e., the energy is not bounded from below. 
This collapse can be counteracted by the Pauli principle for fermions with two species (e.g., spin states). In this paper we are interested in the impurity problem where there is only one particle for one of the species.

Given $N \ge 1$ fermions of one type with mass $1$ and one particle of another type with mass $m > 0$, a model of point interactions gives a meaning to the formal expression
\begin{equation}
- \frac 1 {2m} \Delta_{y}- \frac 1 2 \sum_{i=1}^N \Delta_{x_i}  + \gamma \sum_{i=1}^N \delta(x_i-y) \label{eq:pointInter} 
\end{equation}
for $\gamma \in \R$. We note that because of the antisymmetry constraint on the wavefunctions there are only interactions between particles of different species. 
The
expression \eqref{eq:pointInter} is  ill-defined in $d\geq 2$ dimensions since $H^1(\R^d)$,  the form domain of the Laplacian, contains discontinuous functions for which the meaning of the $\delta$-function as a potential is unclear.  In the following we restrict our attention to the  case $d=3$, but we note that also two-dimensional systems  exhibit interesting behavior \cite{DellAntonio1994, Dimock2004, Ulrich2017,Ulrich2018,UlrichThesis}. For $d\geq 4$ there are no point interactions as the Laplacian restricted to functions supported away from the hyperplanes of interactions is essentially self-adjoint.

A mathematically precise meaning to \eqref{eq:pointInter} in three dimensions was given in \cite{DellAntonio1994, Finco2010,Minlos2011} and we will work  with the model introduced there. 
Our analysis will start from this well-defined model,  but we note that the question whether the model can be obtained as  a limit of Schr\"odinger operators with genuine interaction potentials of shrinking support is still open. (See, however, \cite{albeverio}  for the case $N=1$, and \cite{basti} for models in one dimension.)

In this paper we study the energy contribution of the point-interacting impurity. We confine the $N+1$ particles to a box $(0,\lO)^3$
and investigate the ground state energy of the system. In particular, our goal is the show that at given mean particle density $\bar\rho = N/L^3$, the difference between the ground state energies of the interacting and the non-interacting system is bounded independently of the system size.

Previous work on this model was mostly concerned with stability and hence studied the model without confinement. 
For example, it is possible to analyze the $2+1$ model, i.e,. two fermions of one kind and one impurity of another kind, in  great detail \cite{DellAntonio1994, correggi2012stability, Correggi2015, Correggi2015B,  Becker2017, Minlos2011, Minlos2012, Minlos2014, Minlos2014B}. It turns out that the mass of the impurity plays an important role for stability. It was shown in \cite{correggi2012stability} that for the $2+1$ system there is a critical mass  $\mcrit \approx 0.0735$ such that the system is stable for $m \ge \mcrit$ and unstable otherwise. This critical mass does not depend  on the strength of the interaction, i.e., the scattering length. 

Building on these results it was shown in \cite{MoserSeiringer2017} that a similar statement holds for the $N+1$ system. In particular, it was proven that there is a critical mass $\mcritMS \approx 0.36$ such that the 
system is stable for all $m \ge \mcritMS$, independently of $N$. This bound is presumably not sharp and stability is still open for $m \in [\mcrit,\mcritMS)$. Recently also the stability of the $2+2$ system was proved in a suitable mass range 
\cite{MoserSeiringer2018}. The general case with $N+M$ particles still poses an open problem, however. 

In all cases where stability of the system was established, the ground state energy  in infinite volume is actually zero in case the scattering length is negative, and there are no bound states. For positive scattering length there are bound states, but one still expects that only a finite number of particles can bind to the impurity. In particular, the ground state energy of the $N+1$ system is bounded from below independently of $N$ \cite{MoserSeiringer2017}. Intuitively one would expect that if one confines the system to a box in order to have a non-zero mean particle density, the interaction with the impurity should again only affect a finite number of particles, and hence the {\em energy change} compared to the non-interacting system should be $O(1)$, independently of $N$. This is what we prove here. 
We note that it is sufficient to derive a lower bound on the ground state energy,  as point interactions are always attractive, i.e., they lower the energy.

Even for regular interaction potentials, it is highly non-trivial to show that an impurity causes only an $O(1)$ change to the energy of a non-interacting Fermi gas. For fixed, i.e., non-dynamical impurities, this was established in \cite{posDensitiy} as a consequence of a positive density version of the Lieb-Thirring inequality. The result in \cite{posDensitiy} applies to systems in infinite volume, as well as to systems in a box with  periodic boundary conditions.  In the appendix we provide an extension to Dirichlet boundary conditions, since this result will be an  essential ingredient in our proof.

Compared to \cite{posDensitiy} we face here two additional difficulties:  the impurity is dynamic and has a finite mass, and the interaction with the gas particles is through singular point interactions. Besides the methods of \cite{posDensitiy} and \cite{MoserSeiringer2017}, a key ingredient in our analysis is a proof of an IMS type formula for the quadratic form defining the model, which allows for a localization of the particles into regions close and far away from the impurity. It has the same form as the IMS formula for regular Schr\"odinger operators (see \cite[Thm.~3.2]{IMS}), but is much harder to prove.

\subsection{The point interaction model}
\label{sec:model}

We consider a system of $N$ fermions of mass $1$, interacting with another particle of mass $m>0$. 
Let
\begin{equation}
\HG = - \frac 1 {2m} \Lap_0 - \frac 1 2 \sum_{i=1}^N \Lap_i  \label{eq:H0}
\end{equation}
be the non-interacting part of the Hamiltonian, acting on $L^2(\R^3) \otimes L^2_{\rm as} (\R^{3N})$, where $L^2_{\rm as}$ denotes the totally antisymmetric functions in $\otimes^N L^2(\R^3)$.
The $N+1$ coordinates we denote by $x_0, x_1, \ldots, x_N \in \R^3$ and throughout this paper we will use the notation $\vec x = (x_1, \ldots, x_N)$.
If we want to exclude a set of coordinates labeled by $A \subseteq \{1, \ldots, N\}$ we use $\hat x_A = (x_i)_{i \not \in A}$ and for short $\hat x_i = \hat x_{\{i\}}$. If we want to restrict to certain coordinates we write $\vec x_A = (x_i)_{i \in A}$. 

For  $\mu > 0$, we define $\G$ as the resolvent of $\HG$ in momentum space, i.e., 
\begin{equation}\label{def:calG}
\G(k_0,\vec k) \coloneqq \left ( \frac 1{2m} k_0^2 + \frac 12 \vec k^2  + \mu \right)^{-1} \,.
\end{equation}
We denote by $\QF$ the quadratic form used in \cite{correggi2012stability,MoserSeiringer2017} describing point interactions between $N$ fermions and the impurity. Its domain is given by
\begin{equation}
D(\QF) = \left\{  \psi = \phi_\mu + \G \xi \mid \phi_\mu \in H^1(\R^3)\otimes H_{\rm as}^1(\R^{3 N}), \xi \in H^{1/2}(\R^3)\otimes H_{\rm as}^{1/2}(\R^{3 (N-1)})\right\}
\end{equation}
where $\G \xi$ is defined via its Fourier transform (denoted by a $\hat {\,\cdot\,}$) as
\begin{equation}
\widehat{\G \xi} (k_0,\vec k) =  \G(k_0,\vec k) \sum_{i=1}^N (-1)^{i+1}  \hat \xi(k_0 + k_i , \hat k_i)\,.
\label{eq:tquadraticForm} 
\end{equation}
The space $H^1_{\rm as}(\R^{3N})$ contains all totally antisymmetric functions in $H^1(\R^{3N})$. 
For a given $\psi \in D(\QF)$ and $\mu > 0$, the splitting $\psi = \phi_\mu + G_\mu \xi$ is unique. We point out that while $\phi_\mu$ depends on the choice of $\mu$,  $\xi$ is  independent  of $\mu$. We will call $\phi_\mu$ the regular part and $\xi$ the singular part of $\psi$. Note that $D(F_{\alpha,N})$ is independent of the choice of $\mu$, and so is the quadratic form $F_{\alpha,N}$ defined as
\begin{align}
\QF(\psi) &\coloneqq \left\langle \phi_\mu \left| \HG  + \mu \right| \phi_\mu \right\rangle  - \mu \norm{\psi}^2_{L^2(\R^{3(N+1)})} + \TS(\xi) \\ 
\TS(\xi) & \coloneqq N  \left( \frac {2m }{m+1} \alpha \norm{\xi}_{L^2(\R^{3N})}^2 +   \TD(\xi)+  \TO(\xi) \right)  \label{def:cFal}
\end{align}
where
\begin{align}
\TD(\xi) & \coloneqq \int_{\R^{3N}}  |\hat \xi(\vec k)|^2  \L(\vec k) \dI \vec k  \\
\TO(\xi) & \coloneqq (N-1) \int_{\R^{3(N+1)}} \hat \xi^\ast (k_0+ k_1 ,\hat k_1) \hat \xi (k_0 + k_2 , \hat k_2) G_\mu(k_0,\vec k)  \dI k_0 \dI \vec k \label{def:calT}\\
\L(\vec k) & \coloneqq 2 \pi^2 \left(\frac{2m}{m+1}\right)^{3/2}  \lk ( \frac {k_1^2}{2(m+1)} + \frac 12 \hat k_1^2 + \mu \rk)^{1/2} \,.
\label{def:L}
\end{align}
The quadratic form $F_{\alpha,N}$ describes $N$ fermions interacting with an impurity  particle via point interactions with scattering length $a = -2\pi^2/\alpha$, with $\alpha\in \R$. The non-interacting system is recovered in the limit $\alpha\to +\infty$. 

{\em Notation.} 
Throughout the paper we will use the following notation.
We define the relation $\lesssim$ by
\begin{equation}
x \lesssim y \iff \exists C>0 \colon x \le C y 
\end{equation}
where $C$ is independent of $x$ and $y$. In the obvious way we define $\gtrsim$. In case that $x \lesssim y$ and $ y \lesssim x$ we write $x \sim y$.

\section{Main result for confined wavefunctions}
\label{sec:Results}
Let us assume that $\supp \psi \subseteq \B^{N+1}$, where $\B = (0,\lO)^3$ for some $L>0$. The mean particle density will be denoted by  $\dens = N/\lO^3$. Let $\EGD$ be the ground state energy of $-\frac 12 \sum_{i=1}^N  \Lap_i$ for wavefunctions in $H^1_{\rm as}(\R^{3N})$ with Dirichlet boundary conditions on $\partial\B$. It equals the sum of the lowest $N$ eigenvalues of the Dirichlet Laplacian on $B$, and it is easy to see that
\begin{equation}
\EGD \sim N \dens^{2/3}\,. 
\end{equation}
A natural question is how the interactions affect this energy. From \cite[Thm. 2.1]{MoserSeiringer2017} we know that there is a mass-dependent constant $\Lambda(m)$ \cite[Eq.~(2.8)]{MoserSeiringer2017}, given in Eq.~\eqref{eq:BigLambda} below, such that if $\Lambda(m) < 1$ then $\QF$ is bounded from below independently of $N$ by
\begin{equation}
\frac {\QF(\psi)}{\norm{\psi}^2_2} \ge \frac {m+1}{2m} \begin{cases} 0 & \alpha \ge 0 \\ - \lk ( \dfrac {\alpha}{2\pi^2 (1- \Lambda(m))} \rk )^2 & \text{otherwise.} \end{cases} \label{eq:bound3} 
\end{equation}
(The additional factor $(m+1)/(2m)$ compared to \cite[Thm.~2.1]{MoserSeiringer2017} results from the separation of the center-of-mass motion used in \cite{MoserSeiringer2017}.) 
It was also shown in \cite{MoserSeiringer2017} that $\Lambda(m) < 1$ if $m > \mcritMS \approx 0.36$.

For particles confined to the box $B$ with mean density $\dens$ 
we can show that under the condition $\Lambda(m) < 1$ the correction  to $\EGD$ is small, i.e., it is $O(1)$ independently of $N$. Our main result is the following.

\begin{theorem}
\label{thm:main}
Let $\psi \in D(\QF)$, supported in $(0,\lO)^{3(N+1)}$, with $\norm{\psi} = 1$. Let $\dens = N \lO^{-3}$, and assume that $\Lambda(m)<1$. Then \begin{equation}
\QF(\psi) \ge  \EGD - \const \lk  ( \dfrac {\dens^{2/3}} {(1- \Lambda(m))^{9/2}}  + \dfrac {\alpha^2_-}{(1 - \Lambda(m))^{2}} \rk ) \label{eq:boundMain}    
\end{equation}
where the constant is independent of $\psi,m, N, \lO$ and $\alpha$, and $\alpha_-$  denotes the negative part of $\alpha$, i.e., $\alpha_-=\frac 12(|\alpha|-\alpha)$. 
\end{theorem}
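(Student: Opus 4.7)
The overall strategy is to perform an IMS-type localization that separates configurations where no fermion is close to the impurity from configurations where one fermion is close, then apply the Dirichlet version of the positive-density Lieb--Thirring inequality (from the appendix) to the "far" piece and the infinite-volume stability estimate \eqref{eq:bound3} to the "close" pieces.

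First I would pick a length scale $\ell$ (to be optimized, with $\ell\sim \dens^{-1/3}$ or somewhat smaller in appropriate powers of $1-\Lambda(m)$) and introduce smooth functions $J_0,J_1,\ldots,J_N\in C^\infty(\R^{3(N+1)})$ depending only on the relative coordinates $x_i-x_0$, with $\sum_{i=0}^N J_i^2=1$, such that $J_0$ is supported on the set where $|x_i-x_0|\ge \ell$ for all $i\ge 1$, and each $J_i$ ($i\ge 1$) is supported on the set where $|x_i-x_0|\le 2\ell$ and $|x_j-x_0|\ge \ell$ for $j\ne i,0$. Because of the antisymmetry constraint we can take the $J_i$'s for $i\ge 1$ to be antisymmetric-invariant in an appropriate sense, i.e.\ $J_i(\ldots,x_i,\ldots)=J_1(\ldots,x_1,\ldots)$ after a transposition, so that each $J_i\psi$ still lies in the correct antisymmetric class. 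The gradient of the $J_i$'s is bounded by $O(1/\ell)$ and supported on the annular region $\ell\le |x_i-x_0|\le 2\ell$.

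The main technical step is to prove the IMS identity
\begin{equation*}
\QF(\psi)=\sum_{i=0}^N \QF(J_i\psi)-\sum_{i=0}^N \left\langle \psi,|\nabla J_i|^2\psi\right\rangle,
\end{equation*}
where the localization error sums over all kinetic terms. This is the genuinely hard part: multiplication by $J_i$ does not preserve the decomposition $\psi=\phi_\mu+G_\mu\xi$ cleanly, since $J_iG_\mu\xi$ is not of the form $G_\mu\xi'$ for a nice $\xi'$. I would follow the strategy suggested in the introduction by rewriting $J_i\psi=\phi_\mu^{(i)}+G_\mu\xi^{(i)}$ with $\xi^{(i)}$ determined from the boundary behavior on the collision planes $x_j=x_0$ ($j\ge 1$). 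Since each $J_i$ depends only on relative coordinates and is constant in a neighborhood of each collision plane intersected with the support, one expects $\xi^{(i)}=J_i\xi$ on the relevant diagonals, and the difference $\phi_\mu^{(i)}-J_i\phi_\mu$ to be a regular correction controlled by $|\nabla J_i|$. Plugging this decomposition into the individual terms in \eqref{def:cFal}--\eqref{def:L} and using the pointwise identity $\sum_i J_i^2=1$ together with cancellations (analogous to the standard Schr\"odinger IMS argument) should yield the claimed formula; the quadratic-form nature (as opposed to operator) and the fact that all $J_i$ depend only on relative positions are what make the cross terms cancel.

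Next, on $\supp J_0$ all particles are at distance $\ge \ell$ from the impurity, so the singular part of $J_0\psi$ vanishes identically and $\QF(J_0\psi)=\langle J_0\psi|\HG|J_0\psi\rangle$. Integrating out the impurity coordinate and applying the Dirichlet Lieb--Thirring inequality at positive density proved in the appendix, I would bound this from below by $\EGD \|J_0\psi\|^2$ minus an error from the missing fraction of mass; more precisely, since $\sum_i \|J_i\psi\|^2=1$, one gets $\langle J_0\psi|H_0^N|J_0\psi\rangle \ge \EGD \|J_0\psi\|^2 - C\dens^{2/3}\sum_{i\ge 1}\|J_i\psi\|^2$ by comparing with the $N$-particle ground state energy.

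For each $i\ge 1$, on $\supp J_i$ exactly one fermion is inside a ball of radius $2\ell$ around the impurity while the other $N-1$ fermions are at distance $\ge \ell$ from the impurity and hence do not interact with it via the point interaction. Thus $\QF(J_i\psi)$ reduces effectively to the $1+1$ interacting problem (for the pair $x_0,x_i$) plus a free kinetic term for the remaining fermions, all coupled through antisymmetry. Using the fact that the point interaction involves only one fermion on the support of $J_i\psi$, I can invoke the stability bound \eqref{eq:bound3} for this effective subsystem to get $\QF(J_i\psi)\ge \langle J_i\psi|\HGF|J_i\psi\rangle - C(1-\Lambda(m))^{-2}\alpha_-^2 \|J_i\psi\|^2$, where $\HGF$ denotes the non-interacting Hamiltonian for the $N-1$ remaining fermions (plus the impurity kinetic term). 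Summing and using the Lieb--Thirring bound once more on the combined kinetic part gives a total lower bound $\EGD - C\ell^{-2} - C(1-\Lambda(m))^{-2}\alpha_-^2$, where the $\ell^{-2}$ comes from $|\nabla J_i|^2$ and from the $\dens^{2/3}$ error above. Choosing $\ell\sim \dens^{-1/3}(1-\Lambda(m))^{9/4}$ and combining yields \eqref{eq:boundMain}. The exponent $9/2$ of $(1-\Lambda(m))$ will emerge from tracking the $(1-\Lambda(m))$ factors through the IMS error terms and the stability bound during this optimization.
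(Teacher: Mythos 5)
Your proposed partition of unity cannot exist with the stated support constraints, and this gap is fatal to the argument. You require $J_0$ to be supported where all $|x_i-x_0|\ge\ell$, and each $J_i$ ($i\ge 1$) to be supported where $|x_i-x_0|\le 2\ell$ \emph{and} $|x_j-x_0|\ge\ell$ for all $j\ne i,0$. But then no $J_i$ is supported on the (nonempty, positive-measure) region where two or more fermions lie within distance $\ell$ of the impurity, so $\sum_{i=0}^N J_i^2=1$ is impossible. If you instead drop the extra constraint and only demand $\supp J_i\subseteq\{|x_i-x_0|\le 2\ell\}$ for $i\ge 1$, then on $\supp J_i$ several fermions may simultaneously be close to the impurity, and your reduction of $\QF(J_i\psi)$ to an effective $1+1$ problem plus a free part is no longer valid.

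This is exactly the difficulty the paper's proof is designed to handle, and it is where your route diverges from theirs in an essential way. The paper first localizes only $x_0$ into cells of size $\ell\sim\dens^{-1/3}$, then for each cell localizes every other particle to be either near or far, producing $2^N$ pieces indexed by subsets $A\subseteq\{1,\dots,N\}$ of ``near'' particles. For the pieces with $|A|$ arbitrary, one needs a lower bound on a genuinely many-body point-interaction system in a small box that scales like $|A|^{5/3}/\ell^2$; this is the content of Theorem~\ref{thm:AprioriBound}, which is far from a consequence of \eqref{eq:bound3} and occupies most of Section~\ref{sec:Apriori} (periodic reformulation, comparison of $\L$ with $\LP$, a negative choice of $\mu$, and control of $\tilde\Lambda(m,\kappa)-\Lambda(m)$). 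Without an input of this type, configurations with many fermions near the impurity cannot be controlled, and you cannot close the argument. Your intuition about splitting the localized energy into an ``interacting'' part and a ``free'' part handled by the Dirichlet Lieb--Thirring bound is sound and matches Lemma~\ref{lem:OutsideFreePart} and Proposition~\ref{lem:LowerBoundOutside} in spirit, and your reliance on the IMS identity for the quadratic form is also the right idea (Prop.~\ref{thm:IMS}). But the claim that the exponent $(1-\Lambda(m))^{-9/2}$ emerges from optimizing $\ell$ is not what happens: in the paper $\ell\sim\dens^{-1/3}$ is fixed, and the $(1-\Lambda(m))^{-9/2}$ arises from the threshold $N_0\sim(1-\Lambda(m))^{-9/2}$ in Theorem~\ref{thm:AprioriBound} that governs when the rough many-body bound kicks in versus when one falls back on \eqref{eq:bound3}.
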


Thm.~\ref{thm:main} shows that the presence of the impurity affects the ground state energy  by a term that is bounded independently of $N$. The bound \eqref{eq:boundMain} is an extension of \eqref{eq:bound3} in the sense that if we take $\lO \to \infty$ in \eqref{eq:boundMain} we recover \eqref{eq:bound3} up to the value of the constant.

\begin{remark*}
For $\alpha \to \infty$ one would expect that the optimal lower bound converges to the  ground state energy of the non-interacting Hamiltonian $\HG$ with Dirichlet boundary conditions. This is not the case for \eqref{eq:boundMain} which is independent of $\alpha$ for $\alpha \ge 0$. 
\end{remark*}

Using various types of trial states the ground state energy of point-interacting systems is extensively discussed in the physics literature (see \cite{massignan2014polarons} and references there). We note that with this method it is only possible to derive upper bounds, while  Thm.~\ref{thm:main} gives a lower bound on the ground state energy.

\subsection{Proof outline}
For the proof of Theorem \ref{thm:main} we first prove in Section~\ref{sec:IMS}  an IMS type formula, which allows to localize the impurity in a small box, of side length $\ell$ independent of $L$. In a second step we localize all of the remaining particles to be either close to the impurity or separated from it. Doing this we partly violate the antisymmetry constraint on the wavefunctions, which makes it necessary to first extend the quadratic form $\QF$ to $\QFE$. The latter does not require the antisymmetry, but coincides with $\QF$ on $D(\QF)$.

In Section \ref{sec:Apriori} we give a rough lower bound on the energy in  case the wavefunction is compactly supported in a box $(0,\lE)^3$. This lower bound is of the order $N^{5/3}/\ell^2$, as expected,  but with a non-sharp prefactor. We shall introduce a quadratic form $\QFP$ with periodic boundary conditions and show that it is equivalent to $\QF$ for  confined wavefunctions. The reason we work with periodic boundary conditions instead of Dirichlet ones is that it allows to  perform explicit computations in momentum space. 

Because the ground state energy of the confined non-interacting $N$-particle system is strictly positive, we are allowed to choose  $\mu$ negative in the definition of $\QFP$. Applying the method of \cite{MoserSeiringer2017} then leads to the lower bound on $\QFP$ in Theorem \ref{thm:AprioriBound}.
The downside of working with $\QFP$ will be that because of the discrete nature of momentum space for periodic functions, we have to work with sums instead of integrals, and the difference between  the sum and the integral versions will have to be carefully controlled.  

In Section \ref{sec:Loc} we give the proof of Theorem~\ref{thm:main}. Using the IMS formula of Prop.~\ref{thm:IMS}, we localize the particles  either in a small box with side length  $\lI \sim \dens^{-1/3}$ containing the impurity, or in the large complement. In the small  box we use Theorem \ref{thm:AprioriBound} for a lower bound,  whereas in the large complement we use Theorem~\ref{thm:lti}, which is a version of the positive density Lieb-Thirring inequality in \cite{posDensitiy} adapted to our setting of  Dirichlet boundary conditions, and which is proved in the appendix.  This allows us to improve the rough bound of Thm.~\ref{thm:AprioriBound} and show Thm.~\ref{thm:main}.

\section{Properties of the quadratic form}
\label{sec:IMS}
In this section we will first extend the quadratic form $\QF$ to functions that are not required to be antisymmetric in the last $N$ variables. Afterwards we shall  discuss how the splitting $\psi= \phi_\mu + \G \xi$ is affected  when multiplying $\psi$ by a smooth function (which need not be symmetric under permutations). This will be utilized in the last part of this section where an IMS formula for the (extended) quadratic form is shown. 

\subsection{Extension to functions without symmetry}
To prove our main theorem, we want to localize the particles in different subsets of the cube $B=(0,L)^3$. Hence it is necessary to extend the quadratic form $\QF$ by removing the antisymmetry constraint. To this aim we define
\begin{equation}
D(\QFE) = \left\{  \psi = \phi_\mu + \sum_{i=1}^N G_\mu \xi_i \mid \phi_\mu \in H^1(\R^{3 (N+1)}), \xi_i \in H^{1/2}(\R^{3 N}) \ \forall \,i, 1 \le i \le N\right\}
\end{equation}
where 
\begin{equation}
\widehat{\G \xi_i} (k_0,\vec k) =  \G(k_0, \vec k) \hat \xi_i(k_0 + k_i ,\hat k_i)\,.
\label{eq:tquadraticFormNoSym} 
\end{equation}
The quadratic form $\QFE$ is defined as
\begin{align}
\QFE(\psi) & \coloneqq \left\langle \phi_\mu \left| \HG   + \mu \right| \phi_\mu \right\rangle  - \mu  \norm{\psi}^2_{L^2(\R^{3(N+1)})} + \TSE(\vec \xi)  \\ \TSE(\vec \xi)  &\coloneqq   \frac {2m }{m+1} \alpha \sum_{i=1}^N \norm{\xi_i}_{L^2(\R^{3N})}^2 +  \TDE(\vec \xi) + \TOE(\vec \xi) \label{eq:quadraticFormNoSym}
\end{align}
where $\vec \xi = (\xi_i)_{i=1}^N$ and
\begin{align}
\TDE(\vec \xi) & \coloneqq \sum_{i=1}^N \int_{\R^{3N}}  |\hat \xi_i(\vec k)|^2  \L(\vec k)  \dI \vec k  \\
\TOE(\vec \xi) & \coloneqq - \sum_{\substack{i \ne j\\ 1 \le i,j \le N}} \int_{\R^{3(N+1)}} \hat \xi^\ast_i (k_0+ k_i , \hat k_i) \hat \xi_j (k_0 + k_j , \hat k_j) G_\mu(k_0,\vec k)  \dI k_0 \dI \vec k \,.
\label{def:calTNoSym}
\end{align}
Each $\xi_i$ in \eqref{eq:tquadraticFormNoSym} corresponds to a function supported on the hyperplane $x_0=x_i$. The only overlap between hyperplanes for  $i\neq j$ is on the set $x_i = x_0 = x_j$, which 
implies that $\sum_{i=1}^N \hat \xi_i(k_0+k_i, \hat k_i)$ has a unique decomposition into $(\xi_i)_{i=1}^N$, and thus the splitting $\psi = \phi_\mu + \sum_{i=1}^N \G \xi_i$ is  unique. To stress the dependence on $\psi$, we will sometimes use the notation $\phi^\psi_\mu$ and $\xi_i^\psi$ below.

In the case that $\psi$ is antisymmetric in the last $N$ coordinates, the uniqueness of the decomposition $\psi = \phi_\mu + \sum_{i=1}^N \G \xi_i$ shows that there exists a function $\xi \in  H^{1/2}(\R^3)\otimes H_{\rm as}^{1/2}(\R^{3 (N-1)})$ such that
$\xi_i = (-1)^{i+1} \xi$, and hence $\sum_{i=1}^N \G \xi_i = \G \xi$, defined in \eqref{eq:tquadraticForm}. 
Furthermore we have
\begin{equation}
\TDE(\vec \xi) = N \TD(\xi), \qquad \TOE(\vec \xi) =  N \TO(\xi) \label{eq:conExtNonExtT} 
\end{equation}
in this case, which shows that $\QFE(\psi) = \QF(\psi)$ for $\psi$ antisymmetric in the last $N$ coordinates. In particular, $\QFE$ is an extension of $\QF$, and for a lower bound it therefore suffices  to work with $\QFE$.

In the following, it will be convenient to introduce the notation
\begin{equation}
\tnabla \coloneqq \lk ( \frac 1 {\sqrt{2m}} \nabla_0, \frac 1 {\sqrt 2}  \nabla_1, \ldots, \frac 1 {\sqrt 2} \nabla_N \rk ) \label{eq:nabla} 
\end{equation}
as well as
\begin{equation}\label{def:hmu}
H_\mu \coloneqq H_0^N + \mu = -\tnabla^2 + \mu \,.
\end{equation}

\subsection{Localization of wavefunctions}
An important ingredient in the proof of Theorem~\ref{thm:main} 
will be to localize the particles. For this purpose we will study in this subsection how the splitting $\psi = \phi^\psi_\mu + \sum_{i=1}^N \G \xi_i^\psi$ is affected when multiplying $\psi$ by a smooth function. 

\begin{lemma}
\label{lem:Jtrans}
For $J \in \CC^\infty(\R^{3(N+1)})$ bounded and with bounded derivatives, we define $J \vec \xi = (J \xi_i)_{i=1}^N$ by
\begin{equation}\label{def:jacts}
(J \xi_i)(x_i, \hat x_i) = J(x_i,\vec x) \xi_i(x_i, \hat x_i) \, .
\end{equation}
Then $\xi_i \mapsto [J,G_\mu] \xi_i \coloneqq J G_\mu \xi_i - G_\mu J \xi_i $ is a bounded map from $L^{2}(\R^{3N})$ to $H^1(\R^{3(N+1)})$. In particular
\begin{equation}
\xi_i^{J \psi} = J \xi_i^\psi \label{eq:Jmap} 
\end{equation}
and the regular part $\phi^{J \psi}_\mu$ of $J\psi$ is given by 
\begin{equation}
\phi^{J \psi}_\mu = J \phi^{\psi}_\mu + \sum_{i=1}^N [J,\G] \xi_i^\psi \, .  \label{eq:Phimap} 
\end{equation}
\end{lemma}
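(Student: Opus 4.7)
The plan is to derive everything from the algebraic identity
\begin{equation*}
J\psi = J\phi_\mu^\psi + \sum_{i=1}^N J G_\mu \xi_i^\psi = \Big(J\phi_\mu^\psi + \sum_{i=1}^N [J,G_\mu]\xi_i^\psi\Big) + \sum_{i=1}^N G_\mu(J\xi_i^\psi),
\end{equation*}
combined with the uniqueness of the $\phi+\sum_i G_\mu\xi_i$ decomposition in $D(\QFE)$ that was established earlier in Section~\ref{sec:model}. If I can verify that (i) each $J\xi_i^\psi$ lies in $H^{1/2}(\R^{3N})$ and (ii) the first parenthesis lies in $H^1(\R^{3(N+1)})$, then by uniqueness the two pieces in the last expression must coincide with $G_\mu\xi_i^{J\psi}$ and $\phi_\mu^{J\psi}$ respectively, delivering \eqref{eq:Jmap} and \eqref{eq:Phimap} simultaneously.

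Claim (i) is a standard multiplier estimate: the restriction $\vec x\mapsto J(x_i,\vec x)$ of $J$ to the diagonal $x_0=x_i$ is smooth and bounded on $\R^{3N}$ with bounded derivatives (along the diagonal one simply combines $\partial_{x_0}J$ and $\partial_{x_i}J$), and multiplication by such a function preserves $H^{1/2}(\R^{3N})$. The subterm $J\phi_\mu^\psi$ in (ii) similarly lies in $H^1(\R^{3(N+1)})$, because $\phi_\mu^\psi\in H^1$ and $J$ is a bounded smooth multiplier with bounded derivatives. The only non-trivial task is therefore the commutator bound $[J,G_\mu]\colon L^2(\R^{3N})\to H^1(\R^{3(N+1)})$.

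The guiding formal identity is $[J,G_\mu]=G_\mu[H_\mu,J]G_\mu$ with $[H_\mu,J]=-\tnabla^2 J-2(\tnabla J)\cdot\tnabla$, a first-order differential operator with bounded coefficients; heuristically, the two copies of $G_\mu$ supply the two derivatives of regularity needed to pass from $L^2$ to $H^1$. To make this rigorous I would work with the explicit integral kernel of $[J,G_\mu]$: interpreting $G_\mu\xi_i$ as the single-layer potential of $\xi_i$ on the plane $y_0=y_i$ gives
\begin{equation*}
[J,G_\mu]\xi_i(x_0,\vec x) = \int \bigl[J(x_0,\vec x)-J(y_i,\vec y)\bigr]\,G(x_0-y_i,\vec x-\vec y)\,\xi_i(y_i,\hat y_i)\,dy_i\,d\hat y_i,
\end{equation*}
where $G$ is the Yukawa-type Green's function on $\R^{3(N+1)}$, satisfying $G(z)\lesssim|z|^{2-d}$ near the origin (with $d=3(N+1)$) together with exponential decay at infinity due to $\mu>0$, and $|\nabla G(z)|\lesssim|z|^{1-d}$. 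The Lipschitz bound $|J(x_0,\vec x)-J(y_i,\vec y)|\lesssim|x_0-y_i|+\sum_{j=1}^N|x_j-y_j|$ supplies the extra factor $|z|$ that tames the singularity of $\nabla G$, and an equivalent pointwise control $|G_\mu(p-r)-G_\mu(p)|\lesssim|r|\sup_{t\in[0,1]}|\nabla G_\mu(p-tr)|$ holds in Fourier variables; either version is then turned into the desired $L^2\to H^1$ estimate by a careful kernel bound.

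The main obstacle is precisely this commutator estimate. A naive bound on $JG_\mu\xi_i$ and $G_\mu(J\xi_i)$ separately cannot work: a direct power count of $\int(1+|p|^2)G_\mu(p_0,\vec p)^2|\hat\xi_i(p_0+p_i,\hat p_i)|^2\,dp_0\,d\vec p$ reveals divergence in the $p_i$-integration, so $G_\mu\xi_i\notin H^1(\R^{3(N+1)})$ in general. The $H^1$-regularity is generated entirely by the cancellation $J(x_0,\vec x)-J(y_i,\vec y)$ inside the commutator, and extracting this cancellation quantitatively -- particularly since $J$ is only assumed to have bounded derivatives (and need not be Schwartz) -- is the genuinely new technical work. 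Once the commutator bound is in hand, matching components in the unique decomposition of $J\psi$ yields \eqref{eq:Jmap} and \eqref{eq:Phimap}.
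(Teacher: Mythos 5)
Your reduction step is exactly the paper's: decompose $J\psi = (J\phi_\mu^\psi + \sum_i [J,G_\mu]\xi_i^\psi) + \sum_i G_\mu(J\xi_i^\psi)$ and invoke uniqueness of the splitting, so everything hinges on showing $[J,G_\mu]\colon L^2(\R^{3N})\to H^1(\R^{3(N+1)})$ is bounded. You also correctly identify that $G_\mu\xi_i\notin H^1$ in general, so a naive bound on the two halves of the commutator cannot work. But at exactly this point you stop: you label $[J,G_\mu]=H_\mu^{-1}[H_\mu,J]G_\mu$ a ``guiding formal identity,'' declare that it must be replaced by delicate kernel estimates with Lipschitz cancellation, and call this ``genuinely new technical work'' that you do not carry out. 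That leaves the key estimate unproven, which is a real gap in the proposal.

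What you have missed is that the operator identity is not merely heuristic but \emph{exact}, and making it rigorous is a one-line argument that bypasses all the kernel machinery. Since $J$ is smooth with bounded derivatives, $H_\mu^{-1}JH_\mu$ is a bounded operator on $L^2(\R^{3(N+1)})$. Moreover, in the sense of distributions $(H_\mu G_\mu\xi_i)(x_0,\vec x)=\xi_i(x_i,\hat x_i)\,\delta(x_0-x_i)$, so $H_\mu^{-1}JH_\mu G_\mu\xi_i = G_\mu(J\xi_i)$ by definition of how $J$ acts on $\xi_i$ in \eqref{def:jacts}. Consequently
\[
[J,G_\mu]\xi_i \;=\; \bigl(J - H_\mu^{-1}JH_\mu\bigr)G_\mu\xi_i \;=\; H_\mu^{-1}[H_\mu,J]\,G_\mu\xi_i
\]
holds identically on $L^2$. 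Writing $[H_\mu,J]$ in divergence form (one $\tnabla$ on each side), the operator $H_\mu^{-1}[H_\mu,J]$ is manifestly bounded from $L^2(\R^{3(N+1)})$ to $H^1(\R^{3(N+1)})$, and the $L^2$ bound $\|G_\mu\xi_i\|_{L^2}\lesssim \mu^{-1/4}\|\xi_i\|_{L^2}$ from \eqref{3:14} (a direct momentum-space integration over the variable conjugate to $x_0-x_i$) finishes the argument. None of this requires $J$ to be Schwartz, and none of it requires facing the high-codimension Green's-function singularity your single-layer approach would have to control. If you want to complete your proposal, replace the kernel sketch with this distributional step; otherwise the commutator bound remains open.
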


\begin{remark*}
We clarify that $J$ acts on functions on $\R^{3(N+1)}$, and  in particular on $\phi_\mu^\psi$ and  $G_\mu \xi_i^\psi$, as a multiplication operator, whereas on functions in $L^2(\R^{3N})$ it acts as in \eqref{def:jacts}. Hence the commutator $[J, \G]$ has no meaning here independently of its application on $\vec \xi$, and is only used as a convenient notation. 
\end{remark*}

\begin{proof}
We first  argue that $[J,G_\mu] \xi_i^\psi \in H^1(\R^{3(N+1)})$ implies \eqref{eq:Jmap} and \eqref{eq:Phimap}.
We have
\begin{equation}
J \psi - \sum_{i=1}^N \G J \xi_i^\psi 
= J \phi_\mu^\psi + \sum_{i=1}^N [J,\G] \xi_i^\psi \,.\label{eq:JTerms} 
\end{equation}
Since $J \phi_\mu^\psi$ and  $ [J,G_\mu]\xi_i^\psi $ are in $H^1(\R^{3(N+1)})$, the uniqueness of the decomposition of $J\psi$ into regular and singular parts implies  \eqref{eq:Jmap} and \eqref{eq:Phimap}.

It remains to show that $[J,G_\mu] \xi_i \in H^1(\R^{3(N+1)})$ for $\xi_i \in L^2(\R^{3N}$. In order to do so, we shall in fact show that
\begin{equation}\label{ne}
  [J,\G] \xi_i = H_\mu^{-1}  [ H_0^N , J ]\G \xi_i =  H_\mu^{-1}  ( - 2 \tnabla\cdot(\tnabla J) -(\tnabla^2 J)  ) \G \xi_i \,,
\end{equation}
where we used the notation introduced in \eqref{eq:nabla} and \eqref{def:hmu}. 
From \eqref{ne} the $H^1$ property readily follows, using that 
\begin{equation}
\norm{\G \xi_i}^2_{L^2(\R^{3(N+1)})}  = \int_{\R^{3(N+1)}} \G(k_0,\vec k)^2 |\hat\xi_i(k_0+k_i, \hat k_i)|^2 \dI k_0 \dI \vec k \lesssim \left(\frac{m}{m+1}\right)^{3/2} \mu^{-1/2} \norm{\xi_i}^2_{L^2(\R^{3N})}\,.  \label{3:14}
\end{equation}
In the last step we did an explicit integration over $\frac 1{m+1}k_0- \frac m{m+1}k_i$, the variable canonically conjugate to $x_0-x_i$. 

In order to show \eqref{ne}, we note that since $J$ is smooth, $H_\mu^{-1} J H_\mu$ is a bounded operator. In the sense of distributions, we have
\begin{equation}
\left( H_\mu \G \xi_i \right)(x_0,\vec x) = \xi_i(  x_i, \hat x_i) \delta(x_0-x_i)
\end{equation}
and hence $H_\mu^{-1} J H_\mu \G\xi_i = \G J \xi_i$. In particular, 
\begin{equation}\label{usee}
  [J,\G] \xi_i  =  \left( J - H_\mu^{-1} J H_\mu  \right) \G \xi_i
\end{equation}
which indeed equals \eqref{ne}. 
This completes the proof of the lemma.
\end{proof}

\begin{corollary}
\label{cor:domainXi}
Assume that $\psi \in D(\QFE)$ satisfies  $\supp \psi \subseteq \Omega_0 \times \cdots \times \Omega_N$, where $\Omega_j \subseteq \R^3$ for $0 \le j \le N$.
Then 
\begin{equation}
\supp \xi_i^\psi \subseteq (\Omega_0 \cap \Omega_i) \times \Omega_1 \times \cdots \times \Omega_{i-1} \times \Omega_{i+1} \times \cdots \times \Omega_N \,.
\end{equation}
\end{corollary}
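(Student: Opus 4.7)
The strategy is to apply Lemma~\ref{lem:Jtrans} to suitable smooth multipliers $J$ that annihilate $\psi$, and then use $\xi_i^{J\psi}=J\xi_i^\psi$ together with the uniqueness of the singular-part decomposition to conclude $J\xi_i^\psi\equiv 0$ in $L^2(\R^{3N})$. Without loss of generality I would assume each $\Omega_j$ is closed, so it is enough to show that $\xi_i^\psi$ vanishes in a neighborhood of every point $\vec q=(q_i,\hat q_i)\in\R^{3N}$ lying outside $(\Omega_0\cap\Omega_i)\times\prod_{j\neq i}\Omega_j$. I would then split this into three cases, according to which of the conditions $q_j\in\Omega_j$ ($j\neq i$), $q_i\in\Omega_i$, or $q_i\in\Omega_0$ fails.

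For a coordinate $j\in\{1,\ldots,N\}$ with $j\neq i$ and $q_j\notin\Omega_j$, I would pick $\chi\in\CC^\infty_c(\R^3)$ with $\chi(q_j)=1$ and $\supp\chi\cap\Omega_j=\emptyset$, and set $J(x_0,\vec x)=\chi(x_j)$. This $J$ is smooth, bounded, and has bounded derivatives, and $J\psi\equiv 0$ because $\psi$ vanishes whenever $x_j\notin\Omega_j$. Lemma~\ref{lem:Jtrans} together with \eqref{eq:Jmap} then gives $\chi(x_j)\,\xi_i^\psi(x_i,\hat x_i)=0$, so $\xi_i^\psi$ vanishes in a neighborhood of $\vec q$. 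The case $q_i\notin\Omega_i$ is handled identically by taking $J(x_0,\vec x)=\chi(x_i)$ with $\chi$ supported off $\Omega_i$ and $\chi(q_i)=1$.

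The one really informative case is $q_i\notin\Omega_0$. There I would pick $\chi\in\CC^\infty_c(\R^3)$ supported off $\Omega_0$ with $\chi(q_i)=1$ and set $J(x_0,\vec x)=\chi(x_0)$, so that again $J\psi\equiv 0$. Now the action of $J$ on $\xi_i^\psi$ prescribed by \eqref{def:jacts} reads $(J\xi_i^\psi)(x_i,\hat x_i)=J(x_i,\vec x)\,\xi_i^\psi(x_i,\hat x_i)=\chi(x_i)\,\xi_i^\psi(x_i,\hat x_i)$; the diagonal substitution $x_0\mapsto x_i$ built into \eqref{def:jacts} is precisely what converts a support condition on $\psi$ in the impurity coordinate $x_0$ into a support condition on $\xi_i^\psi$ in its first variable, reflecting that $\xi_i$ lives on the hyperplane $x_0=x_i$. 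Combining the three cases yields the claim.

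There is essentially no obstacle once Lemma~\ref{lem:Jtrans} is available: the only subtle point is recognizing the diagonal substitution in the last case, and the minor bookkeeping that as $\chi$ ranges over compactly supported bumps one covers each open set $\R^3\setminus\Omega_j$, which is automatic once the $\Omega_j$ are taken closed.
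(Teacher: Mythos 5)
Your proof is correct and follows essentially the same route as the paper: apply Lemma~\ref{lem:Jtrans} to a smooth multiplier $J$ and exploit the uniqueness of the splitting to pass a support condition on $\psi$ to one on $\xi_i^\psi$, with the diagonal substitution $x_0\mapsto x_i$ in \eqref{def:jacts} producing the $\Omega_0\cap\Omega_i$ factor. The only cosmetic difference is that the paper takes $J\equiv 1$ on $\Omega_0\times\cdots\times\Omega_N$ (so $J\psi=\psi$ and $(1-J)\xi_i^\psi=0$), while you take $J$ supported away from it (so $J\psi=0$ and $J\xi_i^\psi=0$); these are complementary formulations of the same argument.
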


\begin{proof}
Let $J \in \CC^\infty(\R^{3(N+1)})$ such that $J(x_0,\vec x) = 1$ for $(x_0,\vec x) \in \Omega_0 \times \cdots \times \Omega_N$. Using  Lemma~\ref{lem:Jtrans} we get that
\begin{equation}
\xi_i^{\psi}(x_0, \hat x_i) = \xi^{J \psi}_i(x_0, \hat x_i) = J(x_i,\vec x) \xi^\psi_i(x_i, \hat x_i) \,.
\end{equation}
Since this holds for all $J$ with the above property, the claim follows.
\end{proof}

\subsection{Alternative representation of the singular part}

The following Lemma gives an alternative representation of the singular part of the quadratic form, defined in \eqref{eq:quadraticFormNoSym}. It will turn out to be  useful in the proof of the IMS formula in the next subsection.

\begin{lemma}
For  $\vec \xi = (\xi_i)_{i=1}^N$ with  $\xi_i \in H^{1/2}(\R^{3 N})$, the function 
\begin{equation}
\mathcal{I}(\nu) \coloneqq \left\| \sum\nolimits_{i=1}^N G_\nu \xi_i \right\|_{L^2(\R^{3(N+1)})}^2 -  \pi^2 \left(\frac{2m}{m+1}\right)^{3/2} \frac{1 }{ \sqrt{ \nu}} \sum_{i=1}^N \norm{\xi_i}_{L^2(\R^{3N})}^2
\end{equation}
is integrable on $[\mu,\infty)$ for any $\mu>0$, and 
we have
\begin{equation}\label{rep:sing}
\TSE(\vec \xi)  = \left( \frac {2m }{m+1} \alpha   + 2 \pi^2 \left(\frac{2m}{m+1}\right)^{3/2}    \sqrt\mu \right)  \sum_{i=1}^N \norm{\xi_i}_{L^2(\R^{3N})}^2  - \int_{\mu}^\infty \dI \nu \,  \mathcal{I}(\nu ) \,.
 \end{equation}
\end{lemma}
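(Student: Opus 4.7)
The plan is to exploit the resolvent identity $G_\mu(k_0,\vec k)=\int_\mu^\infty G_\nu(k_0,\vec k)^2\,\dI\nu$, which follows from $-\partial_\nu G_\nu=G_\nu^2$ together with $G_\nu(k_0,\vec k)\to 0$ as $\nu\to\infty$. Inserting this into the definition of $\TOE(\vec\xi)$, interchanging the $\nu$-integral with the momentum integrals, and completing the off-diagonal sum $\sum_{i\ne j}$ to the full double sum (which equals $\|\sum_i G_\nu\xi_i\|^2$) gives
\begin{equation*}
  \TOE(\vec\xi)=-\int_\mu^\infty \dI\nu\left[\left\|\sum_i G_\nu\xi_i\right\|^2-\sum_i\int G_\nu^2\,|\hat\xi_i|^2\,\dI k_0\,\dI\vec k\right].
\end{equation*}

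Next I would evaluate the diagonal pieces explicitly. Using the same change of variables as in the derivation of \eqref{3:14}, i.e.\ replacing $(k_0,k_i)$ with the total momentum $q=k_0+k_i$ and the conjugate relative momentum $p$ (unit Jacobian), $G_\nu^{-1}$ separates as $q^2/(2(m+1))+p^2(m+1)/(2m)+\hat k_i^2/2+\nu$, while $\hat\xi_i$ depends only on $(q,\hat k_i)$. The $p$-integral $\int_{\R^3}(u^2+A)^{-2}\,\dI u=\pi^2/\sqrt A$ then yields
$\sum_i\int G_\nu^2|\hat\xi_i|^2\,\dI k_0\,\dI\vec k=\pi^2(2m/(m+1))^{3/2}\sum_i\int|\hat\xi_i(\vec k)|^2\,A_\nu(\vec k)^{-1/2}\,\dI\vec k$ with $A_\nu(\vec k)=k_1^2/(2(m+1))+\hat k_1^2/2+\nu$. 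Adding and subtracting $\pi^2(2m/(m+1))^{3/2}\nu^{-1/2}\sum_i\|\xi_i\|^2$ inside the $\nu$-integral splits the bracket into $\mathcal I(\nu)$ plus a remainder proportional to $A_\nu^{-1/2}-\nu^{-1/2}$. Using the antiderivative $2\sqrt{A_\nu}$ of $A_\nu^{-1/2}$ together with $\sqrt{A_\nu}-\sqrt\nu\to 0$ as $\nu\to\infty$, the $\nu$-integral $\int_\mu^\infty(A_\nu^{-1/2}-\nu^{-1/2})\,\dI\nu$ evaluates to $2(\sqrt\mu-\sqrt{A_\mu})$. Since $2\pi^2(2m/(m+1))^{3/2}\sqrt{A_\mu}=L_\mu(\vec k)$, the contribution of this remainder is exactly $2\pi^2(2m/(m+1))^{3/2}\sqrt\mu\sum_i\|\xi_i\|^2-\TDE(\vec\xi)$, and adding the $\alpha$-term recovers \eqref{rep:sing}.

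The main obstacle will be justifying the interchange of integrals, since $\int G_\nu^2\,\hat\xi_i^\ast\hat\xi_j\,\dI k_0\,\dI\vec k$ for $i\ne j$ is not absolutely integrable against $\dI\nu$ on $[\mu,\infty)$ — indeed $\int G_\mu|\hat\xi_i|^2\,\dI k_0\,\dI\vec k$ is itself infinite because of the relative-momentum integration. I would overcome this by first truncating $\hat\xi_i$ to $|\vec k|\le R$, where Fubini applies trivially, and then passing to $R\to\infty$ using dominated convergence. For integrability of $\mathcal I$ on $[\mu,\infty)$, the pointwise estimate $0\le\nu^{-1/2}-A_\nu^{-1/2}\le\min(\nu^{-1/2},\,c\,\nu^{-3/2})$ with $c=k_1^2/(2(m+1))+\hat k_1^2/2$ gives $\int_\mu^\infty(\nu^{-1/2}-A_\nu^{-1/2})\,\dI\nu\lesssim\sqrt c\lesssim L_\mu(\vec k)$, so the diagonal part of $\mathcal I$ is integrable under the $H^{1/2}$ hypothesis $\int L_\mu|\hat\xi_i|^2\,\dI\vec k<\infty$; the off-diagonal part is then integrable by the identity itself, since both $\TOE(\vec\xi)$ and the diagonal contribution are finite.
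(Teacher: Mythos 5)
Your computation follows the same route as the paper: both proofs rest on the identity $\int_\mu^\infty G_\nu(k_0,\vec k)^2\,\dI\nu = G_\mu(k_0,\vec k)$, the explicit evaluation of the diagonal integral via the relative--momentum substitution, the add-and-subtract of $\pi^2(2m/(m+1))^{3/2}\nu^{-1/2}\sum_i\|\xi_i\|^2$, and the antiderivative $2\sqrt{A_\nu}$. All of these steps, and your algebra recovering \eqref{rep:sing}, are correct.

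The one place you go off-track is the Fubini justification for the off-diagonal terms, and here your concern is misplaced. You assert that $\int_\mu^\infty\dI\nu\int |\hat\xi_i^*\hat\xi_j|\,G_\nu^2\,\dI k_0\,\dI\vec k$ is not absolutely convergent for $i\ne j$, citing the divergence of $\int G_\mu|\hat\xi_i|^2$ as evidence. But that divergence is a purely diagonal ($i=j$) phenomenon: it comes from the fact that for $i=j$ the two $\xi$ factors depend on the same argument $(k_0+k_i,\hat k_i)$, leaving the relative momentum conjugate to $x_0-x_i$ entirely unconstrained, so that $\int G_\mu\,\dI p$ diverges. For $i\ne j$ the two factors $\hat\xi_i(k_0+k_i,\hat k_i)$ and $\hat\xi_j(k_0+k_j,\hat k_j)$ depend on \emph{different} momentum combinations, and a weighted Cauchy--Schwarz (the same one used in \cite{DellAntonio1994,correggi2012stability,MoserSeiringer2017} to show $|\TO|\lesssim\TD$) gives $\int|\hat\xi_i||\hat\xi_j|\,G_\mu<\infty$ for $\xi_i,\xi_j\in H^{1/2}$. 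By Tonelli this is exactly the absolute convergence of the iterated $\nu$-integral, and this is precisely what the paper invokes when it says ``the integrand \dots is absolutely integrable for $\xi_i\in H^{1/2}$.'' Consequently, the truncation-to-$|\vec k|\le R$ workaround you propose is not needed; as written, it also does not close the gap you perceive, since passing to the limit $R\to\infty$ inside $\int_\mu^\infty\dI\nu$ by dominated convergence would itself require a $\nu$-integrable majorant, which is again the absolute-integrability statement. Once you replace your claim by the correct one, your argument coincides with the paper's.

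One smaller remark: your final sentence, that ``the off-diagonal part is then integrable by the identity itself,'' is circular as stated, since the identity was established using that integrability. With the correct Fubini justification in hand, the integrability of $\mathcal I$ on $[\mu,\infty)$ follows directly: the diagonal part of $\mathcal I$ is non-positive with integral $-\bigl(\TDE(\vec\xi)-2\pi^2(2m/(m+1))^{3/2}\sqrt\mu\sum_i\|\xi_i\|^2\bigr)$, finite by $H^{1/2}$, and the off-diagonal part is absolutely integrable as just discussed.
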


\begin{proof}
For any $1\leq i\leq N$, we have
\begin{align}\nonumber
\norm{G_\nu \xi_i}^2_{L^2(\R^{3(N+1)})}  & = \int_{\R^{3(N+1)}} G_\nu (k_0,\vec k)^2 |\hat\xi_i(k_0+k_i, \hat k_i)|^2 \dI k_0 \dI \vec k = 
\\ & =  \left(\frac{2m}{m+1}\right)^{3/2}  \int_{\R^{3N}}  \frac{ \pi^2} {\sqrt{ \tfrac {k_i^2}{2(1+m)} + \tfrac 12 \hat k_i^2 + \nu }}  |\hat\xi_i(k_i, \hat k_i)|^2  \dI k_0 \dI \vec k \,.
\end{align}
In particular, 
\begin{equation}
 \left\| G_\nu \xi_i \right\|_{L^2(\R^{3(N+1)})}^2 -  \left(\frac{2m}{m+1}\right)^{3/2} \frac{ \pi^2 }{ \sqrt\nu} \norm{\xi_i}_{L^2(\R^{3N})}^2  \leq 0
\end{equation}
and we have 
\begin{align}\nonumber
& -\int_\mu^\infty \dI \nu \left(  \left\| G_\nu \xi_i \right\|_{L^2(\R^{3(N+1)})}^2 -  \left(\frac{2m}{m+1}\right)^{3/2} \frac{ \pi^2 }{ \sqrt\nu} \norm{\xi_i}_{L^2(\R^{3N})}^2  \right) \\ & = \int_{\R^{3N}}  |\hat \xi_i(\vec k)|^2  \L(\vec k)  \dI \vec k -  2 \pi^2 \left(\frac{2m}{m+1}\right)^{3/2}    \sqrt \mu   \norm{\xi_i}_{L^2(\R^{3N})}^2 \,.
\end{align}
For the terms $i\neq j$, on the other hand, we have
\begin{align}\nonumber
\int_\mu^\infty \dI \nu \, \langle G_\nu \xi_i | G_\nu \xi_j\rangle & =  \int_\mu^\infty \dI \nu 
\int_{\R^{3(N+1)}} \hat \xi^\ast_i (k_0+ k_i , \hat k_i) \hat \xi_j (k_0 + k_j , \hat k_j) G_\nu(k_0,\vec k)^2 \dI k_0 \dI \vec k \\
& = \int_{\R^{3(N+1)}} \hat \xi^\ast_i (k_0+ k_i , \hat k_i) \hat \xi_j (k_0 + k_j , \hat k_j) G_\mu(k_0,\vec k) \dI k_0 \dI \vec k \,.
\end{align}
Here the exchange of the order of integration is justified by Fubini's theorem, since the integrand in the first line on the right is absolutely integrable for $\xi_i \in H^{1/2}$. This completes the proof.  
\end{proof}

\subsection{IMS formula}
In this subsection we will prove the following Lemma.

\begin{prop}
\label{thm:IMS}
Given $M \ge 1$ and $(J_i)_{i=1}^{M}$ with $J_i \in \CC^\infty(\R^{3(N+1)})$ and $\sum_{i=1}^M J_i^2 = 1$, we have 
\begin{equation}
\QFE(\psi) = \sum_{i=1}^M  \QFE(J_i \psi) - \sum_{i=1}^M \norm{ (\tnabla J_i) \psi}^2  
\end{equation}
for all $\psi\in D(\tilde F_{\alpha,N})$.
\end{prop}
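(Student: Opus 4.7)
The approach is to mimic the derivation of the standard IMS formula for Schr\"odinger operators, carefully tracking how multiplication by $J_j$ acts on the decomposition $\psi = \phi_\mu + \sum_i \G \xi_i$. By Lemma~\ref{lem:Jtrans}, the regular part transforms as $\phi_\mu^{J_j\psi} = J_j\phi_\mu^\psi + R_j^\mu$, where $R_j^\mu := \sum_i [J_j,\G]\xi_i^\psi \in H^1(\R^{3(N+1)})$, and the singular coefficients satisfy $\xi_i^{J_j\psi} = J_j \xi_i^\psi$ with $J_j$ acting on $\xi_i$ as in~\eqref{def:jacts}. Two contributions to $\QFE$ reassemble trivially under $\sum_j$ thanks to $\sum_j J_j^2 \equiv 1$: the $-\mu\|\psi\|^2$ term gives $\sum_j \mu\|J_j\psi\|^2 = \mu\|\psi\|^2$; and the $\tfrac{2m}{m+1}\alpha \sum_i \|\xi_i\|^2$ term gives $\sum_j \|J_j\xi_i^\psi\|^2_{L^2(\R^{3N})} = \|\xi_i^\psi\|^2$ for each $i$, because the partition-of-unity identity still holds when $J_j$ is evaluated on the diagonal $x_0 = x_i$.

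The remaining kinetic pieces require more work. On the regular side I would apply the standard IMS identity $\sum_j \langle J_j\phi_\mu | H_\mu | J_j \phi_\mu\rangle = \langle \phi_\mu | H_\mu | \phi_\mu\rangle + \sum_j \|(\tnabla J_j)\phi_\mu\|^2$, valid since $\phi_\mu \in H^1$, and expand the cross terms coming from $\phi_\mu^{J_j\psi} = J_j\phi_\mu + R_j^\mu$. On the singular side I would use the integral representation of $\TSE$ from the previous lemma to express the change under $\vec\xi \mapsto J_j \vec\xi$ as an integral over the auxiliary parameter $\nu \ge \mu$; applying Lemma~\ref{lem:Jtrans} at that parameter gives $\sum_i G_\nu(J_j \xi_i) = J_j \Xi^\nu - R_j^\nu$ with $\Xi^\nu := \sum_i G_\nu \xi_i$ and $R_j^\nu := \sum_i [J_j, G_\nu]\xi_i$, yielding
\begin{equation*}
\sum_j \TSE(J_j \vec\xi) - \TSE(\vec\xi) = \int_\mu^\infty \Bigl( 2\,\Re\sum_j \langle J_j \Xi^\nu \,|\, R_j^\nu\rangle - \sum_j \|R_j^\nu\|^2 \Bigr) d\nu.
\end{equation*}
Collecting everything, proving the IMS formula reduces to an identity whose right-hand side is $\sum_j \|(\tnabla J_j)\psi\|^2 - \sum_j \|(\tnabla J_j)\phi_\mu\|^2$ and whose left-hand side involves the correction terms $\sum_j \langle R_j^\mu | H_\mu | R_j^\mu\rangle$, $2\,\Re\sum_j \langle J_j\phi_\mu | H_\mu | R_j^\mu\rangle$, and the $\nu$-integrated contributions above.

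Verifying this final identity is the main obstacle. The key inputs are: the pointwise identity $H_\nu R_j^\nu = [H_0^N, J_j] \Xi^\nu$, valid for every $\nu \ge \mu$ (from the proof of Lemma~\ref{lem:Jtrans}, cf.~\eqref{ne}); the resolvent identity $\int_\mu^\infty H_\nu^{-2}\,d\nu = H_\mu^{-1}$, following from $\tfrac{d}{d\nu} H_\nu^{-1} = -H_\nu^{-2}$; a Fubini exchange of the $\nu$- and momentum-integrations, justified by the integrability of $\mathcal{I}$ ensured by the previous lemma applied both to $\vec\xi$ and to each $J_j \vec\xi$; and the partition-of-unity derivative identities $\sum_j J_j \tnabla J_j = 0$ and $\sum_j J_j \tnabla^2 J_j = -\sum_j |\tnabla J_j|^2$. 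Once these are applied, the $\nu$-integrals collapse to $\nu = \mu$ expressions via the resolvent identity, and the resulting Fourier-space computation should match $\sum_j \|(\tnabla J_j)\psi\|^2 = \sum_j \|(\tnabla J_j)(\phi_\mu + \Xi^\mu)\|^2$ term by term.
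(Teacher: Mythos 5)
Your overall strategy is the same as the paper's: invoke Lemma~\ref{lem:Jtrans} to track the decomposition under multiplication by $J$, use the integral representation~\eqref{rep:sing} of $\TSE$ over $\nu\in[\mu,\infty)$, and exploit $H_\nu R_j^\nu=[H_0^N,J_j]\,\Xi^\nu$ together with $\partial_\nu G_\nu\xi=-H_\nu^{-1}G_\nu\xi$. The only structural difference is cosmetic: the paper first polarizes and proves the single-function identity $\QFE(J^2\psi,\psi)+\QFE(\psi,J^2\psi)-2\QFE(J\psi,J\psi)=-2\|(\tnabla J)\psi\|^2$ for an \emph{arbitrary} smooth $J$, then sums over the partition; this avoids ever invoking the derivative identities $\sum_j J_j\tnabla J_j=0$, $\sum_j J_j\tnabla^2 J_j=-\sum_j|\tnabla J_j|^2$ that your direct-sum route needs. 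Either is fine.

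The part you have not actually supplied is the crux. You correctly reduce the claim to an identity among $\sum_j\langle R_j^\mu|H_\mu|R_j^\mu\rangle$, $2\Re\sum_j\langle J_j\phi_\mu|H_\mu|R_j^\mu\rangle$ and the $\nu$-integral of $2\Re\sum_j\langle J_j\Xi^\nu|R_j^\nu\rangle-\sum_j\|R_j^\nu\|^2$, and to verify it you invoke ``$\int_\mu^\infty H_\nu^{-2}d\nu=H_\mu^{-1}$'' and assert the pieces ``should match term by term.'' That resolvent identity alone is not the whole story, because the integrand's $\nu$-dependence sits in $\Xi^\nu$ and $R_j^\nu$ simultaneously, not in an isolated $H_\nu^{-2}$. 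What actually makes the $\nu$-integral collapse in the paper is recognizing the full integrand as a total $\nu$-derivative: with $A_\nu\coloneqq[J,H_\nu]H_\nu^{-1}[H_\nu,J]-|\tnabla J|^2$ the integrand equals $-2\,\partial_\nu\langle G_\nu\xi_i|A_\nu|G_\nu\xi_{j}\rangle$ (cf.~\eqref{ipf3}), after which the fundamental theorem of calculus (with vanishing boundary term at $\nu=\infty$) yields a $\nu=\mu$ expression that cancels exactly against the regular-part corrections coming from $R_j^\mu$. You have all the right ingredients listed, but that identification of the total derivative and the subsequent algebraic cancellation is the computation that constitutes the proof; ``should match'' leaves it open.
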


\begin{proof}
By using the polarization identity, we can extend $\QFE$ to a  sesquilinear form, denoted as $\QFE(\psi_1,\psi_2)$.  It suffices to prove that
\begin{equation}
\QFE(J^2 \psi,  \psi) + \QFE(\psi, J^2 \psi) - 2 \QFE (J \psi, J \psi) = - 2\norm{ (\tnabla J) \psi }^2 \label{eq:statement} 
\end{equation}
for smooth functions $J$, since then 
\begin{equation}
\QFE(\psi) = \frac 1 2 \sum_{i=1}^M \lk ( \QFE(J_i^2 \psi, \psi) + \QFE(\psi, J_i^2 \psi) \rk ) \overset{\substack{\eqref{eq:statement}}} = \sum_{i=1}^M \QFE(J_i \psi, J_i \psi) - \sum_{i=1}^M  \norm{ (\tnabla J_i) \psi}^2\,.  
\end{equation}

Recall the definition $H_\mu = H_0^N + \mu$. The left side of \eqref{eq:statement} equals
\begin{align}\nonumber
& \braket{ \phi^{J^2 \psi}_\mu}{ H_\mu| \phi_\mu^\psi} + \braket {\phi_\mu^\psi} { H_\mu| \phi^{J^2 \psi}_\mu} - 2 \braket {\phi^{J \psi}_\mu} { H_\mu| \phi^{J\psi}_\mu} \\ &  + \TSE(\vec \xi^{J^2 \psi}, \vec \xi^\psi) + \TSE(\vec \xi^\psi, \vec \xi^{J^2 \psi}) - 2 \TSE( \vec \xi^{J \psi}, \vec \xi^{J \psi}) \label{2lines}
 \end{align}
where we introduced the sesquilinear form $\TSE(\vec\xi_1,\vec \xi_2)$ corresponding to the quadratic form \eqref{eq:quadraticFormNoSym}. 
We use Lemma~\ref{lem:Jtrans} to identify the regular and singular parts of the various wavefunctions. 
For the quadratic form $\TSE$, we utilize the representation \eqref{rep:sing}, which together with \eqref{eq:Jmap}  implies that
\begin{align}\nonumber
&  \TSE(\vec \xi^{J^2 \psi}, \vec \xi^\psi) + \TSE(\vec \xi^\psi, \vec \xi^{J^2 \psi}) - 2 \TSE( \vec \xi^{J \psi}, \vec \xi^{J \psi}) \\
& = \int_\mu^\infty \dI \nu \sum_{i,j=1}^N \left( 2 \langle G_\nu J \xi_i^\psi | G_\nu J \xi_j^\psi \rangle -  \langle G_\nu J^2 \xi^\psi_i | G_\nu  \xi_j^\psi \rangle - \langle G_\nu  \xi_i^\psi | G_\nu J^2 \xi_j^\psi \rangle\right) \,. \label{ipf1}
 \end{align}
Since $G_\nu J \xi_i^\psi = H_\nu^{-1} J H_\nu G_\nu \xi^\psi$, as shown in the proof of Lemma~\ref{lem:Jtrans}, we can rewrite the terms in the integrand as
\begin{align}\nonumber
&2 \langle G_\nu J \xi_i^\psi | G_\nu J \xi_j^\psi \rangle -  \langle G_\nu J^2 \xi_i^\psi | G_\nu  \xi_j^\psi \rangle - \langle G_\nu  \xi_i^\psi | G_\nu J^2 \xi_j^\psi \rangle \\  & = 
 \left \langle G_\nu  \xi_i^\psi \left| 2 H_\nu J  H_\nu^{-2} J H_\nu  -   H_\nu^{-1} J^2  H_\nu  - H_\nu J^2  H_\nu^{-1}  \right|   G_\nu \xi_j^\psi \right\rangle \,. \label{tfe}
\end{align}
Using that $(\partial/\partial\nu) G_\nu \xi_i^\psi = - H_\nu^{-1} G_\nu \xi_i^\psi$ as well as $[J,[H_\nu,J]] = 2 |\tnabla J|^2$, one readily checks that this further equals
\begin{equation}\label{ipf3}
\eqref{tfe} = -2 \frac\partial{\partial\nu}  \left \langle G_\nu  \xi_i^\psi \left| [J,H_\nu] H_\nu^{-1} [H_\nu,J]  - |\tnabla J|^2 \right|   G_\nu \xi_j^\psi \right\rangle \,.
\end{equation}
The operator $A_\nu\coloneqq [J,H_\nu] H_\nu^{-1} [H_\nu,J]  - |\tnabla J|^2$ is bounded, uniformly in $\nu$ for $\nu\geq \mu>0$. Since $\|G_\nu \xi_i^\psi\|_2 \to 0$ as $\nu\to \infty$, we have $\lim_{\nu\to\infty} \langle G_\nu \xi_i^\psi| A_\nu | G_\nu \xi_j^\psi\rangle =0$. 
In particular, from \eqref{ipf1}--\eqref{ipf3} we conclude that
\begin{align}\nonumber
&\TSE(\vec \xi^{J^2 \psi}, \vec \xi^\psi) + \TSE(\vec \xi^\psi, \vec \xi^{J^2 \psi}) - 2 \TSE( \vec \xi^{J \psi}, \vec \xi^{J \psi})
\\ &  = \sum_{i,j=1}^N \lk (  
 2 \left \langle \G  \xi_i^\psi \left| [J,H_\mu] H_\mu^{-1} [H_\mu,J]   \right|   \G \xi_j^\psi \right\rangle
 - 2 \braket{\G \xi_i^{\psi}}{ |\tnabla J|^2 \G \xi_j^{\psi}} \rk)\, . \label{eqli3}
\end{align}

For the regular part, we use \eqref{eq:Phimap}  to rewrite the first line in \eqref{2lines} as 
\begin{align}\nonumber
& \braket{ \phi^{J^2 \psi}_\mu}{ H_\mu| \phi_\mu^\psi} + \braket {\phi_\mu^\psi} { H_\mu| \phi^{J^2 \psi}_\mu} - 2 \braket {\phi^{J \psi}_\mu} { H_\mu| \phi^{J\psi}_\mu} \\  \nonumber & =  -2 \braket{ \phi^{\psi}_\mu}{ |\tnabla J|^2  \phi_\mu^\psi}  - 2  \sum_{i,j=1}^N  \braket{ [J, \G ] \xi_i^\psi }{ H_\mu | [J,\G] \xi_j^\psi}  \\ & \quad - 4 \Re  \sum_{i=1}^N \braket{ [ J, \G] \xi_i^{\psi} } {H_\mu |  J \phi_\mu^\psi} + 2\Re \sum_{i=1}^N  \braket{ [J^2, \G] \xi_i^{\psi} } {H_\mu |  \phi_\mu^\psi}\,.  \label{nee}
\end{align}
The second term on the right side equals $ -2 \sum_{i,j=1}^N  \left \langle \G  \xi_i^\psi \left| [J,H_\mu] H_\mu^{-1} [H_\mu,J]   \right|   \G \xi_j^\psi \right\rangle$, as \eqref{ne} shows.  
Also the last line in \eqref{nee} can be evaluated with the aid of \eqref{ne}, with the result that 
\begin{align}\nonumber
&- 4 \Re  \sum_{i=1}^N \braket{ [ J, \G] \xi_i^{\psi} } {H_\mu |  J \phi_\mu^\psi} + 2\Re \sum_{i=1}^N  \braket{ [J^2, \G] \xi_i^{\psi} } {H_\mu |  \phi_\mu^\psi} 
\\ & = - 4 \Re  \sum_{i=1}^N \braket{ \G \xi_i^{\psi} } { |\tnabla J|^2  \phi_\mu^\psi} \,. \label{eqli4}
\end{align}
In combination, \eqref{eqli3}, \eqref{nee} and \eqref{eqli4} imply the desired identity \eqref{eq:statement}.
This completes the proof of the lemma.
\end{proof}

\section{A rough bound}
\label{sec:Apriori}

In this section we give a rough lower bound on the ground state energy of $\QF$ when restricted to wavefunctions $\psi \in D(\QF)$ that are supported in  $\BE^{N+1}$ with $\BE = (0,\lE)^3$ for some $\ell>0$. This lower bound has the desired scaling in $N$ and $\lE$, i.e., it is proportional to $N^{5/3} \lE^{-2}$, but with a non-sharp prefactor.  For its proof, we will first reformulate the problem using periodic boundary conditions, and then apply the  methods previously introduced in \cite{MoserSeiringer2017} to show stability in infinite space.

The statement of the following theorem involves three positive constants $\cT$, $c_L$ and $c_\Lambda$, which are independent of $m,N,\ell$ and $\alpha$ and which will be defined later.  In particular, $\cT$ is defined in Eq.~\eqref{def:cT}, $c_L$ in Eq.~\eqref{fdef:cL} and $c_\Lambda$ in Lemma~\ref{lem:cL}. 

\begin{theorem}
\label{thm:AprioriBound}
Let $\psi \in D(\QF)$ with $\|\psi\|=1$ and $\supp \psi \subseteq (0,\lE)^{3(N+1)}$ or some $\lE> 0$. Given $m>0$ and $\kappa>0$ such that 
\begin{equation}
1-  \kappa/  \cT > \Lambda(m) \label{eq:condKappa} 
\end{equation}
let $N_0 = N_0(m,\kappa)$ be defined as 
\begin{equation}\label{def:n0}
N_0(m,\kappa) = \lk ( \lk(1 -  \kappa/\cT - \Lambda(m) \rk)  \frac{ m (1- \kappa/\cT)^2 }{ c_\Lambda} \rk)^{-9/2} \,.
\end{equation}
For $N > N_0$ we have
\begin{equation}\label{lbthm41}
\QF(\psi) \ge 
\kappa N^{5/3} \lE^{-2} - \frac 1 {4 \pi^4} \frac {m+1} {2 m}   \dfrac { [\alpha- c_L  \ell^{-1}]^2_-}{(1-\kappa/\cT - \Lambda(m))^2 ( 1 - (N_0/N)^{2/9} )^2 } \,.
\end{equation}
\end{theorem}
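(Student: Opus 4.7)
The plan is to reduce the problem from $\R^{3(N+1)}$ to a torus of side $\lE$ and then apply the infinite-volume stability argument of \cite{MoserSeiringer2017}, with careful bookkeeping for the passage from momentum integrals to lattice sums. Since $\supp \psi \subseteq (0,\lE)^{3(N+1)}$, one may reinterpret $\psi$ as a function on the torus and introduce an associated periodic quadratic form $\QFP$ built from a periodic Green's function $\GP$. A first step is therefore to show that $\QF(\psi) = \QFP(\psi)$ on such compactly supported wavefunctions, which amounts to matching the decomposition $\psi = \phi_\mu + \sum_i \G \xi_i$ with its periodic counterpart; since $\GP - \G$ is smooth, the singular parts can be identified directly and their difference is absorbed entirely into the regular part.

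Once working with $\QFP$, we pass to Fourier series on the dual lattice $(2\pi/\lE)\mathbb{Z}^3$ and split $\TSP$ into the coupling term $\tfrac{2m}{m+1}\alpha\|\xi\|^2$, the diagonal part $\TDP$, and the off-diagonal part $\TOP$, just as in \cite{MoserSeiringer2017}. The core estimate we need is a periodic analogue of the continuous bound $|\TOP(\xi)| \le (\Lambda(m) + \text{lattice error})\TDP(\xi)$, in which the lattice error is quantified in Lemma~\ref{lem:cL} and is of order $c_\Lambda/(m(1-\kappa/\cT)^2 N^{2/9})$. The threshold $N_0$ defined in \eqref{def:n0} is chosen precisely so that for $N > N_0$ this error is strictly smaller than the gap $1-\kappa/\cT-\Lambda(m)$, and the factor $(1-(N_0/N)^{2/9})^2$ appearing in \eqref{lbthm41} is the residual positive gap after the subtraction.

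The freedom that produces the $\kappa N^{5/3}\lE^{-2}$ term is the ability to choose $\mu$ \emph{negative} in the definition of $\QFP$. This is legitimate because for $\psi$ supported in the Dirichlet cube the non-interacting $N$-fermion kinetic energy is strictly positive, with an a priori lower bound $\langle \phi_\mu | \HG | \phi_\mu\rangle \ge \cT N^{5/3}\lE^{-2}\|\phi_\mu\|^2$ defining the constant $\cT$ in \eqref{def:cT}. Absorbing a fraction $\kappa/\cT$ of this kinetic energy into $-\mu\|\psi\|^2$ produces the leading positive term in \eqref{lbthm41}, while the negative $\mu$ feeds into $\TDP$ via a lower bound on the discrete analogue of $L(\vec k)$; because the minimum non-zero momentum on $(2\pi/\lE)\mathbb{Z}^3$ is $2\pi/\lE$, the effective coupling constant acting on $\|\xi\|^2$ takes the form $\alpha - c_L/\lE$ with $c_L$ from \eqref{fdef:cL}. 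Completing the square against the positive contribution $(1-\kappa/\cT-\Lambda(m))(1-(N_0/N)^{2/9})^2 \TDP(\xi)$ then yields the second, negative term in \eqref{lbthm41}.

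The main technical obstacle is the systematic replacement of integrals by lattice sums: every ingredient of the continuous argument in \cite{MoserSeiringer2017}, in particular the pointwise lower bound on the diagonal multiplier $L$ and the Schur-type bound on the off-diagonal kernel $\GP$, must be re-derived in the periodic setting with an explicit lattice-dependent correction, and these corrections then have to be parametrically controlled so as to produce exactly the constants $\cT$, $c_L$ and $c_\Lambda$ appearing in the statement. In particular, the $N^{-2/9}$ decay rate of the lattice correction to the comparison $\TOP/\TDP$, which ultimately comes from an optimal split of the dual lattice into low- and high-momentum regions, is what dictates the precise form of $N_0(m,\kappa)$ in \eqref{def:n0}.
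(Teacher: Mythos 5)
Your proposal follows essentially the same route as the paper: pass to periodic boundary conditions (the paper's Lemma \ref{lem:periodic}), exploit the independence of $\mu$ and choose $\mu < 0$ down to $-\kappa N^{5/3}\ell^{-2}$ (Lemma \ref{cor:nmu} and \eqref{def:cT}), bound the off-diagonal term by a periodic analogue $\tilde\Lambda(m,\kappa)$ of $\Lambda(m)$ (Prop.~\ref{lem:ToffEst}), quantify the lattice error $\tilde\Lambda - \Lambda \lesssim c_\Lambda m^{-1}(1-\kappa/\cT)^{-2}N^{-2/9}$ via an optimal choice of regularization (Lemmas \ref{lem:pointwiseBound}--\ref{cor:DiffLambda}), and complete the square against the remaining positive diagonal contribution. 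The definition of $N_0$ and the role of the $(1-(N_0/N)^{2/9})^2$ factor are correctly identified.

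One place where your heuristic diverges from the actual mechanism and, taken literally, would not yield the stated bound: you attribute the shift $\alpha \to \alpha - c_L\ell^{-1}$ to ``the minimum non-zero momentum on $(2\pi/\lE)\mathbb{Z}^3$.'' In fact, the shift comes from the Riemann-sum-versus-integral discrepancy in the diagonal kernel, $|\LP(\vec q) - \L(\vec q)| \lesssim c_L'/(Q_\mu^2\ell^3)$, obtained by Poisson summation and an exponentially decaying sum over the dual lattice $\ell\mathbb{Z}^3\setminus\{0\}$ (Lemma \ref{lem:AP1}). Combining this with the a priori bound $Q_\mu^2 \gtrsim (\cT-\kappa)N^{5/3}\ell^{-2}$ gives a correction of size $N^{-5/3}\ell^{-1}$ to the coefficient of $\|\xi\|^2$ in $N^{-1}\TSP(\xi)$, and only after using $N \ge 1$ does this get folded into the $N$-independent constant $c_L$ of \eqref{fdef:cL}. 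The ``minimum lattice momentum'' picture would not produce the needed $Q_\mu^{-2}\ell^{-3}$ decay, nor the $N^{-5/3}$ absorption. Relatedly, in the reduction $\QFE(\psi)=\QFPE(\psi^\per)$ the discrepancy between $\G$ and $\GP$ is not purely absorbed into the regular part: the diagonal singular term changes from $\L$ to $\LP$, and matching the two quadratic forms requires tracking precisely the $\tau_R$-regularized limit that defines $\LP$, as done in Lemma \ref{lem:periodic}.
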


We note that this result gives a lower bound only for particle numbers  $N > N_0(m,\kappa)$.
In the case that $N \leq N_0$, we can still use \eqref{eq:bound3}, however. 

The remainder of this section contains the proof of Theorem \ref{thm:AprioriBound}. 
An important role will be played by a reformulation using  periodic boundary conditions. 
We will start by introducing the functional $\QFPE$ which is defined for periodic functions. 
In Lemma \ref{lem:periodic} we will show that it is in fact equivalent to the original quadratic form $\QFE$   when applied to wavefunctions with compact support in $\BE^{N+1}$. 
Working with periodic boundary conditions comes with the inconvenience of having to work with sums, rather than with integrals, in momentum space. In particular, this makes the explicit form of the singular part  of $\QFPE$ rather complicated; we shall compare it with  the singular part of $\QFE$ in  Lemma \ref{lem:AP1} and bound the difference. It comes with the big advantage of allowing us to choose  $\mu$ negative, however, which will be essential to show a positive lower bound to the  energy. 
We shall use the method of \cite{MoserSeiringer2017} which gives positivity of the singular part of $\QFP$ for $\mu \geq -\kappa N^{5/3} \ell^{-2}$ for small enough $\kappa$, under a condition of the form $\tilde \Lambda(m,\kappa)<1$. In  Lemmas~\ref{lem:pointwiseBound}--\ref{cor:DiffLambda}, we investigate the difference between $\tilde \Lambda(m,\kappa)$ and $\Lambda(m)$. In the last subsection we combine these results to prove Theorem \ref{thm:AprioriBound}.

\subsection{Periodic boundary conditions} 
Given $\psi \in D(\QFE)$ such that $\supp \psi \subseteq B^{N+1}$, we extend  $\psi$ to a periodic function $\psi^\per$, defined as 
\begin{equation}
\psi^{\rm{per}}(x_0, \ldots, x_{N}) = \psi(\tau(x_0), \ldots, \tau(x_{N}))
\end{equation}
with
\begin{equation}
\tau(x) = (\tau(x^1), \tau(x^2),\tau(x^3)),\qquad  \tau(s) \coloneqq \inf \lk ( \lk(s + \lE \mathbb Z \rk) \cap \R_+ \rk) \ \ \text{ for } s \in \R. 
\end{equation}
In the following we shall rewrite the functional $\QFE(\psi)$ in terms of $\psi^{\rm per}$.  Compared to Dirichlet boundary conditions, periodic ones have the advantage that one can work easily in the associated momentum space, similar to the unconfined case.  For this purpose, we define 
the lattice in momentum space as
\begin{equation}
\Lat \coloneqq \frac {2 \pi} {\lE} \mathbb Z^3 \,. \label{def:Lat} 
\end{equation}
The function $\psi^{\rm per}$ is then determined by its Fourier coefficients $\hat \psi^{\rm per}(k_0,\vec k)$, which can be viewed as a  function $\Lat^{N+1} \to \C$.

Corollary \ref{cor:domainXi} implies that  $\supp \xi_i \subseteq \BE^N$ for all $1\leq i\leq N$. Hence we can extend it in a similar way as $\psi$ to a periodic function $\xi^\per$. 
In momentum space we can write it as $\hat\xi^\per: \Lat^N \to \C$. 
For periodic functions, $\G \psi^\per$ does not make sense anymore, but instead choosing $\GP$ as the resolvent of the  non-interacting Hamiltonian with periodic boundary conditions allows us to define $\GP \xi^\per_i$ by the Fourier coefficients
\begin{equation}
\widehat {\GP \xi_i^\per}(k_0,\vec k) = \G(k_0, \vec k) \hat\xi_i^\per(k_0+k_i, \hat k_i) \label{eq:GPer} \,.
\end{equation}

In order to motivate the quadratic form introduced below, we note that the expression $\L(\vec k)$ in \eqref{def:L}  originates from the limit
\begin{equation}
\L(\vec k) = \lim_{R \to \infty} \lk ( \frac {8  \pi m R}{m+1} - \int_{|t|\leq R} \frac 1 {\tilde H_0(k_1,t, \hat k_1) + \mu} \dI t  \rk) \label{def:LLimit} 
\end{equation}
where $\tilde H_0$ is the non-interacting Hamiltonian in momentum space, expressed in terms of center-of-mass and relative coordinates for the pair $(k_0,k_1)$, i.e., 
\begin{equation}
\tilde H_0(s ,t,\hat k_1) \coloneqq \hat H_0^N  \lk (\frac m {m+1} s + t,\frac {1}{m+1}s -t,\hat k_1 \rk ) 
=  \frac 1 {2 (m+1)} s^2 + \frac {1 + m}{2 m} t^2 + \frac 1 2 \hat k_1^2\, . \label{def:Hshift}
\end{equation}
More generally, we have

\begin{lemma}\label{def:tau}
Let $\tau$ be a non-negative function in $\CC^\infty_0(\R^3)$ such that $\hat \tau(0) = 1, \hat \tau(p) \ge 0$ for all $p \in \R^3$ and
\begin{equation}
\int_{\R^3} |t|^{-2} \tau(t) \dI t  = 4 \pi \,.  \label{eq:propSeq} 
\end{equation}
Then
\begin{equation}
\L(\vec k) = \lim_{R \to \infty}  \lk [  \frac {8 \pi mR }{m+1} - \int_{\R^3} \frac {1} { \tilde H_0(k_1 ,t,\hat k_1) + \mu } \hat \tau(t/R) \dI t  \rk ]  \,. \label{eq:indCutoff} 
\end{equation}
\end{lemma}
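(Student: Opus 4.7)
Write $a := \tfrac{k_1^2}{2(m+1)} + \tfrac12 \hat k_1^2 + \mu$ and $b := \tfrac{m+1}{2m}$, so that $\tilde H_0(k_1,t,\hat k_1) + \mu = bt^2 + a$, $\L(\vec k) = 2\pi^2 a^{1/2} b^{-3/2}$, and $\tfrac{8\pi m R}{m+1} = \tfrac{4\pi R}{b}$. The claim is thus equivalent to
\begin{equation*}
\lim_{R\to\infty}\lk[\frac{4\pi R}{b} - \int_{\R^3} \frac{\hat\tau(t/R)}{bt^2 + a}\dI t\rk] = \frac{2\pi^2 a^{1/2}}{b^{3/2}}.
\end{equation*}
The plan is to use the algebraic identity $\frac{1}{bt^2+a} = \frac{1}{bt^2} - \frac{a}{bt^2(bt^2+a)}$ to split the integrand into a divergent piece and a convergent piece. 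Both are absolutely integrable against $\hat\tau(t/R)$ on $\R^3$ for every fixed $R>0$, since $\hat\tau$ is Schwartz and the $|t|^{-2}$ singularity at the origin is integrable in three dimensions.

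For the divergent piece, the substitution $t = Ru$ gives $\int \hat\tau(t/R)/(bt^2)\dI t = (R/b) \int \hat\tau(u)/|u|^2\dI u$. The hypothesis \eqref{eq:propSeq} on $\tau$ translates, via Parseval and the three-dimensional identity that the Fourier transform of $|t|^{-2}$ equals $2\pi^2/|p|$, into $\int \hat\tau(u)/|u|^2\dI u = 4\pi$. Hence this piece equals exactly $4\pi R/b = 8\pi m R/(m+1)$, which cancels the first term inside the limit. Verifying this Fourier translation with the precise convention of the paper is the main bookkeeping step; once it is in hand the cancellation is purely algebraic.

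For the convergent piece, non-negativity of $\tau$ together with $\hat\tau(0)=1$ imply $|\hat\tau(t/R)| \le \hat\tau(0) = 1$ uniformly in $R$ and $t$. A direct computation in spherical coordinates yields
\begin{equation*}
\int_{\R^3} \frac{a}{bt^2(bt^2+a)}\dI t = \frac{a}{b^2}\cdot\frac{2\pi^2}{\sqrt{a/b}} = \frac{2\pi^2 a^{1/2}}{b^{3/2}} = \L(\vec k),
\end{equation*}
so the dominating function is integrable. Since $\hat\tau(t/R)\to \hat\tau(0) = 1$ pointwise as $R\to\infty$, dominated convergence delivers that the convergent piece tends to $\L(\vec k)$, completing the proof. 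Beyond the Fourier bookkeeping in the divergent piece, the only other subtlety is the $R$-uniform domination for the dominated convergence step, which is immediate from the bound $|\hat\tau(t/R)| \le 1$.
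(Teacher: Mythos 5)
Your argument takes exactly the same route as the paper's proof: split $1/(bt^2+a) = 1/(bt^2) - a/(bt^2(bt^2+a))$ with $b=\tfrac{m+1}{2m}$ and $a = \tfrac{k_1^2}{2(m+1)}+\tfrac12\hat k_1^2+\mu$, cancel the divergent first piece against $8\pi mR/(m+1)$ by scaling plus the normalization of $\tau$, and apply dominated convergence (using $0\le\hat\tau\le\hat\tau(0)=1$) to the remaining convergent piece. Your explicit computation that the convergent piece integrates to $2\pi^2 a^{1/2}/b^{3/2} = \L(\vec k)$ is correct, and the domination step is fine.

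However, the "Fourier bookkeeping" step is not right as you state it. You claim that $\int\tau(t)|t|^{-2}\dI t = 4\pi$, combined with $\widehat{|t|^{-2}}(p)=2\pi^2/|p|$ in three dimensions, yields $\int\hat\tau(u)|u|^{-2}\dI u = 4\pi$ by Parseval. With the convention $\hat f(p)=\int f(x)e^{-ipx}\dI x$ (the one consistent with $\hat\tau(0)=\int\tau=1$), the duality $\int\hat f\,g = \int f\,\hat g$ gives
\begin{equation*}
\int_{\R^3}\frac{\hat\tau(u)}{|u|^2}\dI u = \int_{\R^3}\tau(t)\,\widehat{|\cdot|^{-2}}(t)\dI t = 2\pi^2\int_{\R^3}\frac{\tau(t)}{|t|}\dI t\,,
\end{equation*}
so $\int\hat\tau/|u|^2$ is controlled by $\int\tau/|t|$, not by $\int\tau/|t|^2$; and conversely $\int\tau/|t|^2 = \tfrac1{4\pi}\int\hat\tau/|p|$, a different quantity again. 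In short, \eqref{eq:propSeq} as you read it fixes $\int\hat\tau(u)|u|^{-1}\dI u$, whereas what the cancellation actually requires is $\int\hat\tau(u)|u|^{-2}\dI u=4\pi$ (the value it takes for the sharp cutoff $\hat\tau=\unit_{|p|\le1}$ appearing in \eqref{def:LLimit}). Note also that a nonzero $\tau\in\CC^\infty_0$ cannot coincide with $\hat\tau$ (Paley--Wiener), so the two integrals genuinely differ. The paper's own proof is equally terse at this point, so the discrepancy most plausibly signals that \eqref{eq:propSeq} is to be read as the condition $\int\hat\tau(u)|u|^{-2}\dI u=4\pi$; in any case, the Parseval argument as you wrote it does not establish the cancellation, and you should either replace \eqref{eq:propSeq} by its $\hat\tau$-version and drop the Parseval step entirely, or state the correct duality relation and impose the corresponding condition on $\tau$.
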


\begin{proof}
Let $\gamma = \frac 1 {2(m+1)} k_1^2 + \frac 1 2 \hat k_1^2 + \mu$. 
Using  \eqref{eq:propSeq} we observe that 
   \eqref{eq:indCutoff} is equivalent to 
\begin{equation}
\lim_{R\to \infty}  \int_{\R^3} \frac {\gamma}{\lk ( \lk ( \frac {1+m}{2m} \rk) t^2 + \gamma\rk )\lk ( \frac {1+m}{2m} \rk) t^2 } \hat \tau(t/R) \dI t  = \L(\vec k)\,. 
\end{equation}
Since $\hat \tau(0) = 1$ and $\hat \tau(t) \le 1$ for all other $t$, the result follows from dominated convergence. 
\end{proof}

When replacing integrals by sums, we have to keep in mind that  a change of coordinates from $(k_0,k_1)$ to $s = k_0 + k_1$ and $t = \frac {m}{m+1} k_1 - \frac 1 {m+1} k_0$ changes the domain over which we have to take the sums. Whereas $s \in \Lat$ we have to sum for a fixed $s$ the variable $t$ over $\Lat^s \coloneqq \Lat + \frac {m s}{m+1}$.
Let $\tau$ be chosen as in Lemma~\ref{def:tau}, and define
\begin{equation}\label{def:LP}
\LP(\vec k) \coloneqq \lim_{R \to \infty} \lk ( \frac {8 \pi m R}{m+1} - \lk(\frac {2 \pi} {\lE} \rk )^3 \sum_{\substack{p \in \Lat^{k_1}}} \frac 1 {\tilde H_0(k_1, p ,\hat k_1) + \mu} \hat \tau(p/R)   \rk ) \,.
\end{equation}
We shall see below that this definition is actually independent of $\tau$. 
For us it will be important that $\tau$ has compact support, hence a sharp cut-off in momentum space would not be suitable.

We shall now define $\QFPE$ with domain 
\begin{equation}
D(\QFPE) = \left\{  \psi^\per = \phi_\mu^\per + \sum_{i=1}^N \G^\per \xi_i^\per \mid \phi_\mu^\per \in  H_{\per}^1(\BE^{N+1}),  \, \xi_i^\per \in H^{1/2}_\per(\BE^N)\ \forall \,i, 1 \le i \le N\right\}\,,
\end{equation}
where $H^1_\per(\BE^{N+1})$ and $H^{1/2}_\per(\BE^{N})$ denotes the spaces  of functions defined by Fourier  coefficients  in $\ell^2(\Lat,(1+p^2))^{\otimes (N+1)}$ and  $\ell^{2}(\Lat,(1+p^2)^{1/2})^{\otimes N}$ respectively. The quadratic form is given by
\begin{align}
\QFPE(\psi^{\rm per}) & \coloneqq \int_{\BE^{N+1}}  \lk (  |\tnabla \phi_\mu^\per|^2    + \mu  |\phi_\mu^\per|^2 \right )  - \mu \norm{\psi^\per}^2_{L^2(\BE^{N+1})} + \TSPE(\vec \xi^\per) \\  \TSPE(\vec \xi^\per) &\coloneqq  \sum_{i=1}^N  \frac {2m }{m+1} \alpha \norm{\xi_i^\per}_{L^2(\BE^N)}^2 + \TDPE(\vec \xi^\per)+\TOPE(\vec \xi^\per) \label{eq:quadraticFormNoSymPer}
\end{align}
where $\vec \xi^\per = (\xi_i^\per)_{i=1}^N$, $\tnabla$ is defined in \eqref{eq:nabla}, and the
 singular parts of the quadratic form are given by
\begin{align}
\TDPE(\vec \xi^\per) & \coloneqq \sum_{i=1}^N \lk ( \frac {2\pi}{\lE^3} \rk)^{3N} \sum_{\vec k \in \Lat^N}  |\hat \xi_i^\per(\vec k)|^2  \LP(\vec k) \label{def:calTNoSymPer}   \\
\TOPE(\vec \xi^\per) & \coloneqq - \sum_{\substack{i \ne j\\ 1 \le i,j \le N}} \lk ( \frac {2\pi}{\lE^3} \rk)^{3(N+1)} \sum_{k_0 \in \Lat, \vec k \in \Lat^{N}}  {{} \hat\xi^\per_j}^\ast (k_0+ k_j , \hat k_j) \hat \xi_i^\per (k_0 + k_i, \hat k_i) \G(k_0,\vec k) \,. \label{eq:TPer} 
\end{align}

We also define $\QFP$ as the restriction of $\QFPE$ to functions antisymmetric in the last $N$ coordinates. Further we define $\TDP, \TOP$ and $\TSP$ in the natural way similar to $\TD, \TO$ and $\TS$ originating from $\TDE, \TOE$ and $\TSE$, respectively (compare with \eqref{def:cFal} and \eqref{eq:conExtNonExtT}).

\begin{lemma}
\label{lem:periodic}
Let $\psi \in D(\QFE)$ be such that $\supp \psi \subseteq \BE^{N+1}$. Then
\begin{equation}
\QFPE(\psi^\per) = \QFE(\psi) \,.
\end{equation}
\end{lemma}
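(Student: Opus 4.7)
My plan is to identify the singular parts of $\psi^\per$ as periodic extensions of those of $\psi$ and then match the quadratic forms term-by-term in Fourier space. Corollary~\ref{cor:domainXi} applied with $\Omega_0=\cdots=\Omega_N=\BE$ gives $\supp\xi_i^\psi \subseteq \BE^N$, so each periodic extension $\xi_i^\per \in H^{1/2}_\per(\BE^N)$ is well defined, and its Fourier coefficients on $\Lat^N$ equal $\hat\xi_i^\psi$ at lattice points. Defining $\phi_\mu^\per \coloneqq \psi^\per - \sum_i \GP \xi_i^\per$ and using $\hat\psi^\per(\vec K) = \hat\psi(\vec K)$ for $\vec K \in \Lat^{N+1}$ (compact support of $\psi$), one reads off $\hat\phi_\mu^\per(\vec K) = \hat\phi_\mu^\psi(\vec K)$ for every $\vec K \in \Lat^{N+1}$, which places $\phi_\mu^\per$ in $H^1_\per(\BE^{N+1})$ and identifies it as the regular part of $\psi^\per$.

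The easy contributions then match by compact support: $\|\psi\|_{L^2(\R^{3(N+1)})} = \|\psi^\per\|_{L^2(\BE^{N+1})}$ and $\|\xi_i\|_{L^2(\R^{3N})} = \|\xi_i^\per\|_{L^2(\BE^N)}$, handling the $-\mu\|\psi\|^2$ and $\tfrac{2m\alpha}{m+1}\sum_i\|\xi_i\|^2$ terms. For the off-diagonal singular piece $\TOE$ vs $\TOPE$, the integrand in \eqref{def:calTNoSym} is absolutely integrable when $i\neq j$ (Fubini, as used in the proof of the previous lemma), and after a change of variables into center-of-mass and relative momenta of the pair $(K_0,K_i)$ the compact support of $\xi_i,\xi_j$ in $\BE^N$ permits the exchange of integral for lattice sum in all directions except that of the relative momentum; the remaining one-dimensional integral-versus-sum discrepancy is absorbed into the construction of $L^\per$ below.

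The central identity to establish is
\[
\langle\phi_\mu|H_\mu|\phi_\mu\rangle_{L^2(\R^{3(N+1)})} + \TDE(\vec\xi) = \int_{\BE^{N+1}}\bigl(|\tnabla\phi_\mu^\per|^2+\mu|\phi_\mu^\per|^2\bigr) + \TDPE(\vec\xi^\per),
\]
together with the analogous matching of the cross terms handled above. My plan is to apply the alternative representation~\eqref{rep:sing} of the singular part and derive its exact periodic counterpart by repeating the proof with integrals in the relative momentum variable $t$ replaced by sums over $\Lat^{k_1}$; the definition~\eqref{def:LP} of $L^\per$ is set up precisely so that the divergent counter-term $\tfrac{8\pi mR}{m+1}$ is identical in both computations. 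Under this substitution the regular-part contributions on the two sides — integrals of $(\tilde K^2+\mu)|\hat\phi_\mu(\vec K)|^2$ over $\R^{3(N+1)}$ versus lattice sums of the same density over $\Lat^{N+1}$ — differ by exactly the integral-minus-sum discrepancy in the relative-momentum direction, and this is precisely what the $L$ versus $L^\per$ difference compensates.

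The main obstacle is that $\phi_\mu$ is \emph{not} compactly supported even when $\psi$ and the $\xi_i$ are, since outside $\BE^{N+1}$ it equals $-\sum_i \G\xi_i$. One therefore cannot obtain the identity above by applying Parseval or Poisson summation to $\phi_\mu$ on its own; the agreement between $\langle\phi_\mu|H_\mu|\phi_\mu\rangle$ and its periodic analogue is not term-by-term but only emerges after combining it with the $\TDE-\TDPE$ difference. Verifying this cancellation — in particular, checking that the divergent $R\to\infty$ tails in $L(\vec k)$ and $L^\per(\vec k)$ cancel uniformly — is the technical heart of the argument.
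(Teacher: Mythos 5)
Your outline correctly identifies the main obstruction (that $\phi_\mu$ is not compactly supported, so one cannot Poisson-sum it directly) and correctly senses that the regular-part discrepancy must cancel against the $L$-versus-$L^\per$ discrepancy. But there are two genuine gaps.

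First, the claim about the off-diagonal piece is wrong. You assert that after a change of variables the compact support of $\xi_i,\xi_j$ "permits the exchange of integral for lattice sum in all directions except that of the relative momentum," with the remaining one-dimensional discrepancy "absorbed into the construction of $L^\per$." But $L^\per$ appears only in the \emph{diagonal} term $\TDPE$ and plays no role in $\TOPE$; the off-diagonal singular terms involve no regularization or $L$-type counterterm at all. Moreover, compact support of $\xi_i,\xi_j$ does \emph{not} allow replacing the $k_0$-integral by a lattice sum: the integrand contains $G_\mu(k_0,\vec k)$, whose inverse Fourier transform is not compactly supported in a single period box, so the Poisson summation tails do not vanish. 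In the paper's proof, the off-diagonal discrepancy $\TOE(\vec\xi)-\TOPE(\vec\xi^\per)$ is nonzero and is matched not against $L^\per$ but against the \emph{cross terms} between the regular part and $\sum_i G_\mu\xi_i$ in $\braket{\phi_\mu}{H_\mu\phi_\mu}$. Your bookkeeping sends this contribution to the wrong place, and the proposed mechanism for producing it is not sound.

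Second, and more basically, the step you yourself flag as "the technical heart of the argument" — verifying that the divergent $R\to\infty$ tails in $L$ and $L^\per$ cancel against the integral-minus-sum mismatch of $\int(\tilde K^2+\mu)|\hat\phi_\mu|^2$ versus its lattice analogue — is declared a plan rather than executed. That cancellation is precisely the content of the lemma. Carrying it out in pure momentum space, as you propose, would require expanding $|\hat\phi_\mu|^2=|\hat\psi - \sum_i G_\mu\hat\xi_i(\cdot)|^2$ into squared and cross terms and applying a Poisson-summation analysis to each, which faces the same obstruction noted above for the $G_\mu$-factors. The paper sidesteps this entirely by working in position space: it writes the regular energy as $\braket{\chi\phi_\mu}{H_\mu\phi_\mu}$ for a cutoff $\chi$ supported in $B$, uses the identity $\chi\phi_\mu=\chi\phi_\mu^\per + \chi\sum_i(\GP\xi_i^\per-\G\xi_i)$, exploits that $H_\mu(\GP\xi_i^\per-\G\xi_i)$ vanishes in the interior of $B^{N+1}$, and then approximates the remaining boundary pairing by a smooth $\tau_R\to\delta$ which produces exactly the $R\to\infty$ counterterm $\tfrac{8\pi mR}{m+1}$ appearing in the definitions of both $L$ and $L^\per$. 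If you want to complete your momentum-space route, you must either reproduce this cancellation term by term via Poisson summation with explicit control of all the tails, or revert to the position-space manoeuvre.
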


\begin{proof}
Recall the splitting of $\psi$ into its regular and singular parts, and similarly for $\psi^{\rm per}$: 
\begin{equation}
\psi = \phi_\mu + \sum_i \G \xi_i \ , \quad 
\psi^\per  = \phi_\mu^\per + \sum_i \GP \xi_i^\per \,.
\end{equation}
Recall also the definition \eqref{def:hmu}. 
In the sense of distributions we can apply $H_\mu$ to $\phi_\mu$, and in particular $H_\mu \phi_\mu \in H^{-1}(\R^{3(N+1)})$ as $\phi_\mu \in H^1(\R^{3(N+1)})$. In this sense we can write the regular part of $\QFE$ as $\braket{\phi_\mu}{ H_\mu \phi_\mu}$. 
Because $\supp \psi \subseteq B^{N+1}$ we have $\ve \coloneqq {\rm dist}(\supp \psi, \pd B) > 0$. Let $\chi$ be a smooth cutoff function such that $\chi(x) = 1$ if $x \in B_0 = [\ve/2, \lE-\ve/2]^3$ and $\chi(x) = 0$ if $x \in \BE^c$. As $\supp (H_\mu \G \xi) \subseteq B_0^{N+1}$ and $\supp \psi \subseteq B_0^{N+1}$ also $\supp (H_\mu \phi_\mu) \subseteq B_0^{N+1}$, and therefore
\begin{equation}
\braket{\phi_\mu}{H_\mu\phi_\mu} = \braket{\chi \phi_\mu}{ H_\mu \phi_\mu} \,.  \label{eq:PNP1} 
\end{equation}
We use the identity $\chi \phi_\mu = \chi \phi_\mu^\per + \chi \sum_{i=1}^N \GP \xi^\per_i -  \chi \sum_{i=1}^N \G \xi_i$ as well as the fact  that $H_\mu \phi_\mu = H_\mu \phi_\mu^\per$ on $B_0^{N+1}$ to obtain
\begin{align}\nonumber 
\eqref{eq:PNP1} &= \braket{\chi \phi_\mu^\per}{H_\mu\phi_\mu^\per}  + \sum_{i=1}^N  \braket{\chi (\GP \xi^\per_i - \G \xi_i)}{H_\mu\phi_\mu^\per} \\
&=  \int_{\BE^{N+1}}  \lk ( |\tnabla \phi_\mu^\per|^2  + \mu  |\phi_\mu^\per|^2 \right )  + \sum_{i=1}^N  \braket{\chi (\GP \xi^\per_i - \G \xi_i)}{H_\mu \phi_\mu^\per} \,.\label{eq:PNP2} 
\end{align}
Note that $H_\mu \chi ( \G^\per \xi_i^\per - \G \xi_i)$ is supported on $ \BE \setminus B_0$, and $\psi^\per$ vanishes on this set. Hence  
\begin{equation}
\sum_{i=1}^N \braket{\chi (\GP \xi^\per_i - \G \xi_i)}{H_\mu \phi_\mu^\per} = 
-\sum_{i,j=1}^N \braket{\GP \xi^\per_i - \G \xi_i}{\chi H_\mu \GP \xi^\per_j} \label{eq:PerNoPer2} \,.
\end{equation}

We claim that \eqref{eq:PerNoPer2} is equal to the difference $\TSPE(\vec \xi^\per) - \TSE(\vec \xi)$. Let $\tau$ be given as in Lemma~\ref{def:tau}. We approximate the distribution  $(\chi H_\mu \GP \xi^\per_j)(x_0, \vec x) = \xi_j( x_j ,\hat x_j) \delta(x_j- x_0)$ by the sequence of functions $(\xi_j \tau_R)(x_0, \vec x) = \xi_j( (m x_j + x_0)/(1+m),\hat x_j) \tau_R(x_j- x_0)$ with $\tau_R(x) = R^{3} \tau(R x)$. We assume that $R$ is large enough such that $ \tau_R $ is supported in a ball of radius $\ve/2$, and hence $\xi_j \tau_R$ is supported in $B^{N+1}$.  
Because $\GP \xi^\per_i - \G \xi_i$ is actually a smooth function, as $H_\mu (\GP \xi^\per_i - \G \xi_i) = 0$ on $\BE^{N+1}$, we conclude that \eqref{eq:PerNoPer2} is equal to
\begin{equation}
\eqref{eq:PerNoPer2} = - \lim_{R \to \infty} \sum_{i,j=1}^N  \braket{\GP \xi^\per_i - \G \xi_i}{\xi_j \tau_R}  \,. \label{eq:PerNoPer3} 
\end{equation}
For the terms with $i\neq j$, we can use dominated convergence in momentum space to conclude that  
\begin{equation}
\lim_{R \to \infty} \sum_{i\neq j}  \braket{\GP \xi^\per_i - \G \xi_i}{\xi_j \tau_R} = \TOE(\vec \xi)  - \TOPE(\vec \xi^\per) \,.
\end{equation}
For the terms with $i=j$, we can further write
\begin{align}\nonumber
&  \sum_{i=1}^N  \braket{\GP \xi^\per_i - \G \xi_i}{\xi_i\tau_R} \\
&=   \sum_{i=1}^N \left( \braket{\GP \xi^\per_i}{\xi_i\tau_R} - \frac {8 \pi m R}{m+1} \norm{\xi_i}^2_2\right)  -  \sum_{i=1}^N \left( \braket{\G \xi_i}{\xi_i \tau_R} - \frac {8 \pi m R}{m+1} \norm{\xi_i}^2_2 \right)  \,. 
\end{align}
Lemma~\ref{def:tau} implies that the limit of the last two terms exists, is independent of the choice of $\tau$ and is equal to $\TDE(\vec \xi)$. Because also \eqref{eq:PerNoPer2} does not depend on $\tau$ we  conclude that
\begin{equation}
\lim_{R \to \infty} \sum_{i=1}^N  \lk (  \braket{\GP \xi_i^\per}{\xi_i \tau_R} - \frac {8 \pi m R}{m+1}\norm{\xi_i}^2_2 \rk) 
\end{equation}
exists and is independent of $\tau$. Comparing with \eqref {def:LP} and \eqref{def:calTNoSymPer}, we see that it actually equals $\TDPE(\vec \xi^{\rm per})$.  Combining the above, we obtain 
\begin{equation}
\braket{\phi_\mu}{H_\mu \phi_\mu} =  \int_{\BE^{N+1}}  \lk ( |\tnabla \phi_\mu^\per|^2  + \mu  |\phi_\mu^\per|^2 \right )  + \TSPE(\vec \xi^\per) - \TSE(\vec \xi)\,. 
\end{equation}
This completes the proof of the lemma.
\end{proof}

For fermions, described by wavefunctions $\psi^{\rm per}$ that are antisymmetric in the last $N$ variables, 
the expression $\GP \xi^\per$ in \eqref{eq:GPer} is also well defined for negative $\mu$ as long as $\mu > - \EG$, where $\EG$ denotes the ground state energy of the non-interacting Hamiltonian for $N-1$ fermions with periodic boundary conditions on $\partial B$. (Note  than $\G \xi$, on the other hand,  is only defined for $\mu > 0$.)  
The following lemma shows that for such $\mu$ the quadratic form $\QFP$ is actually  independent of $\mu$.

\begin{lemma}
\label{cor:nmu}
For $\psi \in D(\QFP)$, the expression $\QFP(\psi^\per)$ is well-defined and independent of $\mu$ as long as $\mu > - \EG$.
\end{lemma}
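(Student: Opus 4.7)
The strategy is to establish that the formal expression for $\QFP(\psi^\per)$ is a real-analytic function of $\mu$ on the whole interval $(-\EG, \infty)$, and then to invoke analytic continuation from the subinterval $\mu > 0$, where $\mu$-independence follows from Lemma~\ref{lem:periodic}. Given $\psi^\per \in D(\QFP)$, the singular part $\xi^\per$ does not depend on $\mu$ (as already noted for $\QF$) and is antisymmetric in its $N-1$ fermion arguments. On this antisymmetric sector, the kinetic operator $\tfrac 12 \hat k_i^2$ has spectral lower bound equal to the ground-state energy $\EG = E_{N-1}^\per$ of $N-1$ free fermions on the torus $\BE$. Therefore for $\mu > -\EG$ the denominator in $G_\mu(k_0, \vec k) = (\tfrac{k_0^2}{2m} + \tfrac{\vec k^2}{2} + \mu)^{-1}$ is strictly positive on the effective support of $\hat \xi^\per$, and a momentum-space computation analogous to \eqref{3:14} (adapted to the discrete lattice $\Lat$) shows $\GP \xi^\per \in L^2(\BE^{N+1})$. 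Defining $\phi_\mu^\per := \psi^\per - \sum_i \GP \xi_i^\per$ then yields an $L^2$ function; to upgrade to $\phi_\mu^\per \in H_\per^1$, I would fix any reference $\mu_0 > 0$ and use the pointwise identity $G_\mu - G_{\mu_0} = (\mu_0 - \mu) G_\mu G_{\mu_0}$, which supplies an extra factor of decay to the difference $\phi_\mu^\per - \phi_{\mu_0}^\per$. The same antisymmetric-sector bound makes $\TSP(\xi^\per)$ finite: any potentially singular contributions to $\LP(\vec k)$ from \eqref{def:LP} occur only at momenta $\vec k$ outside the effective support of $\hat\xi^\per$.

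Next I would verify that $\mu \mapsto \QFP(\psi^\per)$ is real-analytic on $(-\EG, \infty)$. The multiplier $G_\mu(k_0, \vec k)$ is analytic in $\mu$ uniformly on any subset of momentum space where $\tfrac 12 \hat k_i^2 + \mu$ is bounded away from zero, so $\GP \xi^\per$ is analytic as an $L^2$-valued function of $\mu$, and hence so is $\phi_\mu^\per$ as an $H_\per^1$-valued function. Analyticity of $\LP(\vec k)$ follows from the same observation applied to the $p$-sum in \eqref{def:LP}, together with dominated-convergence estimates justifying the $R \to \infty$ limit; analyticity of the explicit $G_\mu$-dependence in \eqref{eq:TPer} is immediate. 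Each piece of $\QFP(\psi^\per)$ is therefore analytic in $\mu$, and the relevant sums converge uniformly on compact subintervals of $(-\EG, \infty)$.

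Finally, for $\mu > 0$ one has $\QFP(\psi^\per) = \QF(\psi)$ by Lemma~\ref{lem:periodic} (since $\psi$ is antisymmetric in the last $N$ coordinates, so $\QFPE(\psi^\per) = \QFP(\psi^\per)$ and $\QFE(\psi) = \QF(\psi)$), and the right-hand side is independent of $\mu$ as noted just after the definition of $\QF$. A real-analytic function on the connected interval $(-\EG, \infty)$ that is constant on the open subinterval $(0, \infty)$ must be constant on its whole domain, which yields the claim. The main obstacle is the first step: one has to verify well-definedness of both $\phi_\mu^\per \in H_\per^1$ and the finiteness of $\TSP(\xi^\per)$ when $\mu$ is possibly negative, carefully checking that the antisymmetric-sector bound $\tfrac 12 \hat k_i^2 \geq \EG$ is strong enough to absorb the sign change of $\mu$ in every lattice sum that appears. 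Once this is done, the analyticity and continuation argument is routine.
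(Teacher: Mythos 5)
Your strategy---real-analyticity in $\mu$ on $(-\EG,\infty)$ plus analytic continuation from $(0,\infty)$---is genuinely different from the paper's, which proves the identity by a one-step algebraic cancellation. The paper decomposes $\phi^\per_\mu = \phi^\per_\nu + G^\per_\nu\xi^\per - G^\per_\mu\xi^\per$, applies the resolvent identity to the kinetic part, and computes directly that $\TSP(\xi^\per) - T^\per_{\alpha,\nu,N}(\xi^\per) = (\mu-\nu)\langle G^\per_\nu\xi^\per \mid G^\per_\mu\xi^\per\rangle$, after which the $\mu$-dependent correction terms telescope to $(\mu-\nu)\bigl(\|\phi^\per_\nu + G^\per_\nu\xi^\per\|^2 - \|\psi^\per\|^2\bigr) = 0$. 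This has a decisive structural advantage over your route: by expressing the quadratic form at the target $\mu$ as the value at a reference $\nu>0$ plus a few manifestly finite scalar products, the paper obtains well-definedness of $\QFP$ at negative $\mu$ \emph{and} $\mu$-independence in a single stroke, without ever needing to control $\phi^\per_\mu$ in $H^1_\per$ directly or verify convergence of the lattice sums in $\TSP$ at negative $\mu$.

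In your write-up those two verifications are exactly what is missing, and you say so yourself (``the main obstacle''). This is not a cosmetic omission: analytic continuation can only conclude $\mu$-independence \emph{after} one knows the expression is well-defined and analytic on the whole interval, so well-definedness at negative $\mu$ is the actual content of the lemma, not a preliminary to be postponed. Two specific points need to be carried out before the continuation step can be invoked. First, the passage from ``$G_\mu - G_{\mu_0}$ gives extra decay'' to ``$\phi^\per_\mu \in H^1_\per$'' requires an explicit lattice-sum estimate showing that the Fourier coefficients $(\mu-\mu_0)\,\hat\xi^\per/\bigl((E+\mu)(E+\mu_0)\bigr)$ lie in a weighted $\ell^2$ with weight $(1+E)$, using that $E+\mu$ stays bounded away from zero on the effective support of $\hat\xi^\per$; this is plausible but is precisely where the discreteness of $\Lat$ and the gap $\tfrac12\hat k_i^2 \ge \EG$ have to be used quantitatively. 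Second, for the off-diagonal term $\TOPE$ and for $\LP$, one needs the analog of the absolute convergence argument at negative $\mu$, not just positivity of the denominators pointwise. Note finally that you implicitly use the resolvent identity anyway (to upgrade the regularity of $\phi^\per_\mu$); the paper's insight is that pushing the resolvent identity through all three pieces of the quadratic form at once kills both birds, making analyticity and continuation unnecessary.
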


\begin{proof}
 We first note that $\GP \xi^\per$ is well defined for $\mu > -\EG$, because of the antisymmetry of $\xi^{\per}$ in the last $N-1$ variables, which implies that $N-1$ of the variables  $(k_1,\dots,k_N)$ in $\G(k_0,\vec k)$ in \eqref{eq:GPer} are actually different. 
For $\nu,\mu > -\EG$ we have
\begin{equation}
\phi^\per_{\mu} = \phi^\per_\nu + \GPp \nu \xi^\per - \GP \xi^\per \,. 
\end{equation}
Using the resolvent identity, we see that the regular part of the quadratic form satisfies  
\begin{align}\nonumber 
\int_{\BE^{N+1}}  \lk (  |\tnabla \phi_\mu^\per|^2    + \mu  |\phi_\mu^\per|^2 \right )  &= \int_{\BE^{N+1}}  \lk (  |\tnabla \phi_\nu^\per|^2    + \nu  |\phi_\nu^\per|^2 \right )  
 +  (\mu - \nu) \norm{\phi^\per_\nu}^2   \\
&\quad + 2 (\mu-\nu) \Re  \braket{ \GPp \nu \xi^\per}{ \phi^\per_\nu} + (\mu-\nu) \braket{ \GPp \nu \xi^{\rm per} }{\GPp \nu \xi^\per - \GP \xi^\per}  \,.
\end{align}
A straightforward computation using the definitions \eqref{def:LP}--\eqref{eq:quadraticFormNoSymPer} shows that 
\begin{equation}
 \TSP( \xi^\per) -  T^{\rm per}_{\alpha,\nu,N}( \xi^\per) = (\mu -\nu) \langle G_\nu^{\rm per} \xi^{\rm per} | \G^{\rm per} \xi^{\rm per} \rangle \, . 
\end{equation}
Combining both statements yields the desired identity
\begin{align}\nonumber
& \int_{\BE^{N+1}}  \lk (  |\tnabla \phi_\mu^\per|^2    + \mu  |\phi_\mu^\per|^2 \right )   - \mu \norm{\psi^{\rm per}}^2 + \TSP( \xi^\per) \\ & = \int_{\BE^{N+1}}  \lk (  |\tnabla \phi_\nu^\per|^2    + \nu  |\phi_\nu^\per|^2 \right )  
 - \nu \norm{\psi^{\rm per}}^2 +  T^{\rm per}_{\alpha,\nu,N}( \xi^\per) \, . 
\end{align}
\end{proof}

\subsection{Approximation by integrals}
In the previous subsection we have shown that the original and the  periodic  formulations of the energy functionals, $\QFE$ and $\QFPE$, agree if applied to functions $\psi$ compactly supported in $\BE^{N+1}$. One complication in the periodic form is that $\LP$ is not given as explicitly as $\L$. The following lemma gives a bound on the difference.

\begin{lemma}
\label{lem:AP1}
Given $\mu$ and $\vec q$ such that 
\begin{equation}
Q_\mu^2 \coloneqq \frac 1 2 \sum_{i=2}^N q_i^2 + \mu >0 \label{eq:AP1Cond} 
\end{equation} 
we have 
\begin{equation}
|\LP(q_1,\hat q_1) -  \L(q_1,\hat q_1)| \leq c_L'    \frac 1{Q_\mu^2 \lE^{3}} 
\end{equation}
where the constant $c_L'$ is independent of $N,\vec q,m,\lE$ and $\mu$.
\end{lemma}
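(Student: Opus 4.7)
The strategy is to apply the Poisson summation formula, which relates the periodic lattice sum defining $\LP$ directly to the integral defining $\L$, and expresses the difference as a lattice sum involving the Fourier transform of a Yukawa-type kernel with exponential decay controlled by $\sqrt\mu$.

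To set up, let $a \coloneqq (1+m)/(2m)$ and $\gamma \coloneqq \frac{1}{2(m+1)} q_1^2 + \frac 12 \hat q_1^2 + \mu$, so that $\tilde H_0(q_1,t,\hat q_1)+\mu = a t^2 + \gamma$ and $\gamma \ge Q_\mu^2$. Writing $g_R(t) \coloneqq \hat\tau(t/R)/(a t^2+\gamma)$, the definitions \eqref{eq:indCutoff} and \eqref{def:LP} give
\begin{equation}
\LP(\vec q) - \L(\vec q) = \lim_{R \to \infty}\left[\int_{\R^3} g_R(t)\dI t - \LatFact^3 \sum_{p \in \Lat^{q_1}} g_R(p)\right].
\end{equation}
The key step is the Poisson summation formula for the shifted lattice $\Lat^{q_1} = \Lat + \tfrac{m q_1}{m+1}$, which in the paper's Fourier convention reads
\begin{equation}
\LatFact^3 \sum_{p \in \Lat^{q_1}} g_R(p) = \sum_{n \in \mathbb{Z}^3} e^{i \frac{m q_1 \cdot n\lE}{m+1}} \hat g_R(n\lE).
\end{equation}
The $n=0$ term equals $\int g_R(t)\dI t$, so the difference reduces to $-\lim_{R \to \infty} \sum_{n \ne 0} e^{i\frac{m q_1 \cdot n\lE}{m+1}} \hat g_R(n\lE)$.

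Next I would pass the limit inside the lattice sum, using that for each $x \ne 0$,
\begin{equation}
\lim_{R \to \infty} \hat g_R(x) = \frac{2\pi^2}{a|x|} e^{-\sqrt{\gamma/a}\,|x|},
\end{equation}
the standard Yukawa Fourier transform. To justify the interchange I would use the representation $1/(a t^2 + \gamma) = \int_0^\infty e^{-s(a t^2 + \gamma)}\dI s$ together with Fubini (valid for each finite $R$ by the Schwartz decay of $\hat\tau$), and exploit $\hat\tau \ge 0$ to obtain a pointwise bound $|\hat g_R(x)| \lesssim |x|^{-1} e^{-\sqrt{\gamma/a}\,|x|}$ uniformly in $R$; dominated convergence then applies.

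Finally, by the triangle inequality and setting $\lambda = \sqrt{\gamma/a} \gtrsim Q_\mu$,
\begin{equation}
|\LP(\vec q) - \L(\vec q)| \lesssim \sum_{n \ne 0} \frac{e^{-\lambda |n|\lE}}{|n|\lE}.
\end{equation}
Grouping by shells, the number of lattice points with $|n|$ comparable to $R$ is $O(R^2)$, so
\begin{equation}
\sum_{n \ne 0} \frac{e^{-\lambda |n|\lE}}{|n|\lE} \lesssim \frac{1}{\lE}\sum_{R=1}^\infty R\, e^{-\lambda R \lE} \lesssim \frac{1}{\lE(1-e^{-\lambda \lE})^2} \lesssim \frac{1}{\lambda^2 \lE^3},
\end{equation}
where the last step follows from $1-e^{-x} \gtrsim \min(x,1)$, checked separately for $\lambda\lE \ge 1$ and $\lambda\lE \le 1$. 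Combined with $\lambda \gtrsim Q_\mu$, this yields the asserted bound $|\LP - \L| \lesssim 1/(Q_\mu^2 \lE^3)$.

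The main technical obstacle is the justification of the $R \to \infty$ limit: the unregularized kernel $1/(at^2+\gamma)$ lies in neither $L^1(\R^3)$ nor does its Fourier transform (due to the $1/|x|$ singularity at the origin), so dominated convergence at the level of $g_R$ itself fails. The Laplace representation above, or equivalently a direct appeal to the explicit Yukawa Fourier transform, bypasses this issue and produces the required uniform exponential envelope on $\hat g_R$.
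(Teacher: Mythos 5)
Your strategy --- Poisson summation, a uniform exponential envelope on $\hat g_R$ to pass the $R\to\infty$ limit inside the dual-lattice sum, and a shell estimate on the resulting Yukawa sum --- is essentially the paper's proof. The paper also Poisson-sums and then bounds $|\hat f_R(z)|\le\hat f_\infty(z/2)$ for $R$ large by a convolution argument exploiting the compact support of $\tau$; your Laplace-transform device is an alternative route to the same envelope (and, like the paper's, it produces a bound valid only once $R\gtrsim 1/\ell$, not literally uniformly in $R$, which is all the limit interchange needs).

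Two slips in your concluding chain of inequalities keep it from delivering the stated bound as written. First, $\sum_{R\ge 1}R\,e^{-\lambda R\ell}=\frac{e^{-\lambda\ell}}{(1-e^{-\lambda\ell})^2}$; after dropping the numerator $e^{-\lambda\ell}$ your claimed inequality $\frac{1}{\ell(1-e^{-\lambda\ell})^2}\lesssim\frac{1}{\lambda^2\ell^3}$ is false for $\lambda\ell\gtrsim 1$, since the left side is then of order $\ell^{-1}$ while the right side can be arbitrarily small. The paper keeps the factor and uses $\frac{\eta^2 e^{-\eta}}{(1-e^{-\eta})^2}=\frac{\eta^2}{4\sinh^2(\eta/2)}\le 1$ for all $\eta>0$, which removes the need for a case distinction. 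Second, the Yukawa transform carries the prefactor $\frac{2\pi^2}{a}=\frac{4\pi^2 m}{m+1}$, which you absorb into $\lesssim$, and your final step uses $\lambda\gtrsim Q_\mu$; but $\lambda^2=\frac{2m}{m+1}\gamma$, so the constant hidden there is $\frac{m+1}{2m}$, and the conclusion $\frac{1}{\lambda^2\ell^3}=\frac{m+1}{2m}\cdot\frac{1}{\gamma\ell^3}$ is not manifestly $m$-uniform, whereas the lemma requires $c_L'$ independent of $m$. Carrying the prefactor shows the two $m$-dependent factors cancel: $\frac{2\pi^2}{a}\cdot\frac{1}{\lambda^2\ell^3}=\frac{2\pi^2}{\gamma\ell^3}\le\frac{2\pi^2}{Q_\mu^2\ell^3}$, which is the paper's $m$-uniform bound. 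Both fixes are mechanical, but as written the final estimate does not hold.
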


\begin{proof}
We recall the definitions of $\L$ and $\LP$ for some arbitrary $\tau$ fulfilling the requirements of Lemma~\ref{def:tau}:
\begin{align}\nonumber 
\L(\vec q) &= - \lim_{R \to \infty} \lk ( \int \frac {1}{\tilde H_0(q_1,s, \hat q_1) + \mu} \hat \tau(s/R) \dI s - \frac {8 \pi m R}{m+1} \rk) \\
\LP(\vec q) &= - \lim_{R \to \infty} \lk ( \LatFact^3 \sum_{s \in \Lat^{q_1}} \frac {1}{\tilde H_0(q_1,s, \hat q_1) + \mu} \hat \tau(s/R) - \frac {8 \pi m R}{m+1} \rk) 
\end{align}
with $\tilde H_0$ defined in \eqref{def:Hshift}.
For simplicity we assume that $q_1$ is such that $\Lat^{q_1} = \Lat$ but all other cases work analogously as a shift in momentum space only introduces a phase factor in configuration space, which vanishes when taking absolute values.
In the following we denote $f_\infty(s) = (\tilde H_0(q_1, s ,\hat q_1) + \mu)^{-1}$ and $f_R(s) = f_\infty(s) \hat \tau(s/R)$ and suppress the dependence on $\vec q$ for simplicity.

We can express the difference between the Riemann sum and the integral using Poisson's summation formula
\begin{equation}
\LatFact^3 \sum_{s \in \Lat} f_R(s) - \int_{\R^3} f_R(s) \dI s = \frac {(2\pi)^3} {\lE^3} \sum_{s \in \Lat} f_R( s) - (2 \pi)^{3/2} \hat f_R(0) =  (2\pi)^{3/2} \sum_{\substack{z \in \lE \mathbb Z^3 \\ z \ne 0}} \hat f_R(z)\, . \label{eq:diffPoisson} 
\end{equation}
For short we write $\gamma \coloneqq  \frac 1 {2 (1+m)} q_1^2 + \frac 1 2 \hat q_1^2+\mu$, which is bounded from below by $Q_\mu^2$ and hence is positive,  by our assumption \eqref{eq:AP1Cond}. The function $f_\infty$ and its Fourier transform are given by
\begin{equation}
f_\infty(t)  = \frac 1 {\frac{1 +m}{2m} t^2 + \gamma}, \quad \hat f_\infty(z) = \sqrt{\frac \pi 2} \frac {2m}{1+m} \frac { e^{- \lk ( \frac {2m}{m+1} \rk)^{1/2} \sqrt {\gamma} |z|} } {|z|} \, . 
\end{equation}
Moreover,  
\begin{equation}
\hat f_R(z) = (2 \pi)^{-3/2} (R^3 \tau(R \, \cdot \, ) \ast \hat f_\infty)(z) \,.
\end{equation}
We will show that $\hat f_R(s)$ is summable over $\lE \mathbb Z^3 \setminus \{0\}$. In fact for $|z| \gtrsim \ell$,
\begin{align}\nonumber
(2 \pi)^{3/2} |\hat f_R(z)| &=   \int_{\R^3} R^3 \tau(R w) \hat f_\infty(z-w) \dI w  \\ \nonumber 
 &\le   \int_{|w| > |z|/2} R^3 \tau(R w) \hat f_\infty(z - w) \dI w +   \int_{|z-w| > |z|/2} R^3 \tau (R w) \hat f_\infty(z -w) \dI w \\
 &\le   \hat f_\infty(z/2) \int R^3 \tau(R w) \dI w  =  \hat f_\infty(z/2)
 \label{eq:HelperFunctionConvergence} 
\end{align}
where we assumed that  $R$ is large enough such that $\tau(R w) = 0$ for $|w| > |z|/2$, and used that $\int \tau  = 1$, which was required by Lemma~\ref{def:tau}.
As $\hat f_\infty$ is summable over $\lE \mathbb Z^3 \setminus \{0\}$ we get by dominated convergence that
\begin{equation}
\lim_{R \to \infty} \sum_{z \in \lE \mathbb Z^3 \setminus \{0\}} |\hat f_R(z)| = \sum_{z \in \lE \mathbb Z^3 \setminus \{0\}} \hat f_\infty(z)\,. \label{eq:HelperFunctionConvergence2}  
\end{equation}
We bound the sum over $\hat f_\infty(|z|)$ by
\begin{equation}
\sum_{z \in \lE \mathbb Z^3 \setminus \{0\}} \hat f_\infty(z) =  \sum_{n \in \mathbb Z^3 \setminus \{0\}} \sqrt{\frac \pi 2} \frac {2m}{1+m} \frac { e^{- \lk ( \frac {2m}{m+1} \rk)^{1/2} \sqrt {\gamma} \lE |n|} } {\lE |n|} \lesssim  \frac 1 {\gamma \lE^3}   \label{eq:sumfinf} 
\end{equation}
using
\begin{equation}
\sum_{n \in \mathbb Z^3 \setminus \{0\} } e^{-\eta |n|}/|n| \lesssim \sum_{n \in \mathbb N} n e^{- \eta n} = \frac {e^{- \eta}} {(1 - e^{- \eta})^2} \le \frac 1 {\eta^2} 
\end{equation}
for $\eta = (2m/(m+1))^{1/2} \sqrt{\gamma} \lE$. Combining \eqref{eq:diffPoisson}, \eqref{eq:HelperFunctionConvergence2} and \eqref{eq:sumfinf} and using that $\gamma\geq Q_\mu^2$, we conclude that 
\begin{equation}
\lim_{R\to \infty} \lk | \frac {(2\pi)^3} {\lE^3} \sum_{s \in \Lat} f_R(s) - \int_{\R^3} f_R(s) \dI s \rk| \le  \frac {c_L'} {\gamma \lE^3} \le \frac {c_L'}{Q_\mu^2 \ell^3}
\end{equation}
for some constant $c_L' > 0$. 
This completes the proof of the lemma.
\end{proof}

\subsection{Bound on the singular parts}

The strategy for obtaining a lower bound on $\QFP$ is to find a $\mu$ such that $\TSP \geq 0$, in which case we obtain the lower bound $\QFP(\psi^{\rm per}) \geq -\mu \|\psi^{\rm per}\|^2$. Hence we want to choose $\mu$ as negative as possible. We shall 
use the method  of \cite{MoserSeiringer2017}, which yields the desired positivity of $\TSP$ (for large enough $m$) as long as 
 $\mu \ge -\kappa N^{5/3} \lE^{-2}$ for $\kappa$ small enough. (More precisely, $-\mu$ will be equal to the right side of \eqref{lbthm41}.)
 
 If we define $Q^2 = \frac 1 2 \sum_{i=2}^N q_i^2$ for $N > 2$, we observe that there exists a constant $\cT>0$ such that
\begin{equation}
Q^2 \ge \cT N^{5/3} \lE^{-2} \label{def:cT}
\end{equation}
if all $q_i \in \Lat$ are different, as required by the antisymmetry constraint. (We note that in comparison with \cite{MoserSeiringer2017}  $Q^2$ is defined with an additional factor $1/2$ here.) 
From now on we  restrict $\mu$ to satisfy  $\mu \ge -\kappa N^{5/3} \lE^{-2}$ for some  $\kappa < \cT$. 
This implies that
\begin{equation}
Q_\mu^2 =  Q^2 + \mu \ge (1- \kappa/ \cT ) Q^2 \ge (\cT -  \kappa)  N^{5/3} \lE^{-2}  \,. \label{eq:condMuQ}  
\end{equation}
In particular, Lemma~\ref{lem:AP1} yields the bound 
\begin{equation}\label{cons44}
\TDP(\xi^\per) \geq   \LatFact^{3N}  \sum_{\vec q \in \Lat^{N}} \L(\vec q)|\hat \xi^{\rm per}(\vec q)|^2 - \frac 1 {N^{5/3} \ell} \frac {c_L'}{\cT-\kappa}   \| \xi^\per\|_2^2
\end{equation} 
on the diagonal term of the singular part of $\QFP$. 
Following the same steps as in \cite{MoserSeiringer2017} we can obtain the following lower bound for the off-diagonal term.
\begin{prop}
\label{lem:ToffEst}
Assume that  $\mu \ge -\kappa N^{5/3} \lE^{-2}$ for some $\kappa < c_T$. Then for all $\xi \in H^{1/2}(\R^{3})\otimes H_{\rm as}^{1/2}(\R^{3(N-1)})$ we have
\begin{equation}
\TOP(\xi^{\rm per}) \geq - \frac{ \tilde \Lambda(m,\kappa)}{1- \kappa/\cT}  \LatFact^{3N}  \sum_{\vec q \in \Lat^{N}} \L(\vec q)|\hat \xi^{\rm per}(\vec q)|^2  
\end{equation}
where
\begin{equation} 
\tilde \Lambda(m,\kappa) \coloneqq \inf_{\delta > 0} \sup_{\substack{\tilde s,K\in\R^3\\ Q_\mu^2 > (\cT  - \kappa) N^{5/3} \lE^{-2} }} \LatFact^{3} \sum_{\tilde t \in \Lat +A K} \lambda_{\tilde s,Q_\mu,K,m,\delta}(\tilde t) \label{eq:BigLambdaTilde} 
\end{equation}
with
\begin{align}\nonumber
\lambda_{\tilde s,Q_\mu,K,m,\delta}(\tilde t) &  \coloneqq    \frac{(\tilde s-AK)^2 +  2 Q_\mu^2 + N \delta \lE^{-2}}{\pi^2 (1+m)} \left( \frac{m(m+2)}{(m+1)^2} \tilde s^2 + \frac m{m+1} ( 2 Q_\mu^2 + A K^2)  \right)^{-1/4} \\ \nonumber & \qquad  \times  \frac 1{(\tilde t-AK)^2+\delta \lE^{-2}}     \left( \frac{m(m+2)}{(m+1)^2} \tilde t^2 + \frac m{m+1} (2 Q_\mu^2+AK^2)  \right)^{-1/4}   \\ &  \qquad \times   \frac { \left|\tilde s\cdot \tilde t\right|  }{ \left[ \tilde s^2 + \tilde t^2 + \frac m{1+m}(2 Q_\mu^2 +A K^2)  \right]^2  - \left[ \frac 2{(1+m)} \tilde s\cdot \tilde t\right]^2} \,.  \label{def:lambda}
\end{align}
\end{prop}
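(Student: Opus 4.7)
The plan is to adapt the infinite-volume estimate of \cite{MoserSeiringer2017} to the periodic/lattice setting. Using antisymmetry I first reduce $\TOP(\xi^{\rm per})$ to a single representative bilinear expression, then pass to pair center-of-mass/relative coordinates, apply a Schur--Cauchy-Schwarz step with carefully chosen weights, and finally identify the resulting supremum with $\tilde\Lambda(m,\kappa)$.

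\textbf{Reduction and change of variables.} Antisymmetry of $\psi^{\rm per}$ in the last $N$ coordinates forces $\hat\xi_i^{\rm per}=(-1)^{i+1}\hat\xi^{\rm per}$, so the sum over $i\ne j$ in \eqref{eq:TPer} collapses to $N(N-1)$ copies of the $(i,j)=(1,2)$ term. For the triple $(k_0,k_1,k_2)$ I pass to a total momentum $K=k_0+k_1+k_2$ together with pair center-of-mass and relative coordinates in the spirit of \eqref{def:Hshift}; since the impurity carries the mass fraction $1/(m+1)$ in each pair, the lattice condition $k_0\in\Lat$ translates into the shifted lattices $\Lat+AK$ with $A=1/(m+1)$ that appear in \eqref{eq:BigLambdaTilde}. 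The remaining momenta $k_3,\dots,k_N$ are spectator variables that build $Q^2$.

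\textbf{Schur step.} The heart of the proof is a pointwise Cauchy-Schwarz bound that splits the bilinear term into two positive pieces, one proportional to $|\hat\xi^{\rm per}(s_1,\hat k_1)|^2$ and one proportional to $|\hat\xi^{\rm per}(s_2,\hat k_2)|^2\,\G(k_0,\vec k)^2$, with positive weights chosen so that, after summing the free pair-relative momentum over $\Lat+AK$ and relabeling, each half becomes a sum of $L^{1/2}(s_1,\hat k_1)\,L^{1/2}(s_2,\hat k_2)\,|\hat\xi^{\rm per}|^2$ times a ratio. This explains the two $(-1/4)$-powers in \eqref{def:lambda}, which mimic the inner bracket of \eqref{def:L}. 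After identifying $\LatFact^{3N}\sum_{\vec q} L(\vec q)|\hat\xi^{\rm per}(\vec q)|^2$ in both halves, taking the supremum over the remaining variables $\tilde s, K$ and over admissible $Q_\mu^2$ produces exactly $\tilde\Lambda(m,\kappa)$, while the prefactor $1/(1-\kappa/\cT)$ arises from replacing $Q^2$ by $Q_\mu^2$ inside $L$ via \eqref{def:L}, justified by \eqref{eq:condMuQ}. Antisymmetry of $\hat\xi^{\rm per}$ in the last $N-1$ variables guarantees the restriction $Q_\mu^2>(\cT-\kappa)N^{5/3}\ell^{-2}$ via \eqref{def:cT}.

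\textbf{Main obstacle: lattice infrared regularization.} In the continuum, $\G$ has an integrable $1/(\tilde t-AK)^2$ singularity at the point where the pair's relative momentum vanishes, which is harmless under integration on $\R^3$ but genuinely attainable on the shifted lattice $\Lat+AK$ when $AK$ itself lands on $\Lat$. A naive Schur sum would therefore diverge. The remedy, encoded in \eqref{def:lambda}, is to insert $(\tilde t-AK)^2+\delta\ell^{-2}$ in the denominator and compensate with $N\delta\ell^{-2}$ in the numerator; the infimum over $\delta>0$ in \eqref{eq:BigLambdaTilde} then optimizes this trade-off. Apart from this new ingredient, the argument is a careful transcription of \cite{MoserSeiringer2017} with integrals replaced by $\LatFact^3$-weighted sums, and the chief bookkeeping task is to align the weights in the Schur step with the diagonal normalization that appears implicitly in \eqref{cons44} so that the final inequality takes the stated form.
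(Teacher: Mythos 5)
Your proposal follows essentially the same route the paper takes: both adapt the Schur--Cauchy-Schwarz argument of \cite{MoserSeiringer2017} to the lattice setting, with the key new ingredient being the $\delta$-regularization of the weight function (the paper's explicit choice is $h(s,\hat q_1)=(s^2+\delta\ell^{-2})\prod_{i\geq 2}(q_i^2+\delta\ell^{-2})$, recovering $\delta=0$ of \cite{MoserSeiringer2017}), followed by using $Q_\mu^2\geq(1-\kappa/\cT)Q^2$ to pass from the $\mu$-dependent intermediate bound to $\tilde\Lambda(m,\kappa)$ at the cost of the factor $1/(1-\kappa/\cT)$. Your diagnosis of \emph{why} the regularization is needed is correct in substance but slightly misstated: since $0\in\Lat$, the singular point $\tilde t=AK$ always lies on the shifted lattice $\Lat+AK$ (not only ``when $AK$ lands on $\Lat$''), so the $\delta\ell^{-2}$ shift is needed unconditionally, not just for special values of $K$.
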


\begin{proof}
The proof works in almost the exact same way as in \cite{MoserSeiringer2017}, hence we will not spell out the details. The main difference is that we now have to write sums instead of integrals,  and in particular this implies that we have to choose the weight function $h(s,\hat q_1)$ (see \cite[Eq. (4.12)]{MoserSeiringer2017}) differently, namely as
\begin{equation}
h(s,\hat q_1) = (s^2 + \delta \lE^{-2}) \prod_{i=2}^N (q_i^2 + \delta \lE^{-2}) \,. \label{eq:weighing} 
\end{equation}
For comparison $\delta = 0$ was used in \cite{MoserSeiringer2017}. Following the proof in \cite[Sect.~4]{MoserSeiringer2017} this choice gives a lower bound to the off-diagonal term of the form 
\begin{equation}
\TOP(\xi^{\rm per}) \geq  - \tilde \Lambda_{\delta,\mu}(m) \LatFact^{3N}  \sum_{\vec q \in \Lat^N} \L(\vec q)|\hat \xi^{\rm per}(\vec q)|^2 \label{eq:TrelLambda} 
\end{equation}
with a prefactor $\tilde \Lambda_{\delta,\mu}(m)$ equal to 
\begin{align} \nonumber
 &   \sup_{\tilde s,K\in\R^3, Q^2 >  \cT N^{5/3} \lE^{-2}}  \frac{(\tilde s-AK)^2 + 2 Q^2 + N \delta \lE^{-2}}{\pi^2 (1+m)} \left( \frac{m(m+2)}{(m+1)^2} \tilde s^2 + \frac m{m+1} ( 2 Q^2 + A K^2) + \frac {2m} {m+1} \mu \right)^{-1/4} \\ \nonumber & \qquad \qquad  \times  \LatFact^{3} \sum_{\tilde t \in \Lat + A K}  \frac 1{(\tilde t-AK)^2+\delta \lE^{-2}}     \left( \frac{m(m+2)}{(m+1)^2} \tilde t^2 + \frac m{m+1} (2 Q^2+AK^2) + \frac {2m} {m+1} \mu  \right)^{-1/4}   \\ & \qquad\qquad \times   \frac { \left|\tilde s\cdot \tilde t\right|  }{ \left[ \tilde s^2 + \tilde t^2 + \frac {m}{1+m}( 2 Q^2 +A K^2) + \frac {2 m}{m+1} \mu \right]^2  - \left[ \frac 2{(1+m)} \tilde s\cdot \tilde t\right]^2}\,.  \label{eq:LambdaTilde}
\end{align}
Since \eqref{eq:condMuQ} holds under our assumption on $\mu$, we see that $\inf_{\delta>0}  \tilde \Lambda_{\delta,\mu}(m) \leq (1 - \kappa/ \cT)^{-1} \tilde \Lambda(m,\kappa)$, which yields the desired result.
\end{proof}

\subsection{A bound on $\tilde \Lambda(m,\kappa)$}

We will not evaluate $\tilde \Lambda(m,\kappa)$ directly but we will compare it with $\Lambda(m)$,  which is  defined in \cite[Eq. (2.8)]{MoserSeiringer2017} and which was already referred to  in \eqref{eq:bound3} above. The expression $\Lambda(m)$ can be written as
\begin{equation} 
\Lambda(m)  \coloneqq  \sup_{\substack{\tilde s,K\in\R^3\\ Q_\mu^2 > 0}} \int_{\R^3} \lambda_{\tilde s,Q_\mu,K,m,0}(\tilde t) \dI \tilde t
= \sup_{\substack{\tilde s,K\in\R^3\\ Q_\mu^2 > (\cT - \kappa)  N^{5/3} \lE^{-2} }} \int_{\R^3} \lambda_{\tilde s,Q_\mu,K,m,0}(\tilde t) \dI \tilde t \,.\label{eq:BigLambda} 
\end{equation}
The additional constraint on $Q_\mu$ in the latter supremum has no effect because of  the scaling properties of $\lambda_{\tilde s, Q_\mu, K,m,0}$, specifically  
$\lambda_{\nu \tilde s, \nu Q_\mu, \nu K,m,0}(\nu \tilde t) =  \nu^{-3} \lambda_{\tilde s, Q_\mu, K,m,0}(\tilde t)$ for any $\nu>0$, which allows to fix one of the parameters when taking the supremum.  
The expression \eqref{eq:BigLambdaTilde} differs from \eqref{eq:BigLambda}  by the non-zero value of $\delta$, as well as the sum instead of an integral. In the following lemmas we will compare the  two.

The next Lemma gives a pointwise bound on $\lambda_{\tilde s,Q_\mu,K,m,\delta}$. For its statement it will be
convenient to define $\CB(s)$ as the cube with side length $2 \pi/\lE$ centered at $s \in \R^3$, i.e., 
\begin{equation}
\CB(s) = \lk [-\frac {\pi}{\lE},\frac {\pi}{\lE} \rk]^3 + s. \label{def:CB} 
\end{equation} 

\begin{lemma} 
\label{lem:pointwiseBound}
For $m\gtrsim 1$ we have
\begin{equation}
\lambda_{\tilde s,Q_\mu,K,m,\delta}(\tilde t) \lesssim \frac 1 {m}  \frac 1 {t^{5/2}} \frac{s^2 + 2 Q_\mu^2 + N\delta \ell^{-2}}{(s^2 + 2 Q_\mu^2 )^{1/4}} \frac 1 {s^2 + t^2 +2 Q_\mu^2}  \label{eq:pointWise} 
\end{equation}
where $\tilde s = s + AK$ and $\tilde t = t + AK$ for $t \in \Lat \setminus \{0\}$. Moreover, 
\begin{equation}\label{eq:sum}
\ell^{-3} \sum_{\tilde t \in \Lat + AK} \max_{\tau \in \CB(\tilde t)} \lambda_{\tilde s,Q_\mu,K,m,\delta}(\tau) \lesssim \frac 1m  \lk ( 1 +  \frac {N \delta}{ \lE^{2} Q_\mu^2 } + \frac {1}{\delta \ell Q_\mu } + \frac {N}{ \lE^3 Q_\mu^3} \rk)\,.
\end{equation}
\end{lemma}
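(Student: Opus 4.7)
The plan is to prove the two bounds separately, exploiting that for $m\gtrsim 1$ the mass-dependent constants $a\coloneqq m(m+2)/(m+1)^2$, $b\coloneqq m/(m+1)$, and $A$ are all bounded above and below by positive constants.

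For the pointwise bound \eqref{eq:pointWise}, I would process the factors of $\lambda_{\tilde s,Q_\mu,K,m,\delta}(\tilde t)$ one at a time. Writing $s=\tilde s-AK$ and $t=\tilde t-AK$, completion of squares in $K$ (viewed as a vector variable) yields $a\tilde s^2+bA|K|^2\geq \tfrac{ab}{aA+b}\, s^2\gtrsim s^2$ and, analogously, $a\tilde t^2+bA|K|^2\gtrsim t^2$; together with the $2bQ_\mu^2\gtrsim Q_\mu^2$ contribution this bounds the first quartic-root factor by $(s^2+2Q_\mu^2)^{-1/4}$ and the second by $t^{-1/2}$ (for the latter the weaker estimate $(at^2)^{-1/4}$ already suffices). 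The factor $1/((\tilde t-AK)^2+\delta\ell^{-2})$ is trivially $\leq 1/t^2$. For the third line I would write the denominator as $X^2-Y^2=(X-Y)(X+Y)$ with $X=\tilde s^2+\tilde t^2+b(2Q_\mu^2+A|K|^2)$ and $Y=\tfrac{2}{m+1}\tilde s\cdot\tilde t$; since $|Y|\leq\tfrac{1}{m+1}(\tilde s^2+\tilde t^2)\leq\tfrac{1}{m+1}X$, we get $X-|Y|\geq\tfrac{m}{m+1}X\gtrsim X$, so that $X^2-Y^2\gtrsim X^2$ and $|\tilde s\cdot\tilde t|/X^2\leq 1/X$. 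A further completion of squares in $K$ yields $X\gtrsim s^2+t^2+Q_\mu^2$, and combining with $1/(\pi^2(1+m))\lesssim 1/m$ proves \eqref{eq:pointWise}.

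For the summation bound \eqref{eq:sum}, the $t^{-5/2}$ factor in \eqref{eq:pointWise} is integrable in $\mathbb{R}^3$ but singular at $t=0$, so I would split the sum at the unique lattice point $\tilde t=AK$. For this isolated term I would return to the original expression for $\lambda$: on $\CB(AK)$ one has $|\tau-AK|\lesssim 1/\ell$, hence $1/((\tau-AK)^2+\delta\ell^{-2})\leq\ell^2/\delta$, while the remaining quartic-root factor is $\lesssim Q_\mu^{-1/2}$ and the third line is $\lesssim(s^2+Q_\mu^2)^{-1}$ by the same reasoning as before. Multiplying by $\ell^{-3}$ and taking the supremum over $s$ of $(s^2+2Q_\mu^2+N\delta\ell^{-2})/((s^2+2Q_\mu^2)^{1/4}(s^2+Q_\mu^2))\lesssim Q_\mu^{-1/2}+N\delta/(\ell^2 Q_\mu^{5/2})$ produces exactly the two ``boundary'' terms $(m\delta\ell Q_\mu)^{-1}$ and $N(m\ell^3 Q_\mu^3)^{-1}$ appearing in \eqref{eq:sum}.

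For the remaining sum over $\tilde t\neq AK$, I would apply \eqref{eq:pointWise} and compare the sum to an integral. Since for such $\tilde t$ the cube $\CB(\tilde t)$ lies at distance $\geq\pi/\ell$ from $AK$, each factor in \eqref{eq:pointWise} varies only by a bounded multiplicative factor across $\CB(\tilde t)$, so a Riemann-sum comparison gives $\ell^{-3}\sum_{\tilde t\neq AK}\max_{\CB(\tilde t)}\lambda\lesssim\int_{\mathbb{R}^3}(\text{RHS of \eqref{eq:pointWise}})\,d\tilde t$. The elementary three-dimensional integral $\int_{\mathbb{R}^3}|t|^{-5/2}(s^2+t^2+2Q_\mu^2)^{-1}\,dt\sim(s^2+Q_\mu^2)^{-3/4}$, followed by $\sup_s(s^2+2Q_\mu^2+N\delta\ell^{-2})/(s^2+Q_\mu^2)\lesssim 1+N\delta/(\ell^2 Q_\mu^2)$, delivers the first two terms of \eqref{eq:sum}. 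The main technical obstacle will be to justify carefully the Riemann-sum comparison in the ``nearest-neighbor'' regime $|\tilde t-AK|\sim 1/\ell$, where the pointwise bound is not truly slowly varying on $\CB(\tilde t)$; however, there are only $O(1)$ such cubes, and on each of them the pointwise bound is dominated by the $t=0$ estimate already established, so their contribution is absorbed into the terms already accounted for.
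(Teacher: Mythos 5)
Your proof is correct and follows essentially the same route as the paper's: Cauchy--Schwarz together with completion of squares in $K$ for the pointwise bound, a separate treatment of the cube at $\tilde t = AK$, and a Riemann-sum comparison for the remaining cubes. The concern at the end about the nearest-neighbor regime is actually unnecessary: for any $t\in\Lat\setminus\{0\}$ and $\tau_1,\tau_2\in\CB(t)$ one has $|\tau_1|\le\sqrt{11}\,|\tau_2|$, so the right-hand side of \eqref{eq:pointWise} varies only by a uniformly bounded multiplicative factor across each such cube, whence the maximum over the cube is controlled by the average over the cube uniformly in $t$, with no need to absorb the nearest-neighbor cubes into the $t=0$ contribution.
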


\begin{proof}
For the pointwise bound \eqref{eq:pointWise} we will proceed similarly to \cite[Sect.~6]{MoserSeiringer2017}. Using the Cauchy-Schwarz inequality we have
\begin{equation}\label{4.59}
|\tilde t \cdot \tilde s| \le  \frac 1 2  \lk [\tilde s^2 + \tilde t^2 + \frac {m}{1+m}( 2 Q_\mu^2 +A K^2)  \rk]
\end{equation}
and also
\begin{equation}
\left[ \tilde s^2 + \tilde t^2 + \frac {m}{1+m}( 2 Q_\mu^2 +A K^2)  \right]^2  - \left[ \frac 2{(1+m)} \tilde s\cdot \tilde t\right]^2  \ge \frac {m(m+2)}{(1+m)^2} \left[ \tilde s^2 + \tilde t^2 + \frac {m}{1+m}( 2 Q_\mu^2 +A K^2)  \right]^2\,.
\end{equation}
By minimizing over $K$ we find that 
\begin{equation}
\tilde s^2 + \tilde t^2 + \frac {m}{1+m}( 2 Q_\mu^2 +A K^2)  \ge \frac {m (2+m)}{2+4m+m^2} \lk [ s^2 + t^2 + 2 Q_\mu^2 \rk] 
\end{equation}
and 
\begin{equation}
 \frac{m(m+2)}{(m+1)^2} \tilde s^2 + \frac m{m+1} ( 2 Q_\mu^2 + A K^2) \ge \frac {m}{m+1} \lk(s^2 + 2 Q_\mu^2  \rk) \,.
\end{equation}
By combining these bounds we get for \eqref{def:lambda} the pointwise bound
\begin{align}\nonumber
\lambda_{\tilde s,Q_\mu, K,m,\delta}(\tilde t) & \le   \left(\frac{m+1}{m}\right)^{3/2} \frac{m^2+4 m+2}{ 2\pi^2 m  (m+2)^2}  \lk (s^2 +  2 Q_\mu^2 + N \delta  \lE^{-2} \rk )   \\ &  \quad  \times \left( s^2 + 2Q_\mu^2   \right)^{-1/4} \frac 1{ t^2+\delta \lE^{-2}}     \left( t^2 + 2 Q_\mu^2  \right)^{-1/4}    \frac { 1 }{ s^2 +  t^2 + 2 Q_\mu^2 }  \label{eq:pointWiseDet} 
\end{align}
from which \eqref{eq:pointWise} readily follows.

We denote the right side of \eqref{eq:pointWise} by $\lambda^>(t) = \lambda^>_{ s,Q_\mu,K,m,\delta}(t)$ and we will write $\lambda(\tilde t)= \lambda_{\tilde s,Q_\mu,K,m,\delta}(\tilde t)$ in the following. That is, \eqref{eq:pointWise} reads $\lambda(\tilde t)\lesssim \lambda^>(t)$. 
First we treat the term $\tilde t=AK$ in \eqref{eq:sum}. Using \eqref{eq:pointWiseDet} 
we can bound 
\begin{equation}
\lE^{-3} \lambda(\tilde t) \lesssim \frac {1}{ m\delta  \lE Q_\mu} \frac{s^2 +  2 Q_\mu^2 + N \delta  \lE^{-2}}{s^2 +  t^2 + 2 Q_\mu^2 }  \lesssim \frac 1m\lk ( \frac 1 {\delta \ell Q_\mu } + \frac {N}{\lE^3Q_\mu^3 } \rk )  
\end{equation}
for any $\tilde t$ and hence, in particular, for $\tilde t \in C_\ell(AK)$. 
For the case $0\neq t\in \Lat$, we note that for $\tau_1,\tau_2 \in C_\ell( t)$ the bound $|\tau_1|\leq \sqrt{11} |\tau_2|$ holds, and hence 
\begin{equation}
 \lambda^>(\tau_1) \le 11^{9/4} \lambda^>(\tau_2) \,.
\end{equation}
In particular,  the maximal value of $\lambda^>$ in $C_\ell(\tau)$ is dominated by the average value, 
and therefore
\begin{align}\nonumber 
\ell^{-3} \sum_{\tilde t \in \Lat + AK} \max_{\tau \in \CB(\tilde t)} \lambda(\tau) &\lesssim \ell^{-3} \sum_{\substack{t \in \Lat\\ t \ne 0}}  \lambda^>(t) +  \frac 1m \lk ( \frac 1 {\delta \ell Q_\mu } + \frac {N}{\lE^3Q_\mu^3 } \rk ) \\ \nonumber
& \lesssim \sum_{\substack{t \in \Lat\\t \ne 0}} \int_{\CB(t)}  \lambda^>( t) \dI t + \frac 1  m \lk ( \frac 1 {\delta \ell Q_\mu } + \frac {N}{\lE^3Q_\mu^3 } \rk )  \\
& \lesssim \int_{\R^3} \lambda^>(t) \dI t + \frac 1m \lk ( \frac 1 {\delta \ell Q_\mu } + \frac {N}{\lE^3Q_\mu^3 } \rk )   \,.
\end{align}
As a last step we  explicitly evaluate the integral, which results in the bound
\begin{equation}
\int_{\R^3} \lambda^>(t) \dI t \lesssim  \frac 1m \lk ( 1 +  \frac {N \delta}{\lE^2 Q_\mu^2  } \rk) \,.
\end{equation}
This completes the proof of the lemma.
\end{proof}

\newcommand{\RN}[1]{%
  \textup{\uppercase\expandafter{\romannumeral#1}}%
}
\begin{lemma}
\label{lem:SumIntegralLambdaTilde}
For $m \gtrsim 1$ we have
\begin{equation}
\lk |\int_{\R^3} \lambda_{\tilde s,Q_\mu,K,m,\delta}(\tilde t) \dI \tilde t - \LatFact^3 \sum_{\tilde t \in \Lat + AK} \lambda_{\tilde s,Q_\mu,K,m,\delta}(\tilde t) \rk | \lesssim  \frac 1m \left( \frac 1 {\ell Q_\mu} + \frac 1 {\delta^{1/2}}\right) \lk ( 1 +  \frac {N \delta}{ \lE^{2} Q_\mu^2 } + \frac {1}{\delta \ell Q_\mu } + \frac {N}{ \lE^3 Q_\mu^3} \rk)  \,.
\end{equation}
\end{lemma}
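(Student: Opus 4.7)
The plan is a standard Riemann-sum versus integral comparison, leveraging the pointwise estimates already established in Lemma \ref{lem:pointwiseBound}. Since the cubes $\CB(\tilde t)$ with $\tilde t\in \Lat+AK$ tile $\R^3$ and each has volume $(2\pi/\ell)^3$, abbreviating $\lambda=\lambda_{\tilde s,Q_\mu,K,m,\delta}$ one has
\begin{equation}
\left| \int_{\R^3} \lambda(\tau)\, \dI\tau - \left(\frac{2\pi}{\ell}\right)^3 \sum_{\tilde t \in \Lat + AK} \lambda(\tilde t) \right| \;\leq\; \sum_{\tilde t \in \Lat + AK}\, \int_{\CB(\tilde t)} |\lambda(\tilde t) - \lambda(\tau)|\, \dI\tau.
\end{equation}
I would split the lattice points into the $O(1)$ \emph{near} points, for which $\CB(\tilde t)$ meets a ball of radius $O(1/\ell)$ around $AK$, and the complementary \emph{far} set, and treat the two contributions separately.

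For the near part, the factor $1/((\sigma-AK)^2+\delta/\ell^2)$ appearing in \eqref{eq:pointWiseDet} is at most of size $\ell^2/\delta$, while the remaining factors are uniformly controlled exactly as in the proof of Lemma \ref{lem:pointwiseBound}. Using the crude estimate $|\lambda(\tilde t)-\lambda(\tau)|\leq 2\sup_{\CB(\tilde t)}\lambda$, multiplying by the cube volume $(2\pi/\ell)^3$, and maximizing over $s$ as in that proof, yields a contribution of order $\frac{1}{m}(\frac{1}{\delta\ell Q_\mu}+\frac{N}{\ell^3 Q_\mu^3})$; for $\delta\lesssim 1$ this is dominated by $\frac{1}{m\sqrt\delta}$ times the bracketed factor in the statement, which is the desired form.

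For the far part I would use the mean-value inequality $|\lambda(\tilde t)-\lambda(\tau)|\leq (\sqrt 3\pi/\ell)\sup_{\sigma\in \CB(\tilde t)}|\nabla_{\tilde t}\lambda(\sigma)|$, so that the far contribution is bounded by
\begin{equation}
\frac{\sqrt 3\pi}{\ell}\left(\frac{2\pi}{\ell}\right)^3 \sum_{\tilde t \text{ far}} \sup_{\sigma\in \CB(\tilde t)} |\nabla_{\tilde t}\lambda(\sigma)|.
\end{equation}
Repeating the averaging argument between \eqref{eq:pointWiseDet} and \eqref{eq:sum}, this is comparable to $\frac{1}{\ell}\int_{\R^3}|\nabla_{\tilde t}\lambda(\sigma)|\,\dI\sigma$. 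Differentiating the three $\tilde t$-dependent factors in \eqref{def:lambda} and using the Cauchy--Schwarz inequality \eqref{4.59} to control the cross terms leads to a pointwise estimate
\begin{equation}
|\nabla_{\tilde t}\lambda(\sigma)|\;\lesssim\; \lambda(\sigma)\left(\frac{1}{\sqrt{(\sigma-AK)^2+\delta/\ell^2}}+\frac{1}{\sqrt{\sigma^2+Q_\mu^2}}\right),
\end{equation}
after which the integration proceeds exactly as in the proof of Lemma \ref{lem:pointwiseBound}, producing an additional factor $1/Q_\mu$ and hence the $\frac{1}{\ell Q_\mu}$ prefactor in the claim.

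The main obstacle is the explicit bookkeeping of $\int_{\R^3}|\nabla_{\tilde t}\lambda|$: the three rational factors in \eqref{def:lambda} interact non-trivially under differentiation, and one needs to split the integral into the regions $|\sigma-AK|\lesssim Q_\mu$ and $|\sigma-AK|\gtrsim Q_\mu$, using the monotonicity in $s$ exploited in the previous lemma to optimize. The end result should be $\int|\nabla_{\tilde t}\lambda|\lesssim \frac{1}{m Q_\mu}(1+\frac{N\delta}{\ell^2 Q_\mu^2}+\frac{1}{\delta\ell Q_\mu}+\frac{N}{\ell^3 Q_\mu^3})$. Combining the near and far estimates then yields the stated bound.
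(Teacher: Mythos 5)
Your overall strategy --- rewriting the sum-minus-integral difference as $\sum_{\tilde t}\int_{\CB(\tilde t)}|\lambda(\tilde t)-\lambda(\tau)|\dI\tau$, bounding the $\tilde t=AK$ cube crudely, and using a mean-value/gradient estimate plus the averaging of Lemma~\ref{lem:pointwiseBound} elsewhere --- is essentially the paper's argument, and your pointwise gradient bound
\begin{equation*}
|\nabla_{\tilde t}\lambda(\sigma)|\lesssim\Big(\tfrac{1}{\sqrt{(\sigma-AK)^2+\delta\ell^{-2}}}+\tfrac{1}{Q_\mu}\Big)\lambda^>(\sigma)
\end{equation*}
(with $\lambda^>$ the right side of \eqref{eq:pointWiseDet}; for the term $\RN 4$ one really needs $\lambda^>$ rather than $\lambda$ itself) is the same as the one the paper obtains by differentiating \eqref{eq:Form1}. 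However, the claimed ``end result'' $\int_{\R^3}|\nabla_{\tilde t}\lambda|\lesssim \frac{1}{mQ_\mu}\big(1+\frac{N\delta}{\ell^2 Q_\mu^2}+\cdots\big)$ is wrong, and the error is where your $\frac{1}{\sqrt\delta}$ prefactor should be coming from. On the ``far'' cubes $|\sigma-AK|\gtrsim 1/\ell$ the factor $\frac{1}{\sqrt{(\sigma-AK)^2+\delta\ell^{-2}}}$ is \emph{not} bounded by $1/Q_\mu$; it can be as large as $\ell/\sqrt\delta$ (already at $|\sigma-AK|\sim 1/\ell$), and integrating it against $\lambda^>$, which itself is peaked at $\sigma=AK$, produces an $\ell/\sqrt\delta$ factor, not a $1/Q_\mu$ factor. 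After dividing by $\ell$ this is exactly the $\frac{1}{\sqrt\delta}$ term in the lemma, so the $\frac{1}{\sqrt\delta}$ contribution originates from the \emph{far} sum, not from the $O(1)$ near cubes as your decomposition allocates it. The paper sidesteps this computation by bounding the offending factor once and for all by $\ell/\sqrt\delta$ (giving $|\RN 1|\le 2\ell\delta^{-1/2}\lambda$) and then applying \eqref{eq:sum} directly; if you want to keep the sharper $\frac{1}{\sqrt{(\sigma-AK)^2+\delta\ell^{-2}}}$ inside the integral you must redo the integral of Lemma~\ref{lem:pointwiseBound} with an extra half-power of the $t$-denominator, and the outcome is $\frac{\ell}{m\sqrt\delta}(1+\cdots)$, not $\frac{1}{mQ_\mu}(1+\cdots)$.

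A second, smaller gap: for the near cube you argue that $\frac{1}{m}\big(\frac{1}{\delta\ell Q_\mu}+\frac{N}{\ell^3 Q_\mu^3}\big)$ is dominated by the right side only ``for $\delta\lesssim 1$.'' In the application (Lemma~\ref{cor:DiffLambda}) one chooses $\delta\sim N^{4/9}\gg 1$, so this case must also be covered; it is, but via the \emph{other} prefactor, since for $\delta\gtrsim 1$ one has $\frac{1}{\delta\ell Q_\mu}\le\frac{1}{\ell Q_\mu}\cdot 1$ and $\frac{N}{\ell^3 Q_\mu^3}\le\frac{1}{\ell Q_\mu}\cdot\frac{N\delta}{\ell^2 Q_\mu^2}$. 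You should state both regimes.
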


\begin{proof}
As in the proof of the previous Lemma, we denote $\lambda(\tilde t) = \lambda_{\tilde s,Q_\mu,K,m,\delta}(\tilde t)$, and write it as 
\begin{align}\nonumber
\lambda(\tilde t) & = c_5 ( (\tilde s - AK)^2 + 2 Q_\mu^2 + N \delta \lE^{-2}) (c_1 \tilde s^2 + c_2 Q_\mu^2 + c_3 K^2)^{-1/4} \\
&\quad \times \frac 1 {(\tilde t - AK)^2 + \delta \lE^{-2}} (c_1 \tilde t^2 + c_2 Q_\mu^2 + c_3 K^2)^{-1/4} \frac {|\tilde s \cdot \tilde t|}{(\tilde s^2 + \tilde t^2 + c_2 Q_\mu^2 + c_3 K^2)^2 - (c_4 \tilde s \cdot \tilde t)^2} \label{eq:Form1} 
\end{align}
with appropriate  coefficients $c_1,c_2,c_3,c_4,c_5$ depending on $m$. Its gradient equals  
\begin{align}\nonumber
\nabla \lambda(\tilde t) &= \ub{ -2 \frac {\tilde t  -  A  K}{(\tilde t-A K)^2 + \delta \lE^{-2}} \lambda(\tilde t) }{\RN 1} - \ub{ \frac 1 2 \frac {c_1 \tilde t }{c_1 \tilde t^2 + c_2 Q_\mu^2 + c_3 K^2} \lambda(\tilde t) } {\RN 2} \\ & \quad 
  - \ub{\frac {4 \tilde t  (\tilde s^2 + \tilde t^2 + c_2 Q_\mu^2 + c_3 K^2) - 2 c_4^2 \tilde s  (\tilde s \cdot \tilde t)}{(\tilde s^2 + \tilde t^2 + c_2 Q_\mu^2 + c_3 K^2)^2 - (c_4 \tilde s \cdot \tilde t)^2 } \lambda(\tilde t)}{\RN 3} + \ub{\frac {\tilde s } {\tilde t \cdot \tilde s} \lambda(\tilde t)}{\RN 4} \,.
\end{align}
We can quantify the difference between the Riemann sum and the integral by
\begin{equation}
\lk |\int_{\R^3} \lambda(\tilde t) \dI \tilde t - \LatFact^3 \sum_{\tilde t \in \Lat + AK} \lambda(\tilde t) \rk | \lesssim \lE^{-4}  \sum_{\tilde t \in \Lat + AK} \max_{\tau \in \CB(\tilde t)} |\nabla \lambda(\tau)|\,.
\label{eq:approxRiemann} 
\end{equation}
With the aid of the triangle inequality we can treat the terms $\RN 1 - \RN 4$ separately. 

We can bound $\RN 1$ as 
\begin{equation}
|\RN 1| \le \frac {2}{\sqrt{(\tilde t - A K)^2 + \delta \lE^{-2}}} \lambda(\tilde t) \le \frac {2 \lE}{\delta^{1/2}} \lambda(\tilde t)  \,.
\end{equation}
For the second term we obtain
\begin{equation}
|\RN 2| \le \frac 1 2 \sqrt{\frac{c_1}{c_2}} \frac {1}{Q_\mu} \lambda(\tilde t) = \frac 1 {2^{3/2}} \sqrt{ \frac {m+2}{m+1}} \frac {1}{Q_\mu} \lambda(\tilde t)  \lesssim \frac 1{Q_\mu} \lambda(\tilde t) \,.
\end{equation}
For  $\RN 3$, we use similar estimates as in Lemma \ref{lem:pointwiseBound} to get
\begin{equation}
|\RN 3| \lesssim  \frac {|\tilde t| + |\tilde s|} {\tilde s^2 + \tilde t^2 + c_2 Q_\mu^2 + c_3 K^2}  \lambda(\tilde t) \lesssim \frac 1 {Q_\mu} \lambda(\tilde t) \,.
\end{equation}
Finally, for $\RN 4$ we have to proceed slightly differently. If we use
\begin{equation}
|\tilde s|  \le \frac 1 {2 \sqrt c_2 Q_\mu}  (\tilde s^2 + \tilde t^2 + c_2 Q_\mu^2 + c_3 K^2) 
\end{equation}
instead of \eqref{4.59}, we see that we can bound $|\RN 3|$ from above by $Q_\mu^{-1}$ times the right side of \eqref{eq:pointWiseDet}. 
Using  Lemma~\ref{lem:pointwiseBound}  we conclude that
\begin{align}\nonumber
\eqref{eq:approxRiemann} &\le  \lE^{-4}  \sum_{\tilde t \in \Lat + AK} \max_{\tau \in \CB(\tilde t)} \lk(|\RN 1| + |\RN 2| + |\RN 3| + |\RN 4| \rk) \\ &\lesssim  \frac 1 m \lk ( \frac {1} {\lE Q_\mu} + \frac 1 {\delta^{1/2}} \rk) \lk ( 1 +  \frac {N \delta}{ \lE^{2} Q_\mu^2 } + \frac {1}{\delta \ell Q_\mu } + \frac {N}{ \lE^3 Q_\mu^3} \rk)\,.
\end{align}
Here we have used that the bound \eqref{eq:sum} holds also with $\lambda_{\tilde s, Q_\mu, K , m ,\delta}$ replaced by the right side of \eqref{eq:pointWiseDet}, as shown in the proof of Lemma~\ref{lem:pointwiseBound}. 
This completes the proof.
\end{proof}

\begin{lemma}\label{lem:cL} \label{cor:DiffLambda} 
There exists a $c_\Lambda > 0$ such that 
\begin{equation}
\tilde \Lambda(m,\kappa) \le \Lambda(m) + \frac 1m \frac{c_\Lambda}{(1-\kappa/c_T)^2}  \,  N^{-2/9}
\end{equation}
whenever $\kappa < \cT$ and $\Lambda(m)\leq 1$, where $\cT$ is defined in \eqref{def:cT}.
\end{lemma}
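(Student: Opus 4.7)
The plan is to pass from the definition \eqref{eq:BigLambdaTilde} of $\tilde\Lambda(m,\kappa)$ to that of $\Lambda(m)$ in \eqref{eq:BigLambda} in two steps: first replace the Riemann sum by the corresponding integral using Lemma \ref{lem:SumIntegralLambdaTilde}, then remove the $\delta$-regularization by a direct pointwise comparison, and finally optimize over $\delta$ to obtain the announced $N^{-2/9}$ rate.

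For the first step, I would fix $\delta>0$ (to be chosen at the end) and apply Lemma \ref{lem:SumIntegralLambdaTilde} for every admissible $\tilde s$, $K$ and $Q_\mu$ with $Q_\mu^2 > (\cT-\kappa) N^{5/3}\ell^{-2}$, obtaining
\begin{equation*}
\lk (\tfrac{2\pi}{\ell}\rk)^3 \sum_{\tilde t\in\Lat+AK} \lambda_{\tilde s,Q_\mu,K,m,\delta}(\tilde t) \leq \int_{\R^3} \lambda_{\tilde s,Q_\mu,K,m,\delta}(\tilde t)\,\dI\tilde t + E_{\mathrm{disc}},
\end{equation*}
where $E_{\mathrm{disc}}\lesssim \tfrac{1}{m}\lk(\tfrac{1}{\ell Q_\mu}+\tfrac{1}{\delta^{1/2}}\rk)\lk(1+\tfrac{N\delta}{\ell^2 Q_\mu^2}+\tfrac{1}{\delta\ell Q_\mu}+\tfrac{N}{\ell^3 Q_\mu^3}\rk)$. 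Using \eqref{eq:condMuQ} in the form $\ell Q_\mu \gtrsim (1-\kappa/\cT)^{1/2}\sqrt{\cT}\, N^{5/6}$, every term in the second parenthesis becomes explicitly controlled by negative powers of $N$ and of $(1-\kappa/\cT)$.

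For the second step I would read off directly from \eqref{def:lambda} the factorization
\begin{equation*}
\lambda_{\tilde s,Q_\mu,K,m,\delta}(\tilde t) = \lambda_{\tilde s,Q_\mu,K,m,0}(\tilde t) \cdot \frac{(\tilde t-AK)^2}{(\tilde t-AK)^2+\delta\ell^{-2}} \cdot \frac{s^2+2Q_\mu^2+N\delta\ell^{-2}}{s^2+2Q_\mu^2}
\end{equation*}
with $s=\tilde s-AK$. The first fraction is at most $1$, and the second is bounded by $1+\tfrac{\delta}{2\cT(1-\kappa/\cT) N^{2/3}}$ thanks to the lower bound on $Q_\mu$. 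Integrating in $\tilde t$ and taking the supremum over $(\tilde s,K,Q_\mu)$, the leading term is $\Lambda(m)$ and the excess is $\tfrac{\delta}{2\cT(1-\kappa/\cT) N^{2/3}}$ times $\sup\int\lambda_{\tilde s,Q_\mu,K,m,0}\,\dI\tilde t$. The key input here is the \emph{structural} estimate $\sup\int\lambda_{\tilde s,Q_\mu,K,m,0}\,\dI\tilde t \lesssim 1/m$, which follows by integrating the pointwise bound of Lemma \ref{lem:pointwiseBound} at $\delta=0$ and performing the radial substitution $t=\beta u$ with $\beta^2=s^2+2Q_\mu^2$; this is what inserts the desired $1/m$ factor into the continuous error.

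Collecting both errors and keeping only leading orders in $N$ yields a bound of the form
\begin{equation*}
\tilde\Lambda(m,\kappa) \leq \Lambda(m) + \frac{C}{m}\lk(\frac{\delta}{(1-\kappa/\cT) N^{2/3}} + \frac{1}{\delta^{1/2}} + (\text{lower order in }N)\rk).
\end{equation*}
Balancing the two displayed terms leads to the choice $\delta\sim((1-\kappa/\cT)N^{2/3})^{2/3}$, for which both are of size $\tfrac{1}{m}(1-\kappa/\cT)^{-1/3}N^{-2/9}$. Passing to the infimum over $\delta$ in \eqref{eq:BigLambdaTilde} and using the crude bound $(1-\kappa/\cT)^{-1/3}\leq (1-\kappa/\cT)^{-2}$ to absorb constants produced by the cross-terms gives the result. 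The main technical obstacle is the bookkeeping: expanding $E_{\mathrm{disc}}$ as a sum of eight cross-terms, one must verify that at the chosen $\delta$ all six terms other than $1/\delta^{1/2}$ and $\delta/((1-\kappa/\cT)N^{2/3})$ decay strictly faster than $N^{-2/9}$. This is precisely why the weight \eqref{eq:weighing} was introduced with a regularization parameter $\delta>0$: it ensures that $1/\delta^{1/2}$ is the only leading discretization price, which is exactly what makes the balance with the $\delta$-removal error possible.
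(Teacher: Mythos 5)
Your proposal is correct and follows essentially the same route as the paper: you use the same factorization of $\lambda_{\tilde s,Q_\mu,K,m,\delta}$ in terms of $\lambda_{\tilde s,Q_\mu,K,m,0}$, the same sum-to-integral control via Lemma~\ref{lem:SumIntegralLambdaTilde}, the same structural estimate $\sup\int\lambda_{\cdot,0}\lesssim 1/m$ (i.e.\ $\Lambda(m)\lesssim m^{-1}$), and you balance the two leading errors at $\delta\sim N^{4/9}$ (your choice $\delta\sim((1-\kappa/\cT)N^{2/3})^{2/3}$ differs only by a harmless $(1-\kappa/\cT)^{2/3}$ factor from the paper's $\delta\sim N^{4/9}$, and both give the same final bound once the cross-term $N/(\ell Q_\mu)^4\lesssim (1-\kappa/\cT)^{-2}N^{-7/3}$ is accounted for). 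The only wording you should tighten is the claim that the remaining cross-terms ``decay strictly faster than $N^{-2/9}$'': for the conclusion one also needs that each such term carries a $(1-\kappa/\cT)$-power no worse than $-2$, which does hold but is the actual source of the $(1-\kappa/\cT)^{-2}$ in the statement.
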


\begin{proof}
We first note that  $\Lambda(m)\leq 1$ implies $m\gtrsim 1$. Moreover, from the definition \eqref{def:lambda} we have
\begin{equation}
\ \lambda_{\tilde s,Q_\mu,K,m,\delta}(\tilde t) \le \lk (1 + \frac { N\delta }{ 2 \lE^2 Q_\mu^2 } \rk )  \lambda_{\tilde s,Q_\mu,K,m,0}(\tilde t) \,.
\end{equation}
Combining this with Lemma~\ref{lem:SumIntegralLambdaTilde} and taking the supremum over $\tilde s$, $K$ and $Q_\mu^2 \geq (\cT-\kappa) N^{5/3} \ell^{-2}$, we obtain
\begin{equation}
\tilde \Lambda(m,\kappa) -  \Lambda(m) \lesssim  \frac 1m \inf_{\delta>0} \sup_{ Q_\mu^2 \geq (\cT-\kappa) N^{5/3} \ell^{-2}} \left[ \frac {N \delta}{ \lE^{2} Q_\mu^2 } +  \left( \frac 1 {\ell Q_\mu} + \frac 1 {\delta^{1/2}}\right) \lk ( 1 +  \frac {N \delta}{ \lE^{2} Q_\mu^2 } + \frac {1}{\delta \ell Q_\mu } + \frac {N}{ \lE^3 Q_\mu^3} \rk)  \right]
\end{equation}
where we also used that $\Lambda(m)\lesssim m^{-1}$ for $m\gtrsim 1$. The supremum over $Q_\mu$ is clearly achieved for $Q_\mu^2 = (\cT-\kappa)N^{5/3}\ell^{-2}$. For an upper bound, we shall choose $\delta \sim  N^{4/9}$, which yields the desired bound
\begin{equation}
\tilde \Lambda(m,\kappa) -  \Lambda(m) \lesssim  \frac 1m (\cT-\kappa)^{-2} N^{-2/9}\,.
\end{equation} 
\end{proof}

\subsection{Proof of Theorem \ref{thm:AprioriBound}} 
Using Prop.~\ref{lem:ToffEst}, Eq.~\eqref{cons44} and Lemma~\ref{cor:DiffLambda}, we get the lower bound
\begin{align}\nonumber
 N^{-1} \TSP(\xi^{\rm per}) &\ge \left( \frac {2m\alpha }{m+1}  - \frac 1 {N^{5/3}\ell} \frac{c_L'}{\cT-\kappa} \right)  \norm{\xi^{\rm per}}^2 \\
&\quad+ \frac 1{1- \kappa/\cT}   \lk(1-  \kappa/\cT-\Lambda(m) - \frac{c_\Lambda N^{-2/9}}{ m(1-\kappa/c_T)^2} \rk)  \LatFact^{3N} \sum_{\vec q \in \Lat^N} \L(\vec q) |\xi^{\rm per}(\vec q)|^2   \label{eq:apri1} 
\end{align}
for any $0<\kappa< \cT$ and  $\mu \geq -\kappa N^{5/3}/\ell^2$. 
Note that the coefficient in front of the last sum is positive for all $N>N_0(\kappa,m)$, defined in \eqref{def:n0}. If $\alpha$ is large enough such that also the first term on the right side of \eqref{eq:apri1} is non-negative, we conclude that $ \TSP(\xi^{\rm per}) \geq 0$.

In case $2m \alpha < (m+1) c_L' (\cT-\kappa)^{-1}N^{-5/3} \ell^{-1}$, on the other hand, we need to dominate the first term on the right side of \eqref{eq:apri1} by the second.   We  use   \eqref{def:cT} to obtain the lower bound 
\begin{equation}
\L (\vec q) \ge 2 \pi^2 \lk ( \frac {2m} {m+1} \rk )^{3/2} Q_\mu \geq  2 \pi^2 \lk ( \frac {2m} {m+1} \rk )^{3/2}  \sqrt{\mu + \kappa N^{5/3} \lE^{-2}}\, . 
\end{equation}
In particular, if we choose
\begin{equation}
\mu =
-\kappa N^{5/3} \lE^{-2} + \frac 1 {4 \pi^4} \frac {m+1} {2 m}   \dfrac {(1-\kappa/\cT)^2[\alpha-(2m)^{-1}(m+1) c_L' (\cT-\kappa)^{-1}N^{-5/3} \ell^{-1}]^2_-}{(1-\kappa/\cT- \Lambda(m)- c_\Lambda m^{-1} (1-\kappa/\cT)^{-2} N^{-2/9})^2} \label{eq:muapriori} 
\end{equation} 
we again conclude that $\TSP(\xi^{\rm per}) \geq 0$. 

Note that for our choice of $\mu$, satisfying in particular $\mu \geq - \cT N^{5/3}\ell^{-2}$, we have
\begin{equation}
\int_{\BE^{N+1}}  \lk (  |\tnabla \phi_\mu^\per|^2    + \mu  |\phi_\mu^\per|^2 \right ) \geq 0
\end{equation}
for all $\phi_\mu^\per \in H^1_\per (B^{N+1})$ that are antisymmetric in the last $N$ variables. Hence the positivity of $\TSP(\xi^\per)$ implies that $\QFP(\psi^\per) \geq - \mu \|\psi^\per\|^2$.  In combination with Lemmas~\ref{lem:periodic} and~\ref{cor:nmu}, 
this completes the proof of Theorem~\ref{thm:AprioriBound}. To simplify its statement, we have additionally used that
\begin{equation}
 (1-\kappa/\cT)^2[\alpha-(2m)^{-1}(m+1) c_L' (\cT-\kappa)^{-1}N^{-5/3} \ell^{-1}]^2_- \leq [\alpha-(2m)^{-1}(m+1) c_L' \cT^{-1}\ell^{-1}]^2_-
 \end{equation}
for $N\geq 1$, and defined
\begin{equation}\label{fdef:cL}
c_L \coloneqq \frac{\mcritMS+1}{2\mcritMS} \frac {c_L'}\cT 
\end{equation}
in Eq.~\eqref{lbthm41}, where $\mcritMS \approx 0.36$ is chosen such that $m \geq \mcritMS$ for $\Lambda(m)\leq 1$. 
\hfill\qed

\section{Proof of Theorem \ref{thm:main}}
\label{sec:Loc}
\newcommand{\boundary}{ \pd}
\newcommand{\inbound}{\supp }
In this section we will give the proof of our main result, Theorem~\ref{thm:main}.

Let $\B = (0,\lO)^3$ and $\bar \B = \bigcup_i^M \bar B_i$ a disjoint decomposition into cubes $B_i = (0,\lI)^3 + z_i$ with $z_i \in \R^3$. We will choose $\lI$ such that $\lO/\lI \in \N$ in which case $M = (\lO/\lI)^3$. Let $1/4 > \ve > 0$ and let $\eta \in \CC^\infty_0(B_\ve(0))$ be non-negative, where we denote by $B_\ve(0)$ the centered ball of radius $\ve$. In the following we will assume that $\ve$ is  a fixed constant independent of all parameters (for example $\ve = 1/8$  works). For $x \in \B$,  define
\begin{equation}
J_i(x) = \lk( \frac {\int_{B_i} \eta(\lI^{-1}(x-y)) \dI y}{\int_{\B} \eta(\lI^{-1}(x-y)) \dI y} \rk)^{1/2}  \,.
\end{equation}
Then $\supp J_i \subseteq B_i + B_{\lI \ve}(0)$ and $J_i(x) = 1$ for $x \in \lI(\ve, 1-\ve)^3 + z_i$. Moreover, $\sum_{i=1}^M J_i^2(x) = 1$ for $x \in B$ by construction. The derivative of $J_i$ can be bounded uniformly in $i$ and $M$ by a constant $c_\eta$  depending only on $\eta$ (and hence $\ve$) as 
\begin{equation}
|\nabla J_i|^2 \le \frac {c_\eta}{\lI^2}\,. \label{def:ceta} 
\end{equation}

Let $\psi \in D(\QF)$ be such that $\supp \psi \subseteq \B^{N+1}$ and $\norm{\psi}_2 = 1$. 
 We use the IMS formula, Prop.~\ref{thm:IMS},  for the quadratic form $\QF$ to localize the impurity particle (with coordinate $x_0$). With $J_i \psi$ denoting the function $(J_i \psi)(x_0,\vec x) = J_i(x_0) \psi(x_0,\vec x)$ we  obtain
\begin{equation}
\QF(\psi) = \sum_{i=1}^M  \QF(J_i \psi) - \frac 1 {2m} \sum_{i=1}^M \int |\nabla J_i(x_0)|^2 |\psi(x_0,\vec x)|^2 \dI x_0 \dI \vec x \,. \label{eq:QF1} 
\end{equation}
We note that the last term is bounded by
\begin{equation}\label{eq:QF1a}
 \sum_{i=1}^M \int |\nabla J_i(x_0)|^2 |\psi(x_0,\vec x)|^2 \le \frac { c_\eta} {\lI^2} \sum_{i=1}^M \int_{\boundary J_i} |\psi(x_0,\vec x)|^2 \dI x_0 \dI \vec x \le \frac {8 c_\eta} {\lI^2}
\end{equation}
since $\ve < 1/2$, where $\boundary J_i = \supp |\nabla J_i|$.  Recall  the definition of the mean density, $\dens = N \lO^{-3}$. We will choose $\lI \sim \dens^{-1/3}$ which means that \eqref{eq:QF1} is of the order $\dens^{2/3}$.

In the next step we want to localize the other particles, to be able to distinguish whether they are close to the impurity or far from it. Because we violate the antisymmetry constraint by doing so, we will work with the extended quadratic form $\QFE$ defined in \eqref{eq:quadraticFormNoSym}. Let $V\in\CC^\infty_0(\R^3)$ satisfy $0\leq V\leq 1$, with $\supp V \subseteq [-2 \ve, 1 + 2\ve]^3$ and $V(x) = 1$ for $ x \in [-\ve,1+\ve]^3$. We define $V_i(x) = V((x-z_i)/\lI)$ and  $\tilde V_i(x) \coloneqq \sqrt{ 1- V_i(x)^2}$. Figure \ref{fig:sketch} visualizes this setup.

\begin{figure}[t]
\begin{center}
\includegraphics[width=13cm]{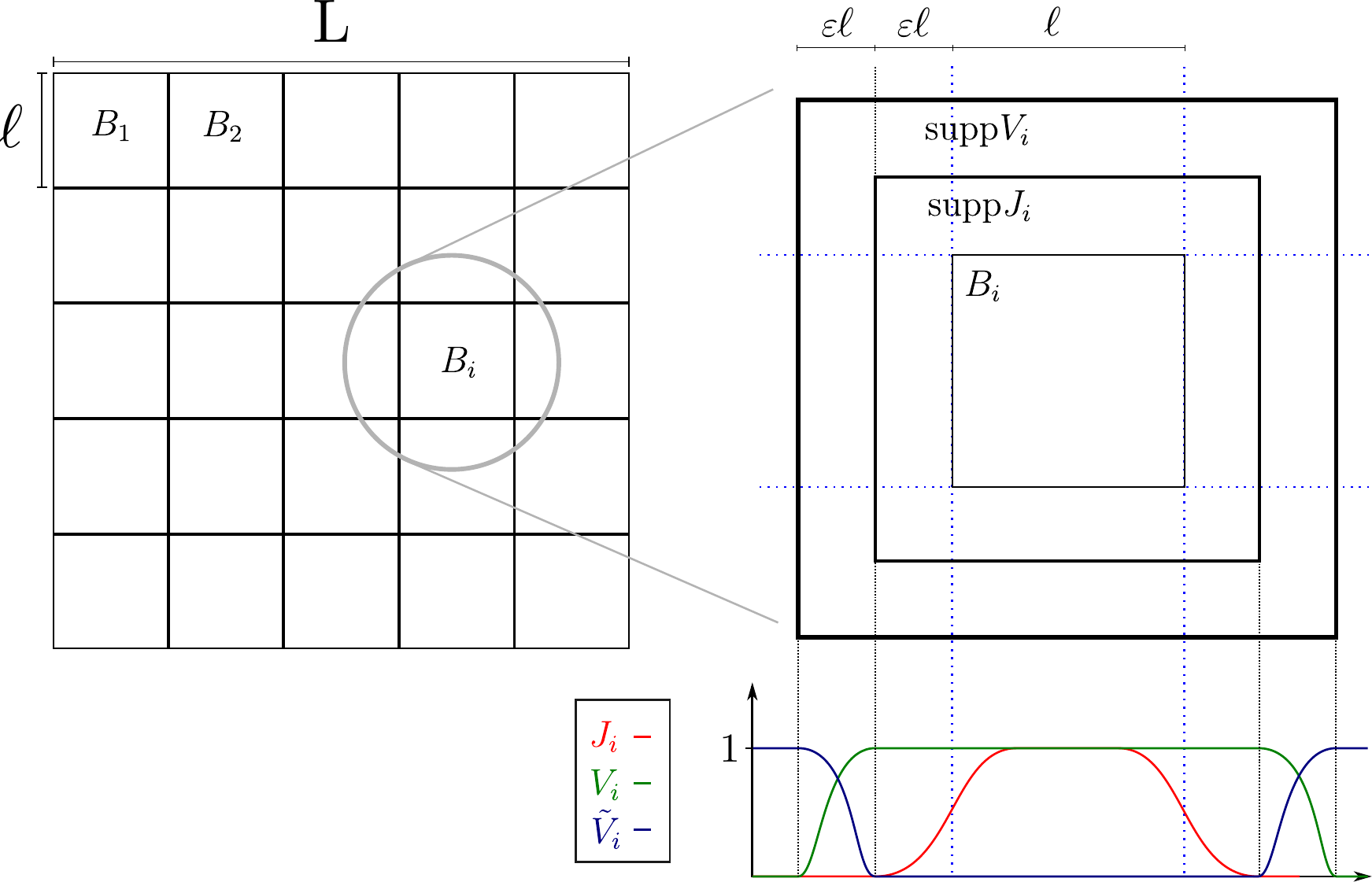}
\end{center}
\caption{A sketch of the setup, the partitions $J_i, V_i, \tilde V_i$ and their  boxes of support.}
\label{fig:sketch}
\end{figure} 

We localize all the remaining particles using the IMS formula  in Prop.~\ref{thm:IMS}, with the localization functions
\begin{equation}
(x_1,\ldots, x_N) \mapsto \prod_{j \in A} V_i(x_j) \prod_{k \in A^c} \tilde V_i(x_k) 
\end{equation}
for $A \subseteq \{1,\ldots, N\}$, where $A^c = \{1,\ldots, N\} \setminus A$. For short we define
\begin{equation}
\vp_{i,A}(x_0,\vec x) \coloneqq  J_i(x_0) \prod_{k \in A} V_i(x_k) \prod_{j \in A^c} \tilde V_i(x_j) \psi(x_0,\vec x)\,. 
\end{equation}
A straightforward calculation using Prop.~\ref{thm:IMS} and the fact that $V_i^2 + \tilde V_i^2 = 1$ shows that
\begin{equation}
\QF(J_i\psi) = \sum_{A \subseteq \{1,\ldots, N\} }  \left( \QFE \lk(\vp_{i,A} \rk) 
 - \frac 1 2  \sum_{j=1}^N   \int \left(|\nabla V_i(x_j)|^2 + |\nabla \tilde V_i(x_j)|^2\right) \lk | \vp_{i,A}(x_0,\vec x) \rk |^2  \dI x_0 \dI \vec x \right) \, .  \label{eq:afterIMS} 
\end{equation}
Here it is necessary to 
 introduce  the extended quadratic form $\QFE$ since  the functions $\vp_{i,A}$ are not antisymmetric in all $N$ variables $(x_1,\dots,x_N)$. 
They are still separately antisymmetric in the coordinates  in $A$ and in the ones in $A^c$, however.

In the next lemma we will show that the energy  $\QFE(\varphi_{i,A})$ splits up into a non-interacting energy  for the particles in $A^c$ that are localized away from the impurity, and in a point interacting quadratic form for particles in $A$.

\begin{lemma}
\label{lem:OutsideFreePart}
We define the functions $\vpIn \in L^2(\R^{3(|A|+1)})$ and $\vpOut \in L^2(\R^{3|A^c|})$ via their Fourier transforms as
\begin{equation}
\hvpIn(p_0,\vec p_{A}) = \hat\vp_{i,A}(p_0, \vec p) = 
\hvpOut(\vec p_{A^c})\,.
\end{equation}
Then 
\begin{equation}
\QFE(\vp_{i,A}) = \int \QFp{\alpha}{|A|}(\vpIn) \dI \vec p_{A^c} + \int \left\langle \vpOut \left| - \tfrac 12 \sum\nolimits_{i\in A^c} \Lap_i \right| \vpOut\right\rangle   \dI \vec p_{A} \dI p_0  \,.\label{eq:splitUpInsideOutside} 
\end{equation}
\end{lemma}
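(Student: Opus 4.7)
The plan is to first observe that the singular parts of $\vp_{i,A}$ indexed by $j \in A^c$ vanish identically, which reduces the problem to a $(|A|+1)$-body point-interaction problem parametrized by $\vec p_{A^c}$. By construction, $V_i \equiv 1$ on an $\lI\ve$-neighborhood of $\supp J_i$, so $\supp J_i \cap \supp \tilde V_i = \emptyset$. Since $\vp_{i,A}$ has its impurity coordinate supported in $\supp J_i$ and each $j$-th coordinate (for $j \in A^c$) supported in $\supp \tilde V_i$, Corollary~\ref{cor:domainXi} forces $\xi_j^{\vp_{i,A}} \equiv 0$ for every $j \in A^c$. Hence only the $|A|$ singular parts indexed by $j \in A$ contribute to $\TSE(\vec\xi^{\vp_{i,A}})$.

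Next I would pass to partial Fourier transforms in the $\vec x_{A^c}$ variables, treating $\vec p_{A^c}$ as a parameter. The algebraic key is the identity
\[
G_\mu(p_0,\vec p) = \left(\tfrac 1{2m} p_0^2 + \tfrac 12 \vec p_A^2 + \mu_{\vec p_{A^c}}\right)^{-1}, \qquad \mu_{\vec p_{A^c}} \coloneqq \mu + \tfrac 12 \vec p_{A^c}^2,
\]
which shows that, for a.e.\ fixed $\vec p_{A^c}$, the decomposition $\vp_{i,A} = \phi_\mu + \sum_{j \in A} G_\mu \xi_j^{\vp_{i,A}}$ becomes, after the partial Fourier transform, precisely the canonical decomposition of $\vpIn$ at shifted mass parameter $\mu_{\vec p_{A^c}}$ for the $(|A|+1)$-particle problem. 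By uniqueness of the decomposition, together with the antisymmetry of $\vp_{i,A}$ in the $A$-coordinates (which $\vpIn$ inherits), one concludes that $\vpIn \in D(\QFp{\alpha}{|A|})$ for a.e.\ $\vec p_{A^c}$, with regular part $\hat\phi_\mu(\,\cdot\,,\vec p_{A^c})$ and $j$-th singular part $\hat\xi_j^{\vp_{i,A}}(\,\cdot\,,\vec p_{A^c})$.

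The identity \eqref{eq:splitUpInsideOutside} then follows by term-by-term matching via Plancherel and Fubini. In momentum space one writes $H_\mu^{N+1} = H^{|A|+1}_{\mu_{\vec p_{A^c}}} - \tfrac 12 \vec p_{A^c}^2 - \tfrac 12 \sum_{j \in A^c} \Delta_j$; the regular term $\langle \phi_\mu|H_\mu^{N+1}|\phi_\mu\rangle$ then integrates into $\int d\vec p_{A^c}\,\langle \phi_\mu(\,\cdot\,,\vec p_{A^c})|H^{|A|+1}_{\mu_{\vec p_{A^c}}}|\phi_\mu(\,\cdot\,,\vec p_{A^c})\rangle$, while $-\mu\|\vp_{i,A}\|^2$ splits as $\int d\vec p_{A^c}\,[-\mu_{\vec p_{A^c}}\|\vpIn\|^2] + \int \tfrac 12 \vec p_{A^c}^2 \|\vpIn\|^2\,d\vec p_{A^c}$, and the second piece is exactly $\int \langle \vpOut|-\tfrac 12 \sum_{j\in A^c}\Delta_j|\vpOut\rangle\,d(\vec p_A,p_0)$ by Plancherel. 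For the singular part, $L(\vec k)$ factors as $L^{|A|}_{\mu_{\vec k_{A^c}}}(s_j,\vec k_{A\setminus\{j\}})$ upon absorbing $\tfrac 12 \vec k_{A^c}^2$ into the shifted mass, and $G_\mu$ in the off-diagonal term factors by the same device; Fubini then yields $\TSE(\vec\xi^{\vp_{i,A}}) = \int d\vec p_{A^c}\,T_{\alpha,\mu_{\vec p_{A^c}},|A|}(\vec\xi^{\vp_{i,A}}(\,\cdot\,,\vec p_{A^c}))$. The main obstacle is the first step: the separation into $(|A|+1)$-body and free-$A^c$-body parts works only because of the vanishing of $\xi_j^{\vp_{i,A}}$ for $j\in A^c$, which itself relies on the careful geometric choice of cutoffs ensuring $\supp J_i \cap \supp \tilde V_i = \emptyset$; beyond that, everything reduces to routine bookkeeping on the partial Fourier transform.
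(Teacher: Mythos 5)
Your proof is correct and follows essentially the same route as the paper's: vanish the singular parts indexed by $A^c$ via Corollary~\ref{cor:domainXi}, then pass to the partial Fourier transform in $\vec x_{A^c}$ and recognize the sliced decomposition as the canonical $(|A|+1)$-body decomposition at a $\vec p_{A^c}$-dependent mass parameter, finishing by $\mu$-independence. The only cosmetic difference is the direction of the shift: you absorb $\tfrac12\vec p_{A^c}^2$ into the $|A|$-body parameter ($\mu_{\vec p_{A^c}} = \mu + \tfrac12\vec p_{A^c}^2$), whereas the paper equivalently replaces $\mu$ by $\mu - \tfrac12\vec p_{A^c}^2$ in the $(N+1)$-body form before reading off the $|A|$-body pieces at parameter $\mu$; your version has the minor advantage of never formally referring to a possibly negative $(N+1)$-body mass parameter.
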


\begin{proof}
We define $\xi_j$ and $\phi_\mu$ for some $\mu > 0$ using the unique decomposition $\vp_{i,A} = \phi_\mu + \sum_{j=1}^N G_\mu \xi_j$.  Corollary \ref{cor:domainXi} implies that $\xi_j = 0$ for $j\in A^c$. 
 Hence 
\begin{align}\nonumber 
\QFE(\vp_{i,A}) &= \int \dI \vec p_{A^c} \Bigg [ \int |\hat \phi_\mu(p_0,\vec p)|^2 \lk (\frac 1 {2m} p_0^2 + \frac 1 2  \vec p^2 + \mu \rk) \dI \vec p_A \dI p_0  - \mu \int |\hat \vp_{i,A}(p_0,\vec p)|^2 \dI \vec p_A \dI p_0 \\ \nonumber
&\qquad\qquad\quad + \frac {2m}{m+1} \alpha \sum_{i\in A} \int |\hat\xi_i(p_i, \hat p_i)|^2  \dI \vec p_A + \sum_{i\in A} \int \L(p_i, \hat p_i) |\hat\xi_i(p_i, \hat p_i)|^2 \dI \vec p_{A}  \\
& \qquad\qquad\quad - \sum_{ \substack{i,j \in A \\ i \ne j }} \int \frac { \hat\xi_i^\ast(p_0+p_i, \hat p_i) \hat\xi_j(p_0+p_j, \hat p_j)} {\frac 1 {2m} p_0^2 + \frac 1 2 \vec p^2 + \mu} \dI \vec p_A \dI p_0 \Bigg]   \,.
\end{align}
Following the argumentation in the proof of Lemma~\ref{cor:nmu} we see that the expression inside the integral over $\vec p_{A^c}$ is independent of $\mu$. In particular this allows us to shift $\mu \to \mu - \vec p_{A^c}^2/2$ for fixed $\vec p_{A^c}$, which gives
\begin{align}\nonumber
\QFE(\vp_{i,A}) &= \int \dI \vec p_{A^c} \Bigg [ \int |\hat \phi_{\mu-\vec p_{A_c}^2/2}(p_0,\vec p)|^2 \lk (\frac 1 {2m} p_0^2 + \frac 1 2  \vec p_A^2 + \mu \rk) \dI \vec p_A \dI p_0  \\ \nonumber
& \qquad\qquad\quad - \lk (\mu - \frac {\vec p_{A^c}^2}2 \rk) \int |\hat \vp_{i,A}(p_0,\vec p)|^2 \dI \vec p_A \dI p_0 \\ \nonumber
&\qquad\qquad\quad + \frac {2m}{m+1} \alpha \sum_{i\in A} \int |\hat\xi_i(p_i, \hat p_i)|^2  \dI \vec p_A + \sum_{i\in A} \int \Lp {\mu}{|A|}(p_i, \vec p_{A \setminus \{i\}}) |\hat\xi_i(p_i, \hat p_i)|^2 \dI \vec p_{A} \\
&\qquad\qquad\quad - \sum_{ \substack{i,j \in A \\ i \ne j }} \int \frac { \hat\xi_i^\ast(p_0+p_i, \hat p_i) \hat\xi_j(p_0+p_j, \hat p_j)} {\frac 1 {2m} p_0^2 + \frac 1 2 \vec p^2_A + \mu} \dI \vec p_A \dI p_0 \Bigg]   
\end{align}
where we used the fact that  $\Lp{\mu - \vec p_{A^c}^2/2} {N} (p_i, \hat p_i) = \Lp{\mu}{|A|}(p_i, \vec p_{A \setminus \{i\}})$. The result then follows by noting that the Fourier transform of the regular part of $\vpIn$ for fixed $\vec p_{A^c}$ is equal to $\hat\phi_{\mu-\vec p_{A^c}^2}(\, \cdot\, ,\vec p_{A^c})$, and using the 
the antisymmetry of $\vpIn$. 
\end{proof}

We can apply a similar decomposition also to the second term in \eqref{eq:afterIMS}. For simplicity, let
\begin{equation}
W_i(x) = \frac 12 \left( |\nabla V_i(x)|^2 + |\nabla \tilde V_i(x)|^2 \right) \,.
\end{equation}
Then \eqref{eq:afterIMS} and \eqref{eq:splitUpInsideOutside} imply that we can write
\begin{equation}
\QF(J_i\psi)  = \sum_{A \subseteq \{1,\ldots, N\} }  \| \vp_{i,A}\|^2  \left[  \mathfrak{A}_{i,A}   +  \mathfrak{B}_{i,A}
  \right]  \label{eq:afterIMS2} 
\end{equation}
where
\begin{equation}
\mathfrak{A}_{i,A}=  \| \vp_{i,A}\|^{-2}  \int \left(  \QFp{\alpha}{|A|}(\vpIn) -  \left\langle \vpIn \left|  \sum\nolimits_{j \in A}  W_i(x_j)  \right| \vpIn\right\rangle \right)   \dI \vec p_{A^c}
\end{equation}
and
\begin{equation}
\mathfrak{B}_{i,A}=  \| \vp_{i,A}\|^{-2}  \int \left\langle \vpOut \left|  \sum\nolimits_{j\in A^c} \left( - \tfrac 12 \Lap_j + W_i(x_j) \right) \right| \vpOut\right\rangle   \dI \vec p_{A} \dI p_0 \,.
\end{equation}

To obtain a lower bound on $\mathfrak{A}_{i,A}$ we can use Theorem \ref{thm:AprioriBound}, and for the non-interacting part $\mathfrak{B}_{i,A}$  we use the following proposition. We recall that the  energy $\EGDp{n}$ on the box $\B = (0, \lO)^3$ was defined in the beginning of Section \ref{sec:Results} as the ground state energy of the non-interacting Hamiltonian $\HGp{n}$ with Dirichlet boundary conditions.

\begin{prop}
\label{lem:LowerBoundOutside}
For $n\in \N$, let $\phi \in H_{\rm as}^1(\R^{3n})$ be supported in $(0,L)^{3n}$, with $\norm{\phi}_2 = 1$, and let $1\leq i\leq M$. Then 
\begin{equation}
\sum_{j=1}^n   \int  \left( \tfrac 12  |\nabla_j \phi|^2  -   W_i(x_j) |\phi|^2 \right)
 \ge \EGDp{n} - \const \, \lk ( \frac {n^{1/3}}{\ell L}  + \lI^{-2} + \frac {n \lI}{L^3} \rk )\, . \label{eq:LowerBoundOutside} 
\end{equation}
\end{prop}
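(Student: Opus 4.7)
The proof is a direct application of the Dirichlet version of the positive-density Lieb--Thirring inequality, Theorem \ref{thm:lti}, to the one-body density matrix of $\phi$. Because $\phi$ is antisymmetric, supported in $B^n$ and normalized, its one-body density matrix $\gamma_\phi$ is a self-adjoint trace-class operator on $L^2(B)$ with $0 \le \gamma_\phi \le 1$, $\tr \gamma_\phi = n$, and range in $H^1_0(B)$. In terms of $\gamma_\phi$ one has
$$\sum_{j=1}^n \int \lk( \tfrac 12 |\nabla_j \phi|^2 - W_i(x_j)|\phi|^2 \rk) = \tr \bigl[ (-\tfrac 12 \Delta - W_i) \gamma_\phi \bigr],$$
where $-\tfrac 12 \Delta$ denotes the Dirichlet Laplacian on $B$. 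Hence the entire bound reduces to a one-body statement about admissible fermionic density matrices on $B$.

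Applying Theorem \ref{thm:lti} with the potential $V = -W_i$ then delivers a lower bound of the form $\EGDp{n} - C\, \mathcal{E}(W_i)$, where $\mathcal{E}(W_i)$ is the LT-type correction built from appropriate norms of $W_i$ together with the mean density $\dens = n/L^3$. What remains is to estimate $\mathcal{E}(W_i)$ using the explicit structure of the localization potential $W_i = \tfrac 12(|\nabla V_i|^2 + |\nabla \tilde V_i|^2)$. By construction $W_i$ is smooth and non-negative with $\|W_i\|_\infty \lesssim \ell^{-2}$, and $\supp W_i$ is contained in a shell of width $O(\ve \ell)$ around $\partial B_i$ whose volume is $O(\ell^3)$; consequently $\int W_i \lesssim \ell$ and $\int W_i^{5/2} \lesssim \ell^{-2}$.

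The three error terms in \eqref{eq:LowerBoundOutside} are then identified as follows: the $\ell^{-2}$ contribution corresponds to the usual LT piece $\int W_i^{5/2}$; the $n\ell/L^3 = \dens\, \ell$ contribution corresponds to the linear-in-$W_i$ piece $\dens \int W_i$; and the cross term $n^{1/3}/(\ell L) \sim \dens^{1/3}/\ell$ arises from the coupling between the Fermi-momentum scale $\dens^{1/3}$ and the gradient scale $\ell^{-1}$ of $W_i$ present in Theorem \ref{thm:lti}. The main obstacle therefore lies not in the argument for Proposition \ref{lem:LowerBoundOutside} itself but in establishing Theorem \ref{thm:lti} — the positive-density LT inequality adapted to Dirichlet boundary conditions on a cube, which is the content of the appendix. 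Once that theorem is granted, matching its output to the specific error terms in \eqref{eq:LowerBoundOutside} is a routine computation using the pointwise and volume estimates for $W_i$.
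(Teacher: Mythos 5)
Your proposal is correct and follows the same route as the paper: reduce to a one-body problem for the reduced density matrix and invoke the Dirichlet positive-density Lieb--Thirring inequality, then plug in $\|W_i\|_\infty\lesssim\ell^{-2}$ and $|\supp W_i|\lesssim\ell^3$ to produce the three error terms. The one place you are a bit cavalier is the passage from Theorem~\ref{thm:lti} (stated as a bound on $\tr(-\Delta+V-\mu)_- -\tr(-\Delta-\mu)_- +\int\rho_0 V$) to a bound on $\EGDp{n}-\EGDV$; this requires choosing $\mu=e_n$, noting that $\tr(-\Delta_L-e_n)_-=-\EGDp{n}+n e_n$ and that $-\Delta_L+V-e_n$ has at least $n$ non-positive eigenvalues, and absorbing $\int\rho_0 V$ via $\rho_0\lesssim\mu^{3/2}$ with $\mu\lesssim n^{2/3}/L^2$ — exactly the content of Corollary~\ref{cor:BoundPotential}, which the paper applies directly in place of Theorem~\ref{thm:lti}.
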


\begin{proof}
The result follows in a straightforward way from Corollary \ref{cor:BoundPotential}, which is an adaptation of the Lieb-Thirring inequality at positive density derived in  \cite{posDensitiy}.  
We  use that $|\supp(W_i)| \lesssim \lI^3$ and $\|W_i\|_\infty \lesssim \lI^{-2}$. 
This allows us to bound the right side of \eqref{eq:cora1} as
\begin{equation}
\int_B \lk ( \frac {n^{1/3}}L  |W|^2 + |W|^{5/2} + \frac n{L^3} |W| \rk ) \lesssim  \frac {n^{1/3}}{\ell L}  + \lI^{-2} + \frac {n \lI}{L^3}
 \label{eq:errTermCor} 
\end{equation}
from which the statement readily follows.
\end{proof}

Since $\vpOut$ is an antisymmetric function supported in $B^{|A^c|}$, Prop.~\ref{lem:LowerBoundOutside} implies that 
\begin{equation}
\left\langle \vpOut \left|  \sum\nolimits_{j\in A^c} \left( - \tfrac 12 \Lap_j + W_i(x_j) \right) \right| \vpOut\right\rangle \geq \left( E^D_{|A^c|} - \const  \, \lk (  {\bar\rho^{1/3}}{\ell^{-1}}  + \lI^{-2} +  \bar\rho \lI \rk ) \right) \| \vpOut\|^2 
\end{equation}
where we used $|A^c|\leq N$ in the error term. 
To minimize the error we choose  $\lI \sim \dens^{-1/3}$. The factor on the right side of \eqref{eq:LowerBoundOutside} then equals $\EGDp{N-|A|} -  \const \, \dens^{2/3}$.
Because of the condition that $\lO/\lI \in \mathbb N$ we cannot choose $\lI$ without restriction but it is always possible to choose a value such that $\lI \sim \dens^{-1/3}$. 
We define $e_N$ to be the $N$-th eigenvalue of the one-particle Dirichlet Laplacian on $B=(0,L)^3$. Then $\EGDp {N-|A|} \ge \EGDp N  - |A| e_N$. Moreover, we can bound $e_N \lesssim \dens^{2/3}$. In particular, 
\begin{equation}\label{519}
\mathfrak{B}_{i,A} \geq  \EGDp{N} -  \const \left( |A|+1 \right)  \dens^{2/3}   \,.
\end{equation}

We proceed with a lower bound on $\mathfrak{A}_{i,A}$.  
Theorem \ref{thm:AprioriBound} can be used for a lower bound on $ \QFp{\alpha}{|A|}$  only if $|A| > N_0$, with $N_0$ defined in \eqref{def:n0}.
In  case that $|A| \leq 2 N_0$ we use the bound \eqref{eq:bound3}  originating form \cite{MoserSeiringer2017} instead, which implies that
\begin{equation}
\QFp{\alpha}{|A|}(\vpIn)   \gtrsim -\frac {\alpha_-^2}{(1- \Lambda(m))^2} \norm{\vpIn}^2 \label{eq:LowerBoundLowN} 
\end{equation}
using $m \gtrsim 1$.
In combination with $\|W_i\|_\infty \lesssim \bar\rho^{2/3}$ this gives the lower bound
\begin{equation}
\mathfrak{A}_{i,A} \gtrsim -\frac {\alpha_-^2}{(1- \Lambda(m))^2}  - |A| \bar\rho^{2/3} 
\end{equation}
and hence
\begin{equation}\label{5:22}
\mathfrak{A}_{i,A} + \mathfrak{B}_{i,A} \geq  \EGDp{N} - \const \left( \frac {\alpha_-^2}{(1- \Lambda(m))^2}  + (N_0+1) \bar\rho^{2/3} \right)
\end{equation}
in case $|A|\leq 2 N_0$. 

For $|A|\geq 2 N_0$, we use the bound in Theorem \ref{thm:AprioriBound} on $ \QFp{\alpha}{|A|}(\vpIn)$. Since $\vpIn$ is an $|A|+1$-particle wavefunction supported in a cube of side length $\ell(1+2\ve)$, Theorem \ref{thm:AprioriBound} implies that
\begin{equation}
 \QFp{\alpha}{|A|}(\vpIn) \geq  \left(  \kappa \frac {|A|^{5/3}}{\lI^2 (1 + 2 \ve)^2} - U  \right) \| \vpIn \|^2 
\end{equation}
with 
\begin{equation}
U =  \frac 1 {4 \pi^4} \frac {m+1} {2 m}   \dfrac { [\alpha-c_L \lE^{-1}]^2_-}{(1-\kappa/\cT - \Lambda(m))^2 ( 1 - 2^{-2/9} )^2 } \,.
\end{equation}
In combination with \eqref{519} and $\|W_i\|_\infty \lesssim \bar\rho^{2/3}$ this yields the bound
\begin{align}\nonumber
\mathfrak{A}_{i,A} + \mathfrak{B}_{i,A} & \geq  \EGDp{N}  +  \kappa \frac {|A|^{5/3}}{\lI^2 (1 + 2 \ve)^2} -  \const \left( |A|+1 \right)  \dens^{2/3}  - U
\\ & \geq    \EGDp{N}  - U - \const \, \kappa^{-3/2} \bar\rho^{2/3} \label{5:25}
\end{align}
where we have minimized over $|A|$ in the last step, and used that $\ve\lesssim 1$ and $\ell \sim \bar\rho^{-1/3}$.

We are still free to choose $\kappa$ in such a way as  to minimize the error terms. We shall choose $\kappa = \cT \nu (1-\Lambda(m))$ for some $0<\nu<1$ (e.g., $\nu = 1/2$). Then $N_0\lesssim (1-\Lambda(m))^{-9/2}$, and hence \eqref{5:22} and \eqref{5:25} together yield the bound
\begin{align}\nonumber
\mathfrak{A}_{i,A} + \mathfrak{B}_{i,A}  & \geq  \EGDp{N}  - \const \left( \frac {[\alpha - c_L \ell^{-1} ]_-^2}{(1- \Lambda(m))^2} + \frac {\dens^{2/3}} {(1- \Lambda(m))^{9/2}} \right)
\\  & \geq  \EGDp{N}  - \const \left( \frac {\alpha_-^2}{(1- \Lambda(m))^2} + \frac {\dens^{2/3}} {(1- \Lambda(m))^{9/2}} \right)
\end{align}
which is valid for all $A \subset \{ 1, \dots, N\}$. In combination with \eqref{eq:QF1}, \eqref{eq:QF1a} and \eqref{eq:afterIMS2}, 
this completes the proof of  Theorem~\ref{thm:main}. 
\hfill\qed

\appendix

\section{Lieb-Thirring inequality in a box}
In this appendix we will follow the analysis of \cite{posDensitiy} to show a positive density Lieb-Thirring inequality for a system of non-interacting fermions in a box with Dirichlet boundary conditions. When reformulated via a Legendre transformation as a bound on the difference between the ground state energies with and without an external potential, we will see that this inequality in particular implies Prop. \ref{lem:LowerBoundOutside}. 

Let $\BLT = [-\lLT/2,\lLT/2]^3$ be the cube in $\R^3$ and let $\Pi^-_{\lLT,\mu} \coloneqq  \unit(- \Delta_\lLT \le \mu)$, where $\Delta_\lLT$ denotes the Dirichlet Laplacian on  $\BLT$.  For short we will just write $\Pi^-$ for $\Pi^-_{\lLT,\mu}$,  and $\Pi^+ = 1- \Pi^-$. For a density matrix $\gamma$ we denote the corresponding density by $\rho_\gamma$. Of particular relevance for us is the density corresponding to $\Pi^-$, which we denote by $\rho_0$. 
Differently to the case of periodic boundary conditions (discussed in \cite{posDensitiy}), $\rho_0$ is not a constant and is given by
\begin{equation}
\rho_0(x) = \sum_{\substack{p \in \pi \mathbb N^3/\lLT\\ p^2 \le \mu}} |\phi_p(x)|^2 \label{def:rho0} 
\end{equation}
where $\phi_p$ are the eigenvectors of $-\Lap_\lLT$ to the eigenvalues $p^2$, i.e.,
\begin{equation}
\phi_p(x) =  \lk ( \frac {2} {\lLT} \rk )^{3/2} \prod_{j=1}^3 \cos(p_j x_j) \label{eq:DirichletState} 
\end{equation}
for $x=(x_1,x_2,x_3) \in \R^3$.
Since the absolute value of each eigenvector is pointwise bounded by $(2/\lLT)^{3/2}$ we have
\begin{equation}
\rho_0(x) \le  \lk ( \frac 2 \lLT \rk )^{3}  \sum_{\substack{p \in \pi \mathbb N^3/\lLT\\ p^2 \le \mu}} 1 \le \lk ( \frac 2 \lLT \rk )^{3} \frac {4 \pi}{3} \frac {\mu^{3/2} \lLT^3}{\pi^3} = \frac {2^5 \mu^{3/2}}{3 \pi^2}  \,.
\end{equation}

\begin{remark*}
Since the lowest eigenvalue of $-\Delta_L$ equals $3 \pi^2 \lLT^{-2}$, the problem simplifies for 
 $\mu < 3 \pi^2 \lLT^{-2}$ since the projections $\Pi^{\pm}_{\lLT,\mu}$ become trivial. In this case we can simply apply the original Lieb-Thirring inequality \cite{LT} to obtain the desired bound. For our application we shall need $\mu \gg L^{-2}$, however, hence we shall restrict our attention to $\mu \ge 3 \pi^2 \lLT^{-2}$ in the following theorem.
\end{remark*}

For a real number $t$ we denote its positive part by $t_+$ and its negative part by $t_-$. In particular, $t = t_+ - t_-$. 

\begin{theorem}
\label{thm:posDensity}
Let $\mu \ge 3 \pi^2 \lLT^{-2}$. Let $Q$ be a self-adjoint operator of finite rank satisfying $- \Pi^-_{\lLT,\mu} \le Q \le 1 - \Pi^-_{\lLT,\mu}$,   with density $\rho_Q$. There exist positive constants $\tilde K$ and $\eta$ independent of $\mu, \lLT$ and $Q$ such that
\begin{equation}
\tr (- \Delta_\lLT - \mu) Q \ge \tilde K \int_{\BLT} S \lk ( (|\rho_Q(x)| - \eta \lLT^{-1} \mu)_+ \rk) \dI x 
\end{equation}
with
\begin{equation}
S(\rho) \coloneqq  (\mu^{3/2} + \rho)^{5/3} - \mu^{5/2} - \frac 5 3 \mu \rho\,. 
\end{equation}
\end{theorem}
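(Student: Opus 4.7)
The plan is to closely follow the strategy of \cite{posDensitiy}, which established the analogous positive-density Lieb--Thirring bound for periodic boundary conditions, and to absorb all new errors coming from the non-constancy of the Dirichlet background density $\rho_0$ into the cut-off shift $\eta L^{-1}\mu$ that appears on the right-hand side.

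I would begin by splitting $Q$ with respect to the projections $\Pi^\pm$. Writing $Q = Q^{++} + Q^{+-} + Q^{-+} + Q^{--}$ with $Q^{\sigma\tau} := \Pi^\sigma Q \Pi^\tau$, the operator constraint $-\Pi^- \le Q \le 1 - \Pi^-$ forces $Q^{++} \ge 0$ and $Q^{--} \le 0$, while $(-\Delta_L - \mu)$ is non-negative on $\mathrm{ran}\,\Pi^+$ and non-positive on $\mathrm{ran}\,\Pi^-$. Therefore
\begin{equation*}
\tr(-\Delta_L-\mu)Q \;=\; \tr(-\Delta_L-\mu)Q^{++} + \tr(\mu-(-\Delta_L))(-Q^{--})
\end{equation*}
decomposes into a sum of two non-negative contributions, which I would estimate separately. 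Simultaneously the density decomposes as $\rho_Q = \rho_{Q^{++}} + \rho_{Q^{--}} + 2\Re \rho_{Q^{+-}}$; the off-diagonal piece is controlled in terms of the two diagonal pieces by a standard Cauchy--Schwarz argument, as in \cite{posDensitiy}.

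The core step is a local Lieb--Thirring-type lower bound for each diagonal block: for $0 \le A \le \Pi^+$, one shows
\begin{equation*}
\tr(-\Delta_L-\mu)A \;\gtrsim\; \int_{\BLT}\!\Big[(\bar\rho_0 + \rho_A(x))^{5/3} - \bar\rho_0^{5/3} - \tfrac 53 \bar\rho_0^{2/3}\rho_A(x)\Big]\,dx,
\end{equation*}
with $\bar\rho_0 \sim \mu^{3/2}$ a constant reference density, and an analogous bound for $-\Pi^-\le A\le 0$ with $|\rho_A|$ in place of $\rho_A$. This inequality is the direct Dirichlet analogue of the estimate of \cite{posDensitiy}: it follows by combining the usual Lieb--Thirring bound on $A$ with the spectral constraint $A = \Pi^+ A \Pi^+$, which forces each infinitesimal contribution of $A$ to pay at least $\mu$ of kinetic energy. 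Summing the two diagonal estimates, using convexity of the integrand together with the Cauchy--Schwarz control of $\rho_{Q^{+-}}$, gives a lower bound of the desired shape in terms of $|\rho_Q|$.

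The main obstacle is that the true Dirichlet Fermi density $\rho_0(x)$ from \eqref{def:rho0} is not equal to $\bar\rho_0$: it vanishes on $\partial \BLT$ and carries oscillations of order $L^{-1}\mu$ in the bulk. A Weyl-type estimate based on the explicit formula \eqref{eq:DirichletState} together with standard lattice counting for $\{p\in\pi\mathbb N^3/L : p^2 \le \mu\}$ (whose Fermi sphere has area $\sim \mu$ resolved by a lattice of spacing $\pi/L$) yields $|\rho_0(x) - \bar\rho_0| \lesssim L^{-1}\mu$ in the bulk, while a boundary strip of width $\sim L^{-1}$ is handled by the crude pointwise bound $\rho_0(x) \le 32\mu^{3/2}/(3\pi^2)$ recorded before the theorem. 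Since $S$ is non-negative, convex, vanishes to second order at the origin and is monotone on $[0,\infty)$, both error types can be absorbed into the cut-off $(|\rho_Q(x)| - \eta L^{-1}\mu)_+$ appearing in the statement, after possibly enlarging $\eta$. The hypothesis $\mu \ge 3\pi^2 L^{-2}$ guarantees that $\Pi^-$ is non-trivial and ensures that the ratio between the lattice spacing $\pi/L$ and the Fermi momentum $\sqrt{\mu}$ is $\le 1$, so that the above Weyl estimate is quantitatively meaningful.
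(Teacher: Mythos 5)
Your high-level skeleton is right — decompose $Q$ into $Q^{++}+Q^{+-}+Q^{-+}+Q^{--}$, estimate the diagonal blocks with a Rumin-type argument as in \cite{posDensitiy}, control the off-diagonal blocks separately, and reassemble using convexity of $S$. That is indeed what the paper does. But there are two genuine gaps in your proposal.

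\textbf{The diagonal estimate does not follow from ``usual Lieb--Thirring plus the spectral gap''.} For $0\le A\le \Pi^+$, the two ingredients you invoke yield only $\tr(-\Delta_L)A \gtrsim \int\rho_A^{5/3}$ and $\tr(-\Delta_L-\mu)A\ge 0$. Subtracting $\mu\int\rho_A$ from the Lieb--Thirring bound gives $\int(\rho_A^{5/3}-\mu\rho_A)$, which is \emph{negative} whenever $\rho_A\lesssim\mu^{3/2}$, whereas $S(\rho)\sim\mu^{-1/2}\rho^2>0$ there. In other words, your route completely loses the quadratic information near the Fermi level, which is the whole point of the $S$-functional. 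The paper instead reruns the Rumin layer-cake argument from scratch for Dirichlet eigenfunctions: $\tr|\Delta_L+\mu|\gamma = \int_0^\infty \tr(P_e\gamma P_e)\,de$, the pointwise bound $\rho_e(x)\ge(\sqrt{\rho_\gamma(x)}-\sqrt{f(e)})_+^2$ with $f(e)$ a rescaled lattice-count in the spectral shell $|p^2-\mu|<e$ (using $|\phi_p(x)|^2\le (2/L)^3$), and finally the splitting $f(e)\le u+g(e)$ with $u\sim\mu/L$. It is precisely the $u$-term in this lattice count, not a Weyl comparison of $\rho_0(x)$ with a constant, that produces the $\eta L^{-1}\mu$ shift in the statement.

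\textbf{The off-diagonal blocks are not controlled by a ``standard Cauchy--Schwarz'' argument.} A single Cauchy--Schwarz step relating $\rho_{Q^{+-}}$ to the diagonal densities is not enough: the dangerous contributions sit in a spectral window of width $\sim\sqrt{\mu}/L$ around the Fermi level, where the kinetic weight $|p^2-\mu|^{-1/2}$ is unbounded. The paper therefore introduces an additional decomposition $\Pi^\pm=\Pi_0^\pm+\Pi_1^\pm$ (near/far from the Fermi shell), splits $Q^{+-}$ into four pieces $Q^{+-}_{00},Q^{+-}_{01},Q^{+-}_{10},Q^{+-}_{11}$, and proves a pointwise $L^\infty$ bound $|\rho^{+-}_{00}|\lesssim\mu/L$ (again contributing to the $\eta L^{-1}\mu$ shift) and three separate Hilbert--Schmidt estimates giving $\mu^{-1/2}\int|\rho^{+-}_{ab}|^2\lesssim\tr(-\Delta_L-\mu)Q$, the last of which requires an explicit bound $\sup_k\Phi(k)\lesssim\mu^{1/2}$ on a lattice Riesz-type sum. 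These near-Fermi-surface estimates, together with Lemma~\ref{lem:Q2Q} ($\tr|\Delta_L+\mu|Q^2\le\tr(-\Delta_L-\mu)Q$), are essential and are not captured by your sketch. Also note that $\rho_0$ never enters the statement of the theorem, so the Weyl comparison $|\rho_0(x)-\bar\rho_0|\lesssim L^{-1}\mu$ you propose is a red herring: what actually matters is that $|\phi_p(x)|^2$ is not constant but only bounded by $(2/L)^3$, and this enters directly in the Hilbert--Schmidt estimates rather than through the background density.
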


\begin{remark*}
In \cite{posDensitiy} a similar result was proven for the Laplacian with periodic boundary conditions and we mostly follow that proof.
\end{remark*}

\begin{remark*}
The crucial properties of the function $S$ are its positivity and the fact that $S(\rho)$ behaves like $\mu^{-1/2} \rho^2$ for small $\rho$ and like $\rho^{5/3}$ for large $\rho$. For technical reasons  it will also be convenient that $S$ is convex.
\end{remark*}

Essential for the proof will be to separate a given $Q$ into $Q = (\Pi^++\Pi^-)Q(\Pi^++\Pi^-) \eqqcolon Q^{++}+Q^{+-}+Q^{-+}+Q^{--}$. The densities associated to $Q^{\pm\pm}$ will be denoted by $\rho^{\pm\pm}$.
Before we proceed with the proof of the theorem we show the following  Lemma.

\begin{lemma}
\label{lem:Q2Q}
Assume $\Pi^- \le Q \le 1- \Pi^-$. Then
\begin{equation}
\tr \lk ( |- \Lap_\lLT - \mu| Q^2 \rk ) \le \tr(- \Lap_\lLT - \mu) Q \,.
\end{equation}
\end{lemma}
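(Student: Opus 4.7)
The plan is to exploit the operator inequality on $Q$ together with the spectral decomposition $\Pi^+ + \Pi^- = 1$ induced by $-\Delta_L$, so that the trace inequality reduces to a single pointwise (in operator sense) inequality of the form $Q^2 \le Q^{++} - Q^{--}$, where $Q^{\pm\pm} = \Pi^\pm Q \Pi^\pm$.

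First, I would introduce $P \coloneqq Q + \Pi^-$. The assumption (read as $-\Pi^- \le Q \le 1-\Pi^-$, in line with Thm.~\ref{thm:posDensity}) gives
\begin{equation}
0 \le P \le 1,
\end{equation}
so that $P^2 \le P$ as an operator inequality. Expanding $P^2 = Q^2 + Q\Pi^- + \Pi^- Q + \Pi^-$ (using $(\Pi^-)^2 = \Pi^-$) and subtracting $P = Q + \Pi^-$, this rearranges to
\begin{equation}
Q^2 \le Q - Q\Pi^- - \Pi^- Q = Q\Pi^+ - \Pi^- Q.
\end{equation}
Writing $Q = \Pi^+ Q \Pi^+ + \Pi^+ Q \Pi^- + \Pi^- Q \Pi^+ + \Pi^- Q \Pi^-$ on both sides, the off-diagonal contributions cancel and one obtains
\begin{equation}
Q^2 \le Q^{++} - Q^{--}.
\end{equation}

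Next I would sandwich this inequality by $|-\Delta_L - \mu|^{1/2}$ on both sides (which preserves the order since it is a bounded positive operator on the support of $Q$), take the trace, and use the cyclicity of the trace together with the fact that $\Pi^\pm$ commutes with $|-\Delta_L - \mu|$. On the range of $\Pi^+$ we have $|-\Delta_L - \mu| = -\Delta_L - \mu$, and on the range of $\Pi^-$ we have $|-\Delta_L - \mu| = -(-\Delta_L-\mu)$. Hence
\begin{equation}
\tr |{-\Delta_L - \mu}|\, Q^{++} = \tr (-\Delta_L - \mu)\Pi^+ Q, \qquad
\tr |{-\Delta_L - \mu}|\, Q^{--} = -\tr (-\Delta_L - \mu)\Pi^- Q,
\end{equation}
and summing these two gives $\tr |{-\Delta_L - \mu}|(Q^{++}-Q^{--}) = \tr(-\Delta_L - \mu)Q$, which yields the claim.

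The only conceptual step is the identity $P^2 \le P$; everything else is bookkeeping with the spectral projectors. There is no real obstacle, since both sides of the final inequality involve only the diagonal blocks of $Q$ with respect to $\Pi^\pm$, and the off-diagonal blocks $Q^{+-}, Q^{-+}$ are automatically absorbed on the left-hand side by the operator inequality above. (Note that finite rank of $Q$ ensures all traces are well defined.)
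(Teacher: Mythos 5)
Your proposal is correct and matches the paper's proof essentially line by line: both rely on the observation that $0\le Q+\Pi^-\le 1$ implies $(Q+\Pi^-)^2\le Q+\Pi^-$, which after expansion and projection onto the $\Pi^\pm$ blocks yields $Q^2\le Q^{++}-Q^{--}$, and then both conclude by taking the trace against $|-\Delta_L-\mu|$ and using that this operator equals $\pm(-\Delta_L-\mu)$ on the ranges of $\Pi^\pm$. You also correctly noticed that the stated hypothesis should read $-\Pi^-\le Q\le 1-\Pi^-$ (as in Theorem~\ref{thm:posDensity}), which is indeed what the paper's proof uses.
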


\begin{proof}
We claim that $Q^2 \le Q^{++} - Q^{--}$, which follows from the condition on $Q$. In fact,
\begin{equation}
-\Pi^- \le Q \le 1- \Pi^- \folgt 0 \le Q+ \Pi^- \le 1 \folgt (Q+\Pi^-)^2 \le Q + \Pi^- \,. \label{eq:Q2Split} 
\end{equation}
Expanding the last inequality proves the claim. Hence 
\begin{align}\nonumber
\tr(|\Lap_\lLT + \mu| Q^2) & \le \tr(|\Lap_\lLT + \mu| Q^{++}) - \tr(|\Lap_\lLT + \mu| Q^{--}) \\ & = \tr((-\Lap_\lLT-\mu)Q^{++} ) + \tr((-\Lap_\lLT-\mu) Q^{--}) = \tr((-\Lap_\lLT-\mu)Q) \,.
\end{align}
\end{proof}

\begin{proof}[Proof of Theorem \ref{thm:posDensity}]
We shall treat $Q^{\pm\pm}$ separately and combine the various terms at the end using the convexity of $S$.
\begin{proofpart}$Q^{++}$, $Q^{--}$\end{proofpart}
We shall follow the method introduced by Rumin in \cite{Rumin2011}. 
With the aid of the spectral projections $P_e \coloneqq \unit(|\Lap_\lLT +\mu| \ge e)$ we have the layer cake representation
\begin{equation}
|\Lap_\lLT + \mu| = \int_0^\infty P_e \dI e \,.
\end{equation}
Let us assume that $\gamma$ is a smooth enough finite rank operator with $0 \le \gamma \le 1$. Then
\begin{equation}
\tr |\Lap_\lLT + \mu| \gamma = \int_0^\infty \dI e \, \tr(P_e \gamma P_e) = \int_0^\infty \dI e \int_{\BLT} \rho_e(x) \dI x 
\end{equation}
where $\rho_e$ denotes the density of the finite rank operator $P_e \gamma P_e$. For  a bounded measurable set $A$  we estimate
\begin{align}\nonumber
\int_A \rho_e(x) \dI x & = \tr( \unit_A P_e \gamma P_e) = \norm{\unit_A P_e \gamma^{1/2}}_{\mathfrak S_2}^2 \\ \nonumber &
\ge  \lk (\norm{\unit_A \gamma^{1/2}}_{\mathfrak S_2} - \norm{\unit_A P_e^\perp \gamma^{1/2}}_{\mathfrak S_2} \rk )^2_+ \\ &
= \lk ( \lk( \int_A \rho_\gamma \rk)^{1/2} - \norm{ \unit_A P_e^\perp \gamma^{1/2}}_{\mathfrak S_2}   \rk)^2_+ 
\end{align}
where $\rho_\gamma$ denotes the density of  $\gamma$ and we used the triangle inequality for the  Hilbert-Schmidt norm $\norm{\,\cdot\,}_{\mathfrak S_2}$. 
Because $\norm{\gamma} \le 1$ we further get
\begin{equation}
\norm{\unit_A P_e^\perp \gamma^{1/2}}^2_{\mathfrak S_2} = \tr(\unit_A P_e^\perp \gamma P_e^\perp \unit_A) \le \norm{\unit_A P_e^\perp}^2_{\mathfrak S_2} \norm{\gamma} \le |A| f(e) 
\end{equation}
with
\begin{align}\nonumber
f(e) &\coloneqq  \lk( \frac 2 \lLT \rk)^3 \sum_{\substack{p \in \pi \mathbb N^3/\lLT\\ |p^2 - \mu| < e}} 1 = \lk( \frac 2 \lLT \rk)^3 \sum_{\substack{n \in \mathbb N^3/2 \\ |\frac {4 \pi^2}{\lLT^2} n^2 - \mu| < e}} 1  \\
&= \lk( \frac 2 \lLT \rk)^3  \lk [ \lk | \mathbb N^3/2 \cap B  \lk (\frac {\lLT}{2 \pi} (\mu + e)^{1/2} \rk ) \rk | - \lk |\mathbb N^3/2 \cap \bar B\lk( \frac {\lLT}{2 \pi} (\mu - e)^{1/2}_+ \rk) \rk| \rk ]  
\end{align}
where $B(R)$ denotes the centered open ball with radius $R$ and $\bar B(R)$ its closure. Here we used 
\begin{equation}
\norm{\unit_A P_e^\perp}_{\mathfrak S_2}^2 = \sum_{\substack{p \in \pi \mathbb N^3/\lLT\\ |p^2 - \mu| < e}} \int_A |\phi_p(x)|^2 \dI x \le |A|  \sum_{\substack{p \in \pi \mathbb N^3/\lLT\\ |p^2 - \mu| < e}} \sup_{x \in A} |\phi_p(x)|^2  \le |A| f(e) 
\end{equation}
where we bounded the eigenfunction $\phi_p$ of $-\Delta_L$  to the eigenvalue $p^2$ by $|\phi_p(x)| \le (2/\lLT)^{3/2}$. Taking $A = B(R)+ x$ with $R \to 0$ we obtain the pointwise bound
\begin{equation}
\rho_e(x) \ge (\sqrt{\rho_\gamma(x)} - \sqrt{f(e)})_+^2 \,.
\end{equation}
Hence we get
\begin{equation}\label{a16}
\tr |\Lap_\lLT + \mu| \gamma \ge \int_{\BLT} \dI x \int_0^\infty \dI e ( \sqrt{\rho_\gamma(x)} - \sqrt{f(e)})_+^2 = \int_{\BLT} R(\rho_\gamma(x)) \dI x 
\end{equation}
with
\begin{equation}
R(\rho) \coloneqq \int_0^\infty  \lk(\sqrt{\rho} - \sqrt{f(e)}\rk)^2_+ \dI e \,.
\end{equation}
To obtain the desired result we have to analyze $R(\rho)$ in more detail. In the following we will use $C$ to denote a generic constant, whose value can change throughout  the computation. Obviously 
\begin{equation}
 \left|  \lk | \mathbb N^3/2 \cap B(R) \rk|  - \frac {4 \pi} 3 R^{3} \right| \lesssim \max(1, R^2) \label{eq:PointsSphere} 
\end{equation}
and the same statement holds if one takes the closure $\bar B(R)$ instead of $B(R)$.
For $0 < x < 1$ and $M>0$,
\eqref{eq:PointsSphere} allows us to bound
\begin{align}\nonumber 
& |\mathbb N^3/2 \cap B(M(1+x)^{1/2})| - |\mathbb N^3/2 \cap \bar B(M(1-x)^{1/2})| \\ \nonumber
& \le \frac {4 \pi M^3}{3} \lk( (1+x)^{3/2} - (1-x)_+^{3/2} \rk) + C \max(1,M^{2}) \\
& \lesssim M^3 x + \max(1,M^{2}) \,,  \label{eq:diffCounting} 
\end{align}
where we used $(1+x)^{3/2} - (1-x)_+^{3/2} \lesssim x$.
Applying \eqref{eq:diffCounting} to $f(e)$ for $e/\mu < 1$ we get
\begin{equation}
f(e) 
\lesssim  \mu^{1/2} e + \frac \mu \lLT 
\end{equation}
using that $\mu \gtrsim \lLT^{-2}$ by assumption. For $e \ge \mu$ we get
\begin{equation}
f(e) = \lk (\frac 2  \lLT \rk )^3 \lk | \mathbb N^3/2 \cap B  \lk (\frac {\lLT}{2 \pi} (\mu + e)^{1/2} \rk ) \rk | \le \frac C {\lLT^3} \lk ( \lLT^3 (\mu +e)^{3/2} \rk) \le C  e^{3/2}  \,.
\end{equation}
Combining both statements we have thus shown that 
\begin{equation}
f(e) \le C \lk ( \frac {\mu}{\lLT} + \mu^{1/2} e \unit(e \le \mu) + e^{3/2} \unit(e > \mu) \rk ) = u + g(e) 
\end{equation}
with
\begin{equation}
g(e) \coloneqq C e \max(\mu^{1/2},e^{1/2}) \ , \quad 
u \coloneqq C \frac {\mu}{\lLT} \,.
\end{equation}
Using the explicit form of $g$, one readily checks that 
\begin{equation}
R(\rho) = \int_0^\infty \lk ( \sqrt{\rho} - \sqrt{ f(e)} \rk )_+^2 \dI e \ge \int_0^\infty \lk ( \sqrt{\rho}  - \sqrt{u} - \sqrt{ g(e)} \rk )_+^2 \dI e \gtrsim S((\rho - 2 u)_+) \,, 
\end{equation}
where we have also used that $(\sqrt \rho - \sqrt u)^2_+ \ge \frac 12 (\rho - 2 u)_+$. 
In combination with \eqref{a16}, this shows that 
\begin{equation}
\tr|-\Lap_\lLT-\mu|\gamma \gtrsim \int_{\BLT} S((\rho_\gamma(x) - C \lLT^{-1} \mu)_+) \dI x \,.
\end{equation}
We apply this for $\gamma = Q^{++}$ and $\gamma = -Q^{--}$ and  obtain
\begin{equation}
\tr (- \Lap_\lLT - \mu) Q^{\pm\pm} \gtrsim \int_{\BLT} S \lk ((|\rho^{\pm\pm}(x) |-C \lLT^{-1} \mu)_+ \rk) \dI x \,.  \label{eq:Part++--} 
\end{equation}

\begin{proofpart}{$Q^{+-}$, $Q^{-+}$}
\end{proofpart}
In the next step we want to prove bounds for $Q^{+-}$ and $Q^{-+}$. We introduce
\begin{align}\nonumber
\Pi_0^+ &= \unit(\mu < - \Lap_\lLT < \mu + \sqrt{\mu}/ \lLT) & \Pi_0^- & = \unit(\mu - \sqrt{\mu}/ \lLT \le - \Lap_\lLT \le \mu) \\
\Pi_1^+ &= \unit(\mu + \sqrt{\mu}/ \lLT \le - \Lap_\lLT) & \Pi_1^- & = \unit(- \Lap_\lLT < \mu  - \sqrt{\mu}/ \lLT)  
\end{align}
and split $Q^{+-} = (\Pi_0^+ + \Pi_1^+) Q (\Pi_0^- + \Pi_1^-) = Q^{+-}_{00} + Q^{+-}_{10} + Q^{+-}_{01} + Q^{+-}_{11}$. 
The following three parts of the proof will treat these terms. We start with $Q_{00}^\pm$. 
\begin{proofpart}{$Q^{+-}_{00}$}
\end{proofpart}
The density of $Q^{+-}_{00}$ is equal to
\begin{equation}
\rho^{+-}_{00}(x) = 
\sum_{\substack{k \in (\pi \mathbb N/\lLT)^3 \\ \mu  < k^2  < \mu + \sqrt \mu/\lLT}} \ \sum_{\substack{j \in (\pi \mathbb N/\lLT)^3 \\ \mu - \sqrt{\mu}/\lLT \le j^2 \le \mu}}
  \braket{\phi_k}{Q \phi_j} \phi_k(x) \phi_j(x) \,.
\end{equation}
Using $\norm Q \le 1$, we can bound this as 
\begin{align}\nonumber
|\rho^{+-}_{00}(x)| 
&\le \Big ( \sum_{ \mathclap{\substack{k \in (\pi \mathbb N/\lLT)^3 \\ \mu  < k^2  < \mu + \sqrt \mu/\lLT}}}|\phi_k(x)|^2 \Big )^{1/2} \Big ( \sum_{\mathclap{\substack{j \in (\pi \mathbb N/\lLT)^3 \\ \mu - \sqrt{\mu}/\lLT \le j^2 \le \mu}}} |\phi_j(x)|^2 \Big )^{1/2} \\ 
&\le \lk(\frac {2}{\lLT} \rk)^3 \sqrt{|\{\mu \le k^2 \le \mu + \sqrt{\mu}/\lLT \} |} \sqrt{|\{\mu - \sqrt{\mu}/\lLT \le j^2 \le \mu  \} |}   \le C \frac {\mu}{\lLT} \label{eq:Part+-01}
\end{align}
where we applied  \eqref{eq:diffCounting} in the last step.

\begin{proofpart}{$Q^{+-}_{10}, Q^{+-}_{01}$}
\end{proofpart}
Next we will bound $\rho_{10}^{+-}$.
 For a general function $W$ (viewed as a multiplication operator), 
 we have
\begin{align}\nonumber
|\tr(W Q_{10}^{+-})| & = \lk |\tr \lk ( \Pi_0^- W \frac {\Pi_1^+}{|- \Lap_\lLT - \mu|^{1/2}} |- \Lap_\lLT - \mu|^{1/2} Q \rk) \rk | \\  
& \le \sqrt{\tr|- \Lap_\lLT - \mu| Q^2}\norm {\Pi_0^- W \frac {\Pi_1^+}{|-\Lap_\lLT - \mu|^{1/2}}}_{\mathfrak S^2} \,. \label{eq:FT1} 
\end{align}
To bound the first factor, we can used Lemma~\ref{lem:Q2Q}.
For the second term we need to use the specific form of the eigenfunctions for the Dirichlet Laplacian. 
Using \eqref{eq:DirichletState} we get
\begin{equation}
|\braket{\phi_p}{W \phi_q}|^2 =  \lk ( \frac {1} {2 \lLT} \rk )^{6} \lk |\sum_{A,B \in \{1,-1\}^3} \hat W( (A_j p_j)_j- (B_j q_j)_j) \rk |^2 \lesssim \lLT^{-6}  \sum_{A,B \in \{1,-1\}^3} |\hat W( (A_j p_j)_j- (B_j q_j)_j)|^2 \label{eq:boundEigenvalueDirichlet} 
\end{equation}
where $(A_j p_j)_j$ and $(B_j q_j)_j$ denote the vectors obtained by component-wise multiplication.
Hence 
\begin{align}\nonumber 
\norm{ \Pi_0^- W \frac {\Pi_1^+}{|- \Lap_\lLT - \mu|^{1/2}} }_{\mathfrak S^2}^2 &= 
\sum_{\substack{p,q \in (\pi \mathbb N/\lLT)^3\\ \mu - \sqrt{\mu}/\lLT \le p^2 \le \mu \\ q^2 > \mu + \sqrt{\mu}/\lLT}} \frac {|\braket{\phi_p}{W \phi_q}|^2}{q^2 - \mu} \le 
\frac {\lLT}{\sqrt \mu} \sum_{\substack{p,q \in (\pi \mathbb N/\lLT)^3\\  \mu - \sqrt{\mu}/\lLT \le p^2 \le \mu \\ q^2 > \mu + \sqrt{\mu}/\lLT}} |\braket{\phi_p}{W \phi_q}|^2 \\ \nonumber
&\lesssim \frac 1 {\lLT^{6}} \frac {\lLT}{\sqrt \mu} \sum_{\substack{p,q \in (\pi (\mathbb Z \setminus \{0\})/\lLT)^3\\\mu - \sqrt{\mu}/\lLT \le p^2 \le \mu \\ q^2 > \mu + \sqrt{\mu}/\lLT}} |\hat W(p-q)|^2 \\
&\lesssim \frac 1 {\lLT^{6}} \frac \lLT {\sqrt{\mu}} \sum_{q\in (\pi (\mathbb Z \setminus \{0\})/\lLT)^3} |\hat W(q)|^2 \sum_{\mu - \sqrt{\mu}/\lLT \le p^2 \le \mu} 1 
\lesssim \sqrt \mu \norm{W}^2_2\,.
\end{align}
The sum of \eqref{eq:boundEigenvalueDirichlet} is included in the second line of the previous calculation by extending the sum over $p,q \in \mathbb N^3$ to $p,q \in (\mathbb Z \setminus \{0\})^3$, and we have again used  \eqref{eq:diffCounting} in the last step. 

Choosing for $W = (\rho_{10}^{+-})^*$ we thus get from \eqref{eq:FT1} 
\begin{equation}
\int_{\BLT} |\rho_{10}^{+-}|^2 \le C \mu^{1/2} \tr (- \Lap_\lLT  - \mu) Q  \,. \label{eq:Part+-1011} 
\end{equation}
In a similar way we can treat $\rho_{01}^{+-}$ with the result that also
\begin{equation}\label{a34}
\int_{\BLT} |\rho_{01}^{+-}|^2 \le C \mu^{1/2} \tr (- \Lap_\lLT  - \mu) Q  \,.
\end{equation}

\begin{proofpart}{$Q^{+-}_{11}$}
\end{proofpart}
Similarly to above we again introduce a multiplication operator $W$, and estimate
\begin{equation}
\left| \tr (W \Pi^{+}_1 Q \Pi^-_1) \right| 
\le \norm{ \frac {\Pi^+_1}{|\Lap_\lLT + \mu|^{1/4}} W \frac {\Pi^-_1}{|\Lap_\lLT + \mu|^{1/4}} }_{\mathfrak{S}^2} \norm{|\Lap_\lLT + \mu|^{1/4} Q |\Lap_\lLT + \mu|^{1/4}}_{\mathfrak{S}^2}  \,. 
\end{equation}
The second factor we bound by
\begin{equation}
\norm{|\Lap_\lLT + \mu|^{1/4} Q |\Lap_\lLT + \mu|^{1/4}}_{\mathfrak{S}^2} \le \norm{|\Lap_\lLT + \mu|^{1/2} Q }_{\mathfrak{S}^2} 
= \tr( |\Lap_\lLT + \mu| Q^2)^{1/2} \label{eq:Q10} 
\end{equation}
and Lemma \ref{lem:Q2Q}.
For the first one, we have
\begin{align}\nonumber
 & \norm{ \frac {\Pi^+_1}{|\Lap_\lLT + \mu|^{1/4}} W \frac {\Pi^-_1}{|\Lap_\lLT + \mu|^{1/4}} }_{\mathfrak{S}^2}^2 = \sum_{\substack{p,q \in (\pi \mathbb N^3/\lLT) \\p^2 > \mu + \sqrt{\mu}/\lLT\\q^2 < \mu - \sqrt{\mu}/\lLT}} \frac {|\braket{ \phi_p}{W \phi_q}|^2}{ (\mu - q^2)^{1/2} (p^2 - \mu)^{1/2}} \\ 
&  \le \frac C {\lLT^6} \sum_{\substack{p,q \in (\pi \mathbb Z^3/\lLT) \\p^2 > \mu + \sqrt{\mu}/\lLT\\q^2 < \mu - \sqrt{\mu}/\lLT}} \frac {|\hat W (q-p)|^2}{ (\mu - q^2)^{1/2} (p^2 - \mu)^{1/2}} = \frac {C}{\lLT^3} \sum_{k \in (\pi \mathbb Z^3/\lLT)} \Phi(k) |\hat W(k)|^2 \le C \sup_k \Phi(k) \norm{W}^2_2 \label{eq:+-01} 
\end{align}
with
\begin{equation}
\Phi(k) = \frac 1 {\lLT^3} \sum_{\substack{q \in (\pi \mathbb Z^3/\lLT) \\ (q-k)^2 > \mu + \sqrt{\mu}/\lLT \\ q^2 < \mu - \sqrt{\mu}/\lLT}} \frac 1 {(\mu - q^2)^{1/2}((q-k)^2-\mu)^{1/2}} \,.
\end{equation}
In \cite[Proof of Thm.~5.1]{posDensitiy} it was shown that 
$\sup_k \Phi(k) \lesssim \mu^{1/2}$ for $\mu\gtrsim L^{-2}$. 
Hence the choice $W=(\rho_{11}^{+-})^*$ yields 
\begin{equation}
\int_{\BLT} |\rho_{11}^{+-}|^2 \lesssim \mu^{1/2} \tr((-\Lap_\lLT -\mu) Q) \,. \label{eq:Part+-10} 
\end{equation}

\begin{proofpart}{Combining the above estimates}
\end{proofpart}
By combining \eqref{a34} and \eqref{eq:Part+-10} we obtain
\begin{equation}
\mu^{-1/2} \int_{\BLT} |\rho^{+-} - \rho_{00}^{+-}|^2 \le C \tr (- \Lap_\lLT - \mu) Q \,.
\end{equation}
Using that 
 $|\rho_{00}^{+-}| \le C \mu/\lLT$, as shown in \eqref{eq:Part+-01}, this further implies that 
\begin{equation}
\mu^{-1/2} \int_{\BLT} \lk(|\rho^{+-}| - C \mu/\lLT \rk)_+^2 \le C \tr (- \Lap_\lLT - \mu) Q \,.
\end{equation}
The integrand in the left side is bounded from below by $C S  ((|\rho^{+-}| - C \mu/\lLT)_+ )$, hence 
\begin{equation}
\int_{\BLT}  S \lk ((|\rho^{+-}| - C \mu/\lLT)_+ \rk ) \le C \tr (- \Lap_\lLT - \mu) Q \,. \label{eq:Part+-} 
\end{equation}
Since $|\rho^{+-}| = |\rho^{-+}|$, the same bound holds  for $\rho^{-+}$ as well.  
Combining \eqref{eq:Part++--} and \eqref{eq:Part+-} and using the convexity of $S$ we get
\begin{align}\nonumber
 \tr(-\Lap_\lLT - \mu) Q &\gtrsim  \int_{\BLT} S\lk (  \frac {\lk(|\rho^{++}|+ |\rho^{--}| + |\rho^{+-}| + |\rho^{-+}|-C \mu/\lLT\rk)_+} 4 \rk) \\
&\ge \int_{\BLT} S\lk (  \frac {\lk(|\rho_Q|-C \mu/\lLT\rk)_+} 4 \rk) \gtrsim  \int_{\BLT} S( (|\rho_{Q}|-C \mu/\lLT)_+)  \,.
\end{align}
This completes the proof of Theorem~\ref{thm:posDensity}.
\end{proof}

By taking a Legendre transform, the result above implies that following potential version of the Lieb-Thirring inequality.
\begin{theorem}
\label{thm:lti}
Assume that   $V$ is a real-valued function in $L^{5/2}([-\lLT/2,-\lLT/2]^3)$, and $\mu \ge 3 \pi^2 \lLT^{-2}$. Then we have
\begin{align}\nonumber 
0 &\ge - \tr (- \Lap_\lLT + V - \mu)_- + \tr(- \Lap_\lLT - \mu)_- -  \int_{C_L} \rho_0 V \\
&\ge - K \int_{C_L} \Big( \mu^{1/2} |V|^2 + |V|^{5/2} + \lLT^{-1} \mu |V|\Big)\label{eq:lti} 
\end{align}
with $K > 0$ independent of $\lLT, \mu$ and $V$. 
\end{theorem}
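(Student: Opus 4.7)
The plan is to derive Theorem~\ref{thm:lti} from Theorem~\ref{thm:posDensity} via a Legendre transform in the density variable, using the standard variational characterization of the sum of negative eigenvalues.

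First, I would express both sides of the first inequality in \eqref{eq:lti} using the Rayleigh--Ritz formula $-\tr A_- = \inf_{0\le \gamma\le 1} \tr(A\gamma)$, whose minimizer is $\gamma = \unit(A\le 0)$. Applied to $A = -\Lap_\lLT + V - \mu$ and parametrized as $\gamma = \Pi^- + Q$ with $-\Pi^- \le Q \le 1-\Pi^-$, this gives
\begin{equation*}
-\tr(-\Lap_\lLT+V-\mu)_- + \tr(-\Lap_\lLT-\mu)_- - \int_{\BLT}\rho_0 V = \inf_{-\Pi^-\le Q\le 1-\Pi^-} \left[ \tr((-\Lap_\lLT-\mu)Q) + \int_{\BLT} \rho_Q V\right].
\end{equation*}
The upper bound by zero (the first inequality in \eqref{eq:lti}) is then immediate from the trial choice $Q=0$.

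Second, for the lower bound I would apply Theorem~\ref{thm:posDensity} to each admissible $Q$ to obtain
\begin{equation*}
\tr((-\Lap_\lLT-\mu)Q) + \int_{\BLT} \rho_Q V \ge \int_{\BLT}\left[ \tilde K\, S\!\left((|\rho_Q(x)|-\eta \lLT^{-1}\mu)_+\right) + \rho_Q(x)\,V(x)\right]\dI x,
\end{equation*}
and then drop the constraint linking $\rho_Q$ to the operator $Q$ to bound this below by the pointwise Legendre transform. That is, for each $x$ I need a lower bound on
\begin{equation*}
\inf_{r\in \R}\bigl[\tilde K\, S((|r|-c)_+) + r V(x)\bigr], \qquad c\coloneqq \eta \lLT^{-1}\mu.
\end{equation*}

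Third, I would evaluate this pointwise infimum. Choosing $r$ with opposite sign to $V$, so $r=-\operatorname{sgn}(V) s$ with $s\ge 0$, the objective is $\tilde K\, S((s-c)_+) - s|V|$. For $s\le c$ it is monotone with minimum $-c|V|$. For $s>c$, differentiating $S(\rho)=(\mu^{3/2}+\rho)^{5/3}-\mu^{5/2}-\tfrac{5}{3}\mu\rho$ gives $S'(\rho)=\tfrac{5}{3}\bigl[(\mu^{3/2}+\rho)^{2/3}-\mu\bigr]$, so the critical point is at $s-c = (\mu+a)^{3/2}-\mu^{3/2}$ with $a = 3|V|/(5\tilde K)$. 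A direct substitution simplifies the minimum to
\begin{equation*}
-\tfrac{2}{3}\tilde K\bigl[(\mu+a)^{5/2}-\mu^{5/2}\bigr] + \tfrac{5}{3}\tilde K \mu^{3/2} a - c|V|,
\end{equation*}
which by the elementary estimate $(\mu+a)^{5/2}\le \mu^{5/2}+\tfrac{5}{2}\mu^{3/2}a + C(\mu^{1/2}a^2+a^{5/2})$ is bounded below by $-C(\mu^{1/2}|V|^2 + |V|^{5/2}) - c|V|$. Integrating over $\BLT$ and absorbing constants yields the claimed bound.

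The main obstacle is the Legendre transform step: $S$ behaves quadratically in $\rho$ near the origin but like $\rho^{5/3}$ for large $\rho$, so one must handle the regimes $|V|\lesssim \mu$ and $|V|\gtrsim \mu$ consistently and keep careful track of the shift by $c=\eta L^{-1}\mu$, which is precisely what produces the third term $L^{-1}\mu|V|$ on the right-hand side of \eqref{eq:lti}. Everything else is a matter of unfolding the variational principle and using convexity, and in particular requires no further spectral input beyond Theorem~\ref{thm:posDensity}.
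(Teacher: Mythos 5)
Your proposal is correct and follows essentially the same route as the paper: rewrite the difference of the eigenvalue sums as $\inf_{-\Pi^-\le Q\le 1-\Pi^-}\bigl[\tr((-\Lap_\lLT-\mu)Q)+\int_{\BLT}\rho_Q V\bigr]$, apply Theorem~\ref{thm:posDensity} to the kinetic part, relax the constraint on $\rho_Q$, and compute the resulting pointwise Legendre transform of $S((\cdot-\eta\lLT^{-1}\mu)_+)$ by calculus plus an elementary estimate on $(\mu+a)^{5/2}$. The only step you pass over silently is the extension of the finite-rank variational identity to the unbounded-operator setting, which the paper handles by citing \cite[Thm 4.1]{posDensitiy}; otherwise your Legendre-transform computation reproduces Eq.~\eqref{eq:gammapos} exactly.
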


\begin{remark*}
In case that $\mu < 3 \pi^2 \lLT^{-2}$ we have $- \Lap_\lLT - \mu > 0$, and therefore  $\tr(- \Lap_\lLT - \mu)_- = 0$ and also  $\rho_0=0$. One can thus obtain a lower bound using the standard Lieb-Thirring inequality \cite{LT} applied to a potential $V - \mu$ in this case. 
\end{remark*}

\begin{proof}
We start with the identity 
\begin{equation}
- \tr(A + B)_- = \inf_{0\le \gamma \le 1} \tr(A + B) \gamma \label{eq:appProof1} 
\end{equation}
for hermitian matrices $A$ and $B$,  where an optimizer is clearly $\unit (A + B \le 0)$.  With $P^- = \unit(A \le 0)$ and $Q = \gamma  -P^-$, \eqref{eq:appProof1} reads
\begin{equation}
- \tr(A+B)_- = \inf_{-P^- \le Q \le 1 - P^-} \tr(A + B) Q + \tr(A + B) P^- \,.
\end{equation}
Defining $P_{B}^- = \unit(A+B \le 0)$ we equivalently get
\begin{equation}
\tr(A+B)(P^-_B - P^-) =  \inf_{-P^- \le Q \le 1 - P^-} \tr(A + B) Q \,.  \label{eq:inMatrix} 
\end{equation}
This equality can be extended to allow $A = - \Lap - \mu$ and $B = V$ (see \cite[Thm 4.1]{posDensitiy}). Using this and applying Theorem \ref{thm:posDensity} we get
\begin{align}\nonumber
\tr(-\Lap_\lLT - \mu)_- -\tr (-\Lap_\lLT + V - \mu)_- - \int_{\BLT} \rho_0 V &\ge \inf_{\rho} \lk ( \tilde K \int_{\BLT} S((|\rho| - \eta \lLT^{-1} \mu)_+) + \int_{\BLT} V \rho \rk ) \\
 &\ge \inf_{\rho\geq 0} \lk ( \tilde K \int_{\BLT} S((\rho - \eta \lLT^{-1} \mu)_+) - \int_{\BLT} |V| \rho \rk ) \label{eq:lb1} 
\end{align}
where the infimum in the first line is over functions $\rho: \R^3 \to \R$, while in the second we can restrict to non-negative functions $\rho$. We can pull the infimum inside the integral for a lower bound. Clearly we can assume that $\rho \geq \eta \lLT^{-1} \mu$. Introducing $\gamma = \rho - \eta \lLT^{-1} \mu$ we have
\begin{equation}
\inf_{\gamma \geq 0} \lk ( \tilde K S(\gamma) - |V| \gamma - \eta \lLT^{-1} \mu |V| \rk ) = \tilde K \lk ( \frac 2 3 \mu ^{5/2}+ \tilde K^{-1} |V| \mu^{3/2}- \frac 2 3 \left(\mu + \tilde K^{-1}  \frac{3 |V|}{5}\right)^{5/2}  \rk ) - \eta \lLT^{-1} \mu   |V| \,. \label{eq:gammapos}  
\end{equation}
Using that 
\begin{equation}
x^{5/2} + \frac 5 2 x^{3/2} y - (x+y)^{5/2}  \ge - \frac{15 \sqrt{x} y^2}{8} - y^{5/2} 
\end{equation}
for $x = \mu$ and $y = 3 \tilde K^{-1} |V|/5$
 gives the bound 
\begin{equation}
\eqref{eq:gammapos} \gtrsim - \mu^{1/2} |V|^2 - |V|^{5/2} - \lLT^{-1} \mu |V|   \,.
\end{equation}
Plugging this into \eqref{eq:lb1} proves the Theorem.
\end{proof}

We apply the above theorem for a potential $V \in L^{5/2}(\BLT)$ with $V \leq 0$, choosing $\mu$ as $e_N$, the $N$th eigenvalue of the Dirichlet Laplacian $- \Lap_\lLT$. In particular, $\mu \ge e_1 = 3 \pi^2 \lLT^{-2}$ which allows us to use Theorem \ref{thm:lti}. The ground state energy $\EGD$ for  $N$ non-interacting particles confined to $C_L$ was defined in the beginning of Section~\ref{sec:Results} and can be written as $\EGD = \sum_{i=1}^N e_i$.

We denote by $e_k^V$ the $k$th eigenvalue of $-\Lap_\lLT + V$, and by $\EGDV$ the sum of the lowest $N$ eigenvalues of $-\Lap_\lLT + V$, i.e., $\EGDV = \sum_{i=1}^N e_i^V$. Theorem \ref{thm:lti} implies that 
\begin{equation}
\tr(- \Lap_\lLT-\mu)_- = -\EGD + N e_N \ge \tr(-\Lap_\lLT + V - e_N)_- - R 
\ge  - \EGDV + N e_N - R   
\end{equation}
with
\begin{equation}
R = \const \int_{\BLT} \lk (  \mu^{1/2} |V|^2 + |V|^{5/2} + \lLT^{-1} \mu |V| \rk) - \int_{\BLT} \rho_0 V \,.
\end{equation}
We used that since $V \leq 0$ the operator $-\Lap_\lLT + V  -e_N$ has at least $N$ non-positive eigenvalues,  and therefore we can get a lower bound on the trace of its negative part by summing only the first $N$ of them.

From the above calculation, together with $\rho_0 \lesssim \mu^{3/2}$ and $\mu= e_N \lesssim N^{2/3}/\lLT^2$, we deduce the following corollary.
\begin{corollary}
\label{cor:BoundPotential}
Let $V \in L^{5/2}(\BLT)$ with $V\leq 0$ and let $\EGD$ denote the ground state energy of $N$ non-interacting fermions  confined to $C_L$. With $\EGDV$ we denote the ground state energy of the corresponding Hamiltonian with  external potential $V$. Then 
\begin{equation}\label{eq:cora1}
\EGD - \EGDV \lesssim \int_{\BLT} \lk(  \frac {N^{1/3}} \lLT |V|^2 + |V|^{5/2} + \frac {N}{\lLT^3} |V| \rk)  \,.
\end{equation}
\end{corollary}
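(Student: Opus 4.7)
The plan is to deduce the corollary directly from Theorem~\ref{thm:lti} by choosing $\mu$ to be the $N$-th eigenvalue $e_N$ of the Dirichlet Laplacian $-\Delta_L$ on $\BLT$. Since $e_N \ge e_1 = 3\pi^2 L^{-2}$, the hypothesis $\mu \ge 3\pi^2 L^{-2}$ of Theorem~\ref{thm:lti} is automatic, and we may apply \eqref{eq:lti} with this $\mu$ and the given non-positive potential $V$.

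The first step is to rewrite the two trace terms as sums of eigenvalues. By definition, $\tr(-\Delta_L-e_N)_- = \sum_{i=1}^{N}(e_N-e_i) = N e_N - \EGD$. For the perturbed operator we observe that, since $V\le 0$, the min--max principle gives $e_i^V \le e_i \le e_N$ for all $1 \le i \le N$, so $-\Delta_L+V-e_N$ has at least $N$ non-positive eigenvalues and hence $\tr(-\Delta_L+V-e_N)_- \ge \sum_{i=1}^N (e_N-e_i^V) = N e_N - \EGDV$. Inserting both identities into the lower bound \eqref{eq:lti} cancels the $Ne_N$ terms and yields
\begin{equation}
\EGD - \EGDV \,\lesssim\, \int_{\BLT}\!\Bigl(\mu^{1/2}|V|^2 + |V|^{5/2} + L^{-1}\mu |V|\Bigr) \,-\, \int_{\BLT}\rho_0\, V\,.
\end{equation}

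The remaining step is to absorb the last term and replace $\mu$ by its Weyl-type bound. By the uniform bound $\rho_0(x) \le \frac{2^5}{3\pi^2}\mu^{3/2}$ established just before Theorem~\ref{thm:posDensity}, and using $V\le 0$, the $\rho_0 V$ contribution is controlled by a constant times $\mu^{3/2}\int |V|$. Together with the standard Weyl bound $\mu = e_N \lesssim N^{2/3}L^{-2}$, which follows from counting the eigenvalues $\pi^2 L^{-2}|n|^2$ of the Dirichlet Laplacian on $\BLT$, we obtain $\mu^{1/2} \lesssim N^{1/3}L^{-1}$ and $\mu^{3/2} \lesssim N L^{-3}$, which convert the three integrand factors into exactly the ones appearing in \eqref{eq:cora1}.

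I do not expect a serious obstacle here: all the analytic content is already packaged in Theorem~\ref{thm:lti}, and the only care needed is the identification of the traces of negative parts with the ground state energies (which rests on the elementary fact that $V\le 0$ implies $e_i^V\le e_i$) and the replacement of $\mu$ by its sharp upper bound in terms of $N/L^3$.
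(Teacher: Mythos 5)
Your proof is correct and follows essentially the same route as the paper: set $\mu = e_N$, use that $V\le 0$ implies $e_i^V\le e_i\le e_N$ so the perturbed operator has at least $N$ non-positive eigenvalues, rewrite the two traces of negative parts as $N e_N - \EGD$ and a lower bound $N e_N - \EGDV$, and then absorb the $\rho_0 V$ term via $\rho_0\lesssim\mu^{3/2}$ together with the Weyl bound $\mu=e_N\lesssim N^{2/3}/\lLT^2$.
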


\section*{Acknowledgments}
We would like to thank Ulrich Linden for many helpful  discussions. 
Financial support by the European Research Council (ERC) under the European
Union's Horizon 2020 research and innovation programme (grant agreement No
694227), and by the Austrian Science Fund (FWF), project Nr. P~27533-N27, is
gratefully acknowledged.

\end{document}